\numberwithin{equation}{section}
\newtheorem{thm}{Theorem}[section]
\newtheorem{lem}[thm]{Lemma}
\newtheorem{prop}[thm]{Proposition}
\newtheorem{rem}{Remark}[section]
\newcommand{\eq}[1]{(\ref{#1})}
\renewcommand{\Re}{\operatorname{\rm Re}}
\renewcommand{\Im}{\operatorname{\rm Im}}
\newcommand{\beqast}{\begin{eqnarray*}}
\newcommand{\eqast}{\end{eqnarray*}}
\newcommand{\beqa}{\begin{eqnarray}}
\newcommand{\eqa}{\end{eqnarray}}
\newcommand{\bbe}{\begin{equation}}
\newcommand{\ee}{\end{equation}}
\renewcommand{\Re}{\operatorname{\rm Re}}
\renewcommand{\Im}{\operatorname{\rm Im}}
\newcommand{\bE}{{\mathbb E}}
\newcommand{\bN}{{\mathbb N}}
\newcommand{\bR}{{\mathbb R}}
\newcommand{\bC}{{\mathbb C}}
\newcommand{\bZ}{{\mathbb Z}}
\newcommand{\cC}{{\mathcal C}}
\newcommand{\cL}{{\mathcal L}}
\newcommand{\cV}{{\mathcal V}}
\newcommand{\al}{\alpha}
\newcommand{\be}{\beta}
\newcommand{\De}{\Delta}
\newcommand{\de}{\delta}
\newcommand{\eps}{\epsilon}
\newcommand{\lp}{\lambda_+}
\newcommand{\lm}{\lambda_-}
\newcommand{\La}{\Lambda}
\newcommand{\mum}{\mu_-}
\newcommand{\mup}{\mu_+}
\newcommand{\sg}{\sigma}
\newcommand{\om}{\omega}
\newcommand{\ze}{\zeta}
\newcommand{\ga}{\gamma}
\newcommand{\gap}{\gamma_+}
\newcommand{\gam}{\gamma_-}
\newcommand{\Ga}{\Gamma}
\newcommand{\omp}{\omega_+}
\newcommand{\omm}{\omega_-}
\newcommand{\barDe}{\bar\Delta}
\newcommand{\htphi}{\widehat{\widetilde{\varphi}}}
\newcommand{\dd}{\partial}
\newcommand{\hG}{{\hat G}}
\newcommand{\hV}{{\hat V}}
\newcommand{\bfo}{{\bf 1}}
\newcommand{\norm}[1]{\lVert#1\rVert}
\newcommand{\cspan}[1]{\overline{\text{span}}\{#1\}}
\begin{document}
\begin{titlepage}
\title[SINH-acceleration for B-spline option pricing]{SINH-acceleration for B-spline projection with Option Pricing Applications }
%
%\author[
%Svetlana Boyarchenko and
%Sergei Levendorski\u{i}]
%{
%Svetlana Boyarchenko and
%Sergei Levendorski\u{i}}

\author[
 Boyarchenko et al.]
{
Svetlana Boyarchenko,  Sergei Levendorski\u{i}, J. Lars Kirkby, and Zhenyu Cui}

\thanks{
\emph{S.B.:} Department of Economics, The
University of Texas at Austin, 2225 Speedway Stop C3100, Austin,
TX 78712--0301, {\tt sboyarch@eco.utexas.edu} \\
\emph{S.L.:}
Calico Science Consulting. Austin, TX,
  {\tt
levendorskii@gmail.com}\\
\emph{J.K.:} ISYE, Georgia Institute of Technology, 755 Ferst Dr., Atlanta, GA 30313,
{\tt jkirkby3@gatech.edu} 
\\
\emph{Z.C..:} School of Business, Stevens Institute of Technology, Babbio Dr, Hoboken, NJ 07030, {\tt zcui6@stevens.edu}
}

\maketitle
\thispagestyle{empty}

\begin{abstract}
We clarify the relations among different Fourier-based approaches to option pricing,
and improve the B-spline probability density projection method using the sinh-acceleration  technique.
This allows us to efficiently separate the control of different sources of errors
better than the FFT-based realization allows; in many cases,
the CPU time decreases as well.  We demonstrate the improvement of the B-spline projection method through several numerical experiments in option pricing, including European and barrier options, where the SINH acceleration technique proves to be robust and accurate.
\\
\\
\noindent 
\medskip \noindent \textbf{Keywords}: options pricing, Fourier, sinh-acceleration, barrier option, inversion, B-spline

\medskip \noindent \textbf{AMS subject classifications}: 91G80, 93E11, 93E20
\end{abstract}

\tableofcontents
\end{titlepage}

%%%%%%%%%%%%%%%%%%%%%%%%%%%%%%%%%
\section{Introduction}
%The aim of the paper is to clarify the relations among different Fourier-based approaches to option pricing,
%and improve the B-spline density projection technique using the sinh-acceleration technique. 

\subsection{Fourier transform method in finance}\label{ss:FT in Finance} For two centuries, 
the Fourier transform is one of the most powerful tools in various branches of mathematics, natural sciences and engineering. The Fourier transform is a basis for very deep theoretical results in analysis, e.g., in the study 
of   differential and pseudo-differential operators (pdo)  and boundary problems for pdo's\footnote{See., e.g., \cite{eskin,Hormander,Duistermaat95,DegEllEq}
%,LevPan}  
and the bibliographies theirein.}, 
and in probability, where  it appears in the form of characteristic functions of random variables. In many cases,
application of the Fourier transform technique results in explicit analytical formulas, in the form of integrals, multi-dimensional ones included. The host of efficient numerical methods for evaluation of these integrals has been developed in the numerical analysis literature a long time ago\footnote{See, e,g., \cite{AbWh,AbWh92OR,stenger-book,AbateValko04,stengerreview00} and the bibliographies therein.}.  In Finance, the prices of  derivative securities are  defined as either solutions of boundary problems for differential or integro-differential equations or as expectations of stochastic payoffs and streams of payoffs under a risk-neutral measure. The former definition  is used in seminal papers by Black, Scholes and Merton; being based on hedging arguments, this definition is theoretically sound in the case of diffusion models only.  The definition of the price as the expectation of the stream of payoffs under
an equivalent martingale measure chosen for pricing originated in works of Ross, Harrison-Kreps and Harrison-Pliska later. This definition  can be applied to many jump-diffusion models. Under certain regularity conditions, one can derive the boundary problem for the Kolmogorov backward integro-differential equation (a special case of pseudo-differential equation) for the price, and find the price that solves the  problem. Thus, 
essentially any pricing problem can be regarded as a straightforward exercise in applying  the Fourier transform technique using available mathematical tools - both theoretical and numerical, from analysis and probability. For complicated derivative products, some small new non-trivial twists might be necessary but,  for the most part,
the only difficulty stems from the fact that finance requires very fast calculations, and therefore the numerical methods available in the literature may need some adjustments. 
So, 
it is rather surprising that the Fourier transform methods had to wait for more than two decades to be applied to finance. 

The first application is due to  \cite{Heston93}, who derived the formula for prices of European options in the stochastic volatility model that bears his name since. Although the underlying process in the Heston model is a 2D diffusion process, the prices of European options in the model {\em conditioned on the unobserved volatility factor} can be interpreted as the prices of
European options in a 1D L\'evy model with the characteristic exponent of a rather complicated structure. Hence,
it is fair to say that \cite{Heston93} is the first paper where the Fourier transform method is applied to pricing European options in L\'evy models. Formally, the first applications of the Fourier transform method to pricing European options with L\'evy models are \cite{BL-FT,carr-madan-FFT}, and the real differences among \cite{Heston93,BL-FT,carr-madan-FFT}
are in the realizations of the Fourier transform. The analytical expression derived in \cite{Heston93} is natural from
the viewpoint of probability but very inconvenient for an efficient numerical realization (a representation of the same kind was used in \cite{DPS} to price options in affine jump-diffusion models). The analytical expressions derived in
\cite{BL-FT,carr-madan-FFT} later are natural ones from the point of view of analysis, and, with numerous modifications and extensions, are used in numerous applications to finance and insurance ever since. From the point of view of applications, the
main differences among methods used in various papers are in the quality of error control and speed. 
In each case, for the development of an efficient numerical procedure, it is necessary to derive
sufficiently accurate error bounds and give recommendations for the choice of parameters of the numerical scheme
which allow one to control all sources of errors.  It is well-known that in the case of stable L\'evy processes/distributions
 accurate and fast calculations are extremely difficult, especially if the Blumenthal-Getoor index $\al$ is close to 0 (or to 1,
 in the asymmetric case)
 when essentially all known methods fail 
  (see \cite{ConfAccelerationStable} for examples and the bibliography). Hence, 
if the underlying L\'evy process (or conditional distribution) is close
to a stable one in the sense that the rate of tails of the L\'evy density decay slowly and/or the asymptotics of 
the L\'evy density near 0 are  the same as the one of the stable L\'evy process of index $\al$ close to 0 (or to 1,
in the asymmetric case). A recently developed methodology based on appropriate conformal deformations
of the contours of integration, corresponding changes of variables and application of the simplified trapezoid rule
allows one to develop efficient numerical methods. 

\subsection{Structure of the paper and outline of the main ideas}\label{ss:structure}
In the main body of the paper, we show how to use the conformal deformation technique to improve the performance and robustness
of the B-spline density projection method (PROJ). Then we consider applications of PROJ to barrier options with discrete monitoring.
Since each step of backward induction and calculation of  the coefficients in PROJ method can be naturally interpreted as pricing of a European option, it is natural to consider 
first the pricing of European options and the evaluation of probability distributions in L\'evy models  (Sect. \ref{ss:Euro}-\ref{ss:conf}), and then
applications and modifications of pricing  European options in backward induction procedures (Sect. \ref{ss:barr_back}). 
In Sect. \ref{ss: Bsplines}, we recall the main ingredients of the B-spline density projection method (PROJ method) developed in \cite{Ki14,KiDe14, Ki14B, Ki16A} to price options of several classes, and explain the advantages 
of PROJ as compared to the COS method. The relation of the approximation procedures   to the spectral filtering and implications for the accuracy pricing in applications to risk management are discussed in 
Sect. \ref{appr_spectral_filter}.

The numerical scheme in \cite{Ki14} and its extensions rely on the Fast Fourier transform (FFT) method
as in hundreds (possibly, thousands) of other papers in quantitative finance. However, as it is explained in detail in \cite{single,BLdouble}, the standard application of FFT does not allow one to control all sources of errors at a low CPU and memory cost. In Sect. \ref{ss:conf}, we recall the drawbacks of FFT and the remedy suggested in \cite{single},
namely, the choice of grids in the state and dual spaces, of different sizes and steps, and application of FFT to several
subgrids so that the error tolerance can be satisfied at approximately minimal CPU and memory costs.
The same trick can be applied to improve the performance of PROJ, but in this paper, we use a different improvement
used in \cite{DeLe14,LeXi12,AsianGammaSIAMFM} to price barrier options with discrete monitoring and Asian options 
with discrete sampling. The idea is to completely separate the control of errors of the approximation of the transition operator at each time step as a convolution operator whose action is
realized using the fast discrete convolution algorithm, and errors of the calculation of the elements of the discretized pricing kernel. The latter elements are calculated one-by-one (the procedure admits an evident parallelization) without resorting to FFT technique. Instead, appropriate conformal deformations of the contour of integration in the formula for the elements, with the subsequent change of variables, are used to greatly increase the rate of convergence. The changes of variables are such that the new integrand is analytic in a strip around the line of integration, hence, the integral can be calculated with accuracy E-14 to E-15 using the simplified trapezoid rule with a moderate or even small number of points.
%The  details are Sect. \ref{ss:conf}.

Contour deformations  are among standard tools in numerical analysis: see, e.g.,
\cite{stenger-book,stengerreview00}. In particular, there is a universal albeit not quite straightforward saddle point method
- see, e.g., \cite{Fedoryuk} for the general background. As numerical examples in \cite{IAC} demonstrate, an approximate
realization of the saddle point method applied in \cite{Carr-Madan-saddlepoint} to price deep OTM European options produces errors of the order of 0.005.
In the same situations, the fairly simple  fractional-parabolic deformation method developed in \cite{iFT0,iFT} satisfies the error tolerance of the order of E-15 at very small CPU cost, and a more efficient sinh-acceleration method suggested in \cite{BoyarLevenSinh19}
requires microseconds in MATLAB to achieve this level of accuracy. In some cases, the composition of the fractional-parabolic and sinh-deformations may decrease the complexity of the method: see \cite{Sinh} for applications to  evaluation of special functions. The families of deformations suggested in \cite{iFT0,iFT,BoyarLevenSinh19} are flexible and easy to implement; both were used to price derivative securities of various classes
and calculate probability distributions (variations of these families which allow one to evaluate stable distribution are suggested and used in \cite{ConfAccelerationStable})
and can be used in many other situations as well. We explain the details and give further references in 
Sect. \ref{ss:conf}. 

In Sect. \ref{s:SINH-PROJ}, we derive the formulas for the coefficients using  appropriate conformal deformations and changes of variables. As in \cite{paired,Contrarian}, we cross some poles of the integrand and apply the residue theorem.
In application to pricing European options, the novelty of the situation in the present paper is that 
\begin{enumerate}[(1)]
\item there are infinitely many poles which need to be crossed whereas in \cite{paired,Contrarian}, the number of poles in the domain of analyticity is finite; 
\item
the result is a sum of an integral and an infinite sum of explicitly calculated terms\footnote{Calculation of the Wiener-Hopf factors for rational or meromorphic symbols, hence, in L\'evy models and with rational or meromorphic characteristic exponents is easily reducible to finite and infinite sums in this way because the remaining integral vanishes as
the contour of integration moves to infinity - 
see \cite{NG-MBS,amer-put-levy,
KuznetsovWien10,one-sidedCDS}; also, the reduction to an infinite sum is possible if barrier options with discrete monitoring in the Black-Scholes model are priced \cite{FusaiBarr}};  
\item
for calculations to be efficient, it may be non-optimal to cross all the poles;
\item
 truncation must be done not of the sum in infinite trapezoid rule only but of the infinite sum of residues as well.
 \end{enumerate}

In Sect. 
\ref{s:barrier pricing}, we apply the B-spline projection method to pricing of several types of barrier options, and
discuss the differences with other methods that realize the main block of  backward induction in the state space.
For the comparison with the alternative approach based on the calculations in the dual space  \cite{FeLi08,FusaiGermanoMarazzina},
see \cite{DeLe14}. 
We derive error bounds and give recommendation for the choice of parameters
of the numerical schemes. Numerical examples are in Sect.  \ref{s:numer}.   In Sect. \ref{s:concl}, we summarize the results of the paper and outline other possible applications of the method of the paper.

\section{Pricing European options and barrier options with discrete monitoring}\label{s:Euro_barrier}
\subsection{Pricing European options in L\'evy models and evaluation of probability distributions}\label{ss:Euro}
 The analytical expressions in 
\cite{BL-FT,carr-madan-FFT} are derived using the straightforward approach used in analysis; the difference is that in 
\cite{BL-FT}, the Fourier transform w.r.t. the log-spot is used, and in \cite{carr-madan-FFT}, w.r.t. the log-strike.
Let the riskless rate $r$ be constant, let $X$ be a L\'evy process on $\bR$ under the risk-neutral measure chosen for pricing, and let
$\bE[e^{iX_t\xi}]=e^{-t\psi(\xi)}$ be
the characteristic function of $X_t$ (note that we use the definition of the characteristic exponent $\psi$ as in \cite{BL-FT,KoBoL,NG-MBS},
which is marginally different from the definition in \cite{Sa99}). 
Let  
\bbe\label{hG}
\hG(\xi)=\int_\bR e^{-ix\xi}G(x)d\xi
%(\cF G)(\xi)=
\ee
 be the Fourier transform of the payoff function (this is the definition of the Fourier transform used
 in analysis and physics; the definition used in probability differs in sign: $e^{ix\xi}$ instead of $e^{-ix\xi}$). Expanding $G$ in the Fourier integral
 \bbe\label{iFT}
 G(x)=\int_{\Im\xi=\om} e^{ix\xi}\hG(\xi)d\xi,
 \ee
 where $\om$ is chosen so that $\hG$ and $\psi$ are  well-defined on $\{\Im\xi=\om\}$,    and applying the Fubini theorem, 
  one easily derives the formula for the option price
 \bbe\label{expect}
 V(T;x)=e^{-rT}\bE^x[G(X_T)]=(2\pi)^{-1}\int_{\Im\xi=\om}e^{ix\xi-T(r+\psi(\xi))} \hG(\xi)d\xi.
 \ee
 In order that the integral absolutely converges, the product 
 $e^{-T\psi(\xi)}\hG(\xi)$ must decay sufficiently fast as $\Re\xi\to\pm\infty$, and the application of the Fubini's theorem is justified. Alternatively, integration by parts can be used to regularize the integral. See \cite{NG-MBS,iFT} for details.
 
 The probability distribution $p_t(x)$ of $X_t$ can be interpreted as the price of the option with the payoff $\de_0$, in
 the model with the dual process $-X_t$ and 0 interest rate. Since the Fourier transform of the delta function is 1,
 \bbe\label{pt}
 p_t(x)=(2\pi)^{-1}\int_{\Im\xi=-\om}e^{ix\xi-t\psi(-\xi)}d\xi=(2\pi)^{-1}\int_{\Im\xi=\om}e^{-ix\xi-t\psi(\xi)}d\xi,
 \ee
 where the line $\{\Im\xi=\om\}$ must be in the strip of analyticity of $\psi$ around the real axis.
  The reader observes that, in each case, the derivation is at the level of a simple exercise in complex calculus once the formula for
 the characteristic function is known. For applications to finance, efficient (meaning: sufficeintly 
 accurate and fast) numerical procedures for the evaluation of integrals are needed. In many cases of interest, the two requirements are
 difficult to reconcile. A good and important example is the popular Variance Gamma model (VG model) introduced to finance in \cite{MaSe90,MGC98}.
 In the VG model, the characteristic exponent admits the asymptotics 
\bbe\label{asympVG}
\psi(\xi)\sim c_\infty \ln|\xi|-i\mu\xi,
\ee
as $\xi\to \infty$ remaining in the strip of analyticity, where $c_\infty>0$ and $\mu\in\bR$ is the drift.  
Letting $x'=-x+t\mu$, we see that the integrand in \eq{pt} has the asymptotics 
 \bbe\label{polf}
 f(\xi)\sim e^{ix'\xi} |\xi|^{-tc_\infty},
 \ee
 as $\xi\to\infty$ in the strip of analyticity of the characteristic exponent. Hence, if $x'=0$ and $tc_\infty\le 1$,  the integral diverges; the pdf is unbounded as $x'\to 0$. If $tc_\infty\in (1,2)$, $p_t$ has a cusp at $x=\mu t$, and if 
 $tc_\infty=2$, then a kink. If $x'\neq 0$, then, for any $t>0$, one can integrate by parts and reduce the initial integral to an absolutely convergent integral. 
For the parameters of the VG model documented in the empirical literature and $t$ of the order of 0.004 (one day) and even larger, $tc_\infty$ is very small, and, therefore, the truncation of the integral or the infinite sum in the infinite trapezoid rule
at a moderate level implicit in the recommendations given in   \cite{carr-madan-FFT} (CM method)
produces large errors. 
Using more accurate prescriptions for the parameter choice, one can satisfy a small error tolerance but at extremely large CPU and memory costs. See examples in \cite{iFT0,iFT,BoyarLevenSinh19}. In addition, the CM method introduces 
 an additional source of errors because the interpolation across the strikes of vanilla options was used.
 The interpolation is necessary so that  FFT can be applied. It was widely believed that the possibility to apply FFT
 was the great advantage of CM method because it allowed for a simultaneous calculation of vanilla prices at many strikes. However, in practice, the number of vanilla options of the same maturity is not large, and an
 accurate choice of the parameters of a good numerical integration procedure allows one to do calculations faster than
 CM method allows. Apart from an unnecessary interpolation error which is introduced by CM method, the ad-hoc rigid prescription for the parameter choices in CM method as it is understood in the literature leads to serious systematic errors,
 in applications to calibration and risk management in particular. See \cite{pitfalls,HestonCalibMarcoMe,HestonCalibMarcoMeRisk} for examples. In particular, in \cite{HestonCalibMarcoMe,HestonCalibMarcoMeRisk}, it is demonstrated that the model with the  parameters calibrated to the volatility smiles reproduces the smile poorly; even the smile accurately calculated for the calibrated parameters cannot be reproduced if the same method is used for the calibration. The same effect albeit to a lesser degree is documented for a popular COS method \cite{FaOo08}. In COS method,   an additional source of errors is introduced and the error control is inefficient. See \cite{iFT0,iFT,pitfalls} for the theoretical analysis and examples. 
 As explained in \cite{HestonCalibMarcoMe,one-sidedCDS}, an inaccurate pricing procedure used for calibration purposes
 will not recognize the correct parameter set ({\em sundial calibration}) and will find a local minimum where the true
 calibration error and error of the method almost cancel one another: the calibration procedure will see {\em ghosts} at
 the boundary of the region in the parameter space where the numerical method performs reasonably well ({\em ghost calibration}).
 If a method is rather inaccurate, this region is small, and, therefore, the calibration ``succeeds" to find a presumably 
 good set of the parameters of the model very fast. Hence, the procedure is ``very efficient".

  The papers \cite{Heston93,BL-FT} utilized reasonably accurate but rather inefficient numerical methods: in the first paper,
 the analytical representation was inconvenient for the development of efficient numerical procedures; in 
the second paper, it was suggested to  use   standard real-analytical numerical integration procedures such as the trapezoid and Simpson rules. Both methods guarantee only polynomial rate of convergence of the numerical scheme.
However, as it was remarked in \cite{single}, both rules can be interpreted as weighted sums of the simplified trapezoid
rule, hence, in fact, both have exponential rates of convergence; the rate of convergence is much worse than
the one of the simplified trapezoid rule. 
  Thus, in effect, the real-analytical recommendation for the choice of the step $\ze$ in
 \cite{BL-FT} requires unnecessarily small $\ze$, hence, an unnecessarily large number of terms, to satisfy the given error tolerance. 
Efficient realizations of the pricing formula in exponential L\'evy models (as well as in the Heston model and many other affine and quadratic models) are based on the wonderful property of the infinite trapezoid rule, namely, the exponential rate of decay of the discretization error as a function of $1/\ze$, where $\ze$ is the step, and  the efficient
integral  bound for the error of the infinite trapezoid rule. For $\mum<\mup$, denote 
$S_{(\mum\mup)}:=\{\xi\ | \Im\xi\in (-d,d)\}$. Let $f$ be analytic in a strip
$S_{(-d,d)}$, where $d>0$,  and decay at infinity sufficiently fast so that
%\bbe\label{condAHardy}
$\lim_{A\to \pm\infty}\int_{-d}^d |f(i a+A)|da=0,$
%\ee
and the Hardy norm
\bbe\label{Hnorm}
H(f,d):=\|f\|_{H^1(S_{(-d,d)})}:=\lim_{a\downarrow -d}\int_\bR|f(i a+ y)|dy+\lim_{a\uparrow d}\int_\bR|f(i a+y)|dy
\ee
is finite (we call $H(f,d)$ the Hardy norm following \cite{FeLi08}; the standard definition of the Hardy norm is marginally different). We write $f\in H^1(S_{(-d,d)})$. The integral
\bbe\label{INT}
I=\int_\bR f(\xi)d\xi
\ee
can be evaluated using the infinite trapezoid rule: for any $\ze>0$,  
\bbe\label{inftrap}
I\approx \ze\sum_{j\in \bZ} f(j\ze).
\ee
\begin{lem}[\cite{St93}, Thm.3.2.1]\label{lem:inf_trap}
The error of the infinite trapezoid rule \eqref{inftrap} admits an upper bound 
\bbe\label{Err_inf_trap}
{\rm Err}_{\rm disc}\le H(f,d)\frac{\exp[-2\pi d/\ze]}{1-\exp[-2\pi d/\ze]}.
\ee
\end{lem}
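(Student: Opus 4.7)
The plan is to invoke the Poisson summation formula to rewrite the discretization error as a sum of Fourier coefficients of $f$ at the dual-grid frequencies $2\pi k/\ze$, $k\ne 0$, and then exploit the strip analyticity of $f$ by shifting each integration contour into $S_{(-d,d)}$, which produces geometric decay in $|k|$.

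First, I would note that $H(f,d)<\infty$ together with analyticity on $S_{(-d,d)}$ and the vanishing-at-infinity condition implies $f\in L^1(\bR)$, so the Fourier transform $\hat f(\eta)=\int_\bR f(x)e^{-ix\eta}\,dx$ is well defined. Poisson summation then yields
\bbe
\ze\sum_{j\in\bZ}f(j\ze)=\sum_{k\in\bZ}\hat f(2\pi k/\ze),
\ee
with the $k=0$ term equal to $I$. Hence the discretization error reduces to $\big|\sum_{k\ne 0}\hat f(2\pi k/\ze)\big|$.

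Next, for $k>0$ I would shift the contour defining $\hat f(2\pi k/\ze)$ from $\bR$ down to $\Im x=-d+\eps$, where $e^{-ix\cdot 2\pi k/\ze}$ is exponentially small; for $k<0$ the symmetric shift is upward to $\Im x=d-\eps$. Cauchy's theorem applied to tall thin rectangles is legitimate precisely because the hypothesis $\lim_{A\to\pm\infty}\int_{-d}^d|f(ia+A)|\,da=0$ annihilates the vertical side contributions. Letting $\eps\downarrow 0$ produces
\bbe
|\hat f(\pm 2\pi k/\ze)|\le e^{-2\pi kd/\ze}A_\mp,\qquad k\ge 1,
\ee
where $A_-:=\lim_{a\downarrow -d}\int_\bR|f(ia+y)|\,dy$ and $A_+:=\lim_{a\uparrow d}\int_\bR|f(ia+y)|\,dy$ satisfy $A_-+A_+=H(f,d)$. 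Summing the two geometric series completes the estimate:
\bbe
{\rm Err}_{\rm disc}\le(A_-+A_+)\sum_{k=1}^\infty e^{-2\pi kd/\ze}=H(f,d)\cdot\frac{\exp[-2\pi d/\ze]}{1-\exp[-2\pi d/\ze]}.
\ee

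The main obstacle is justifying Poisson summation under the $H^1$-strip hypothesis rather than the usual Schwartz assumption. The contour-shift bound already supplies exponential decay of $\hat f$, hence $\hat f\in L^1(\bR)$, and strip analyticity gives continuity of $f$ on $\bR$; together these suffice for a standard version of the formula. A cleaner self-contained alternative is to bypass Poisson summation altogether by applying the residue theorem to $f(z)/(e^{2\pi iz/\ze}-1)$ on a contour consisting of the horizontal lines $\Im z=\pm(d-\eps)$ closed at infinity: the simple poles at $z=j\ze$ produce the trapezoid sum with residue $\ze/(2\pi i)$, while the two horizontal integrals, after switching to the complementary kernel $1/(1-e^{-2\pi iz/\ze})$ on the upper line (to ensure decay in the respective half-plane), reduce directly to the Hardy-norm contributions with the required geometric factor.
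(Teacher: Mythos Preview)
The paper does not supply its own proof of this lemma; it is quoted directly from \cite{St93}, Theorem~3.2.1, without argument. Your proposal is correct and is essentially the classical proof one finds in Stenger's book: express the discretization error via Poisson summation (or, equivalently, via the residue calculus applied to $f(z)/(e^{2\pi i z/\ze}-1)$), then push each Fourier integral to the appropriate edge of the strip $S_{(-d,d)}$ to extract the geometric factor $e^{-2\pi d/\ze}$, and sum. Your handling of the contour shifts, the role of the vanishing side-integral hypothesis, and the splitting $H(f,d)=A_-+A_+$ are all in order, and your remark that the contour-shift bound itself furnishes the decay of $\hat f$ needed to justify Poisson summation is the right way to close that loop.
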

Lemma \ref{lem:inf_trap} was introduced to finance in \cite{Lee04} to price European options, and later, in 
\cite{FeLi08}, to price barrier options using the Hilbert transform. Note that in applications, the infinite sum
must be truncated, and the simplified trapezoid rule applied.  
In \cite{iFT}, an  analysis of the discretization and truncation errors for several classes of L\'evy models
is used to give accurate general recommendations for the choice of the parameters of the numerical scheme
given the error tolerance.

 The exponential decay of the discretization error of the infinite trapezoid rule makes the control of the
 discretization error fairly simple if the strip of analyticity is not too narrow. In application to pricing in L\'evy models, an equivalent condition is: the rate of the exponential decay of the  tails of the L\'evy density at infinity is not small.
  However, if $f(y)$ decays slowly as $y\to\pm \infty$, the number of terms
 in the simplified trapezoid rule
 \bbe\label{inftrapsimp}
I\approx \ze\sum_{|j|\le N} f(j\ze)
\ee
necessary to satisfy even a moderate error tolerance $\eps$ can be extremely large. See \cite{iFT0,iFT,BoyarLevenSinh19} for examples. The simplest example is the evaluation of the probability distribution function in the VG model. If $x'=0$ and $tc_\infty\le 1$, the integral diverges.
If $x'\neq 0$, then, for any $t>0$, one can reduce to absolutely convergent integral integrating by parts; the better way is to use the summation by parts in \eq{inftrap} as in
 \cite{iFT0,Contrarian} to reduce to the absolutely converging series and significantly decrease the number of terms necessary to satisfy the given error tolerance. There exist numerous similar acceleration schemes (Euler acceleration and other - see, e.g., \cite{AbWh,AbWh92OR,AbateValko04} and the bibliographies therein). Note that
 the errors of the acceleration schemes are not easily controlled, and one expects that these schemes are reliable only if
 the derivatives of the non-exponential factor in the integrand decay faster than the integrand itself. 
 This is, apparently, not the case if $\psi(\xi)$ increases as $|\xi|$ and faster as $x\to\infty$ along the line of integration.
 Thus,  in models of infinite variation, different tools should be applied; the conformal acceleration method
 explained below is superior to other acceleration methods.
 
 \subsection{Sinh-acceleration}\label{ss:conf}
 Fortunately, as it is noticed in \cite{iFT0,iFT,BoyarLevenSinh19}, all models popular in finance (bar the VG model) enjoy the following key properties. The characteristic exponent can be represented in the form
 \bbe\label{psi0}
 \psi(\xi)=-i\mu\xi+\psi^0(\xi),
 \ee
 where $\psi^0$ 
 is analytic in a union of a strip $S_{(\mum,\mup)}$ and  cone $\cC$ around the real axis s.t.  \bbe\label{asymppsi}
 \psi^0(\xi)\sim c_\infty({\mathrm{arg}}\,\xi)|\xi|^\nu,
 \ee
as $\xi\to\infty$ remaining in the cone,
 where $\nu\in (0,2]$.\footnote{In \cite{KoBoL,NG-MBS}, it is required that \eq{asymppsi} holds as $\xi\to \infty$ remaining in
 the strip. The corresponding more general class of L\'evy processes is called Regular L\'evy processes of
 exponential type - RLPE.} Furthermore, there exists a sub-cone $\bR\subset \cC'\subset \cC$ such that
 \bbe\label{asymppsipos}
 \Re c_\infty({\mathrm{arg}}\,\xi)>0,\quad \xi\in \cC'\setminus\{0\}.
 \ee
 In \cite{BoyarLevenSinh19}, the real axis can be at the boundary of the cone, the conditions \eq{asymppsi}-\eq{asymppsipos} are relaxed but additional conditions are imposed; the conditions  omitted here are needed for the application of the same technique
 to the evaluation of the Wiener-Hopf factors and efficient pricing of barrier options and lookbacks, with continuous monitoring. In the case of the VG model, $\psi^0$ increases as $\ln|\xi|$.
 
 For $-\pi/2\le \gam<0<\gap\le \pi/2$, define
 $\cC_{\gam,\gap}:=\{\xi\in\bC\ |\ {\mathrm{arg}}\,\xi\in (\gam,\gap)\cup (\pi-\gam, \pi-\gap)\}$.
 For basic types of models used in finance, conditions \eq{asymppsi}-
 \eq{asymppsi} hold with $\cC=\cC_{-\pi/2,\pi/2}$, and 
\eq{asymppsipos} holds with $\cC'=\cC_{-\ga_\nu,\ga_\nu}$, where $\ga_\nu=\min\{1,1/\nu\}\pi/2$.
Asymmetric cones naturally arise in the completely asymmetric
case of the processes of the generalized
 Koponen family  constructed in  \cite{genBS,KoBoL} (and called later KoBoL in \cite{NG-MBS}); the cones are symmetric for a subclass of KoBoL given the name CGMY in \cite{CGMY}. We keep the initial labels for the parameters of KoBoL: for $\nu\in (0,2), \nu\neq 1$, $c>0$, $\lm<0<\lp$,
 \begin{equation}\label{kbl}
\psi(\xi)=-i\mu\xi+c\Gamma(-\nu)[\lp^\nu-(\lp+i\xi)^\nu+(-\lm)^\nu-(-\lm-i\xi)^\nu],
\end{equation}
 therefore, for $\xi$ in the right half-plane, $c_\infty({\mathrm{arg}}\,\xi)=-c\Gamma(-\nu)\exp[i\nu{\mathrm{arg}}\,\xi]$.
 Hence, \eq{asymppsipos} holds with $\ga_\nu=\min\{1,1/\nu\}\pi/2$. In the case $\nu=1$, the formulas are different
 \cite{genBS,KoBoL,NG-MBS}.
 
 As another prominent example, the characteristic exponents of Normal Tempered Stable L\'evy processes  constructed in \cite{B-N-L} are given by
 \begin{equation}\label{NTS2}
\psi(\xi)=-i\mu\xi+\de[(\al^2+(\xi+i\be)^2)^{\nu/2}-(\al^2-\be^2)^{\nu/2}],
\end{equation}
where $\nu\in (0,2)$, $\de>0$, $|\be|<\al$; for $\nu=1$, this is the characteristic exponent of the Normal Inverse Gaussian process (NIG) \cite{BN98}.  It is evident that \eq{asymppsipos} holds with $\ga_\nu=\pi/2$. 
  
As it was demonstrated in  \cite{iFT0,iFT,BoyarLevenSinh19}, property \eq{asymppsipos} allows one to greatly increase the rate
of decay of integrands at infinity, hence, the rate of convergence of the infinite sums in the infinite trapezoid rule, 
 using appropriate conformal deformations of the line of integration
 and the corresponding changes of variables.
 We denote the conformal maps  $\chi$ and $\chi^\pm$, and the contours
$\cL=\chi(\bR)$ or $\cL^\pm=\chi^\pm(\bR)$. We use the subscript $+$ and $-$ if we wish to stress the fact that the wings of the contour point upward and 
downward, respectively. If needed, the parameters of a deformation and the corresponding change
of variables are included as subscripts. The integrand in the new variable is denoted $g(y)$. After the change of the variables, we apply the simplified trapezoid rule. In \cite{iFT0,iFT,BoyarLevenSinh19}, three families of deformations were suggested and used. The families are given by relatively simple formulas, which allows one not only to derive efficient error bounds and give accurate recommendations for the choice of the parameters of the numerical schemes for evaluation of 1D integrals
but perform similar deformations for integrals in dimensions 2, 3 and even 4. See \cite{paraLaplace,Contrarian} for applications to pricing barrier and lookback options, and hedging in L\'evy models. Other applications of the same
technique to quantitative finance can be found in \cite{paraHeston,DeLe14,LeXi12,paired,one-sidedCDS,AsianGammaSIAMFM,BarrStIR}; in \cite{ConfAccelerationStable}, the reader can find
variations of the same families suitable for efficient evaluation of stable distributions and
general recommendations for  approximately optimal choices of the family.  If the cone $\cC_{\ga_\nu}$ is not too ``narrow", equivalently, $\ga_\nu$ is not too small, then the sinh-acceleration 
  \bbe\label{eq:sinh}
\chi(y)=i\om_1+b\sinh(i\om+y), 
\ee
where $\om_1\in \bR, \om\in (-\pi/2,\pi/2)$ and $b>0$, is the best choice. 
If $\om>0$, the wings of $\cL$ point upward, and if $\om<0$, the wings point downward. 
Set $x'=x+\mu T$ and $z(y)=\sinh(i\om+y)$, then, after the contour  deformation and change of variables, we
reduce \eq{expect} to \bbe\label{expect00}
 V(T;x)=e^{-rT}\bE^x[G(X_T)]=(2\pi)^{-1}\int_{\bR}e^{ix'\xi-T(r+\psi^0(\xi))} \hG(\xi)d\xi.
 \ee
The parameters are chosen so that
\begin{enumerate}[(1)]
\item
the contour $\chi(\bR)$ is a subset of the domain of analyticity, and $\psi^0(\chi(y))$
decays fast as $y\to\pm\infty$, hence, $\om\in (-\ga_\nu,\ga_\nu)$;
\item
 the oscillating factor becomes fast decaying one, hence, $\om\in (0,\ga_\nu)$ if $x'>0$, and
 $\om\in (-\ga_\nu,0)$ if $x'<0$. If $x'=0$, any $\om\in (-\ga_\nu,\ga_\nu)$ is admissible, and, typically, $\om=0$ is the
 best choice.
\end{enumerate}
The largest gain in the CPU and memory costs is in the case of the VG model, at $x'\neq 0$. 
Let $\eps>0$ be the error tolerance and set $E=\ln(1/\eps)$. Instead of \eq{asymppsi}, we have
\bbe\label{asymppsiVG}
 \psi^0(\xi)\sim c_\infty({\mathrm{arg}}\,\xi)\ln|\xi|,
 \ee
 hence, the complexity of the scheme based on
 the simplified trapezoid rule without acceleration is of the order of $O(E \eps^{1/(1-Tc_\infty)})$, where $\eps>0$ is the error tolerance and $E=\ln(1/\eps)$;  if the summation by parts is used $n$ times, then of the order of $O(E \eps^{1/(n-Tc_\infty)})$,
 and if the sinh-acceleration is  used, then of the order of $O(E\ln(E/|x'|)$. If $x'=0$,  the complexity of the scheme
is of the order of  $O(E^2)$. In KoBoL and NIG models, the complexity reduces from $O(E^{1+a})$, where $a>0$,
 to $O(E\ln E)$, for any $x'$.

 \subsection{Pricing barrier options with discrete monitoring in L\'evy models, by backward induction}\label{ss:barr_back}
 
 As an example, consider double barrier options with the maturity date $T$ and
 barriers $H_\pm=e^{h_\pm}, h_-<h_+$. If, at any of the monitoring dates $t_0(=0)<t_1<t_2<\cdots <t_{N-1}<t_N:=T$, the price
 $S_{t_j}=e^{X_{t_j}}$ of the underlying is outside the interval $(H_-, H_+)$, the option expires worthless. If neither of the barriers is breached, at maturity date $T$, the option payoff is $G(X_T)$. Let $U=(h_-, h_+)$, and note that letting $h_-=-\infty$ (resp., $h_+=+\infty$),
 we obtain barrier options with one barrier. The riskless rate $r\in \mathbb R$ is constant.  
 
 Denote by $V_n(x)$ the option price at time $t_n$ and $X_{t_n}=x$, by $P_t=e^{-t\psi(D)}$ the transition operator, and set $\barDe_n=t_{n+1}-t_n$, $n=0,1,\ldots, N-1$.
 Typically, the monitoring dates are equally spaced, hence, $\barDe=\barDe_n$ is independent of $n$.
 Writing down the price of the option as an expectation, and applying the law of iterated expectations, one obtains the following straightforward scheme:
 \begin{enumerate}[(1)]
 \item
 set $V_N=\bfo_U G$;
 \item
 in the cycle $n=N-1,N-2,\ldots,0$, for $x\in U$, calculate
 \bbe\label{indstep}
 V_{n}(x)=e^{-\barDe_n}(P_{\barDe_n}V_{n+1})(x);
 \ee
  for $x\not\in U$, set $V_{n}(x)=0$.
 \end{enumerate}
 Thus, for $x\in U$, $V_n(x)$ is the price of the European option of maturity $\barDe_n$, with the payoff $V_{n+1}(X_{\barDe_n})$. A universal procedure suggested in \cite{QUAD03}  based on approximations of the option values and pricing kernel (called QUAD),
 is very inaccurate if the monitoring interval is small and/or the process is a VG or close to VG, e.g., KoBoL or NTS L\'evy model of order $\nu$ close to 0. The reason is evident: the pricing kernel has very large derivatives near the peak 
 (in the case of VG model, and small $\barDe$,
 non-smooth or even discontinuous), and the derivatives of $V_n, n<N,$ are very large near the barrier(s).
 See the theoretical analysis of the kernel in L\'evy models in \cite{NG-MBS}, and of the prices of the barrier options
 in \cite{DeLe14}. 
 
 Calculations in the dual space involve functions that are more regular, hence,  calculations in the dual space can be made more accurate.
 For an appropriately chosen line of integration $\{\Im\xi=\om\}$, we have
  \bbe\label{back_ind_FT}
 V_n(x)=(2\pi)^{-1}\int_{\Im\xi=\om}e^{ix\xi-\barDe_n(r+\psi(\xi))} \hV_{n+1}(\xi)d\xi.
 \ee
 Typically, $\hV_{N-1}(x)$ can be calculated in the analytical form but for $n<N-1$, the exact calculations become impossible. Naturally, numerical calculations are impossible for all points $x\in U$. One fixes a grid
 $\vec x=(x_j)_{j=1}^M$ of points on $U$, defines $V_{N}(x_j)=(\bfo_U G)(x_j)$, and calculates 
 $\cV_n$,  the array of the approximations to $V_n(x_j), j=0,1,\ldots, M,$ given $\cV_{n+1}$.
  \begin{rem}\label{choice_grid}{\rm 
 \begin{enumerate}[(a)]
 \item
 Typically, one chooses a uniformly spaced grid $x_j=x_0+j\De$; in the case of the double barrier options,
 the optimal choice is $x_0=h_-, x_M=h_+$, $\De=(h_+-h_-)/M$; in the case of  single barrier options, either $h_-$ or $h_+$ must be at the barrier, and, in the case of puts and calls, the log-strike $\ln K$ must be one of points of the grid.
 \item
 If there are two barriers, then, typically, it is impossible to construct a uniformly spaced grid which contains
 $h_-, h_+$ and $\ln K$. Hence, an approximation of $V_n$ given values of $V_n$ at the points of the grid
 is rather inaccurate: of the order of $\De$, and accurate calculations become essentially impossible.
 If $\hV_{N-1}$ is calculated analytically, then we can regard $N-1$ as the first step of the backward induction.
 Since $V_{N-1}$ is smooth at $\ln K$, (piece-wise) polynomial interpolation of arbitrary order is possible but since the interpolation errors depend on the derivatives, which can be very large, high order interpolation does not work. For examples in 
 \cite{DeLe14}, typically, the quadratic interpolation is the best choice; for applications to Asian options, whose prices are 
 more regular, the cubic interpolation is preferred \cite{LeXi12}.
 \item
CONV method \cite{CONV,CONV2} is simple: choose a positive integer $m$, define
$M=2^m-1$, $\ze=2\pi/(\ze(M-1))$ so that the Nyquist relation $\De\ze=2\pi/(M-1)$ holds, choose an appropriate
grid $\xi_j=\ze_0+(j-1)\ze, j=0,1,2,\ldots,M,$ in the dual space, and calculates
the array $\cV_n$ as the composition of FFT applied to $\cV_{n+1}$, point-wise multiplication-by-
the array $e^{-\barDe_n(r+\psi(\xi_j))}, j=0,1,2,\ldots, M,$ and application of the inverse FFT (iFFT).
However, this implies an extremely inefficient interpolation procedure for the approximation of
$V_{n+1}$, hence, large errors. See \cite{single} for the detailed analysis. 
 \end{enumerate}
 }
 \end{rem}
 A much more accurate approach introduced  to finance in  \cite{eydeland,eydeland-mahoney} is based on approximation of 
 $V_{n+1}$ by a function whose Fourier transform can be calculated explicitly; the approximation error
 can be controlled. After the approximation,
 the integrand in \eq{back_ind_FT} is given by an explicit analytical expression, and, therefore, any numerical procedure
 can be used to evaluate the integral. The scheme was studied in detail  in \cite{single,BLdouble} in the general setting - as applications, barrier options with discrete monitoring were considered and Carr's randomization was used.
 Since the latter can be interpreted as a backward procedure (with the transition operator different from
 the one in the case of barrier options with discrete monitoring), the general study in
 \cite{single,BLdouble} is applicable for options with discrete monitoring/sampling. The error control
 is different (see \cite{LeXi12,DeLe14}) but the general analysis of difficulties for application of FFT and suggested remedies (refined iFFT) is valid for options with discrete monitoring/sampling. 
 
 In the case of the piece-wise linear interpolation,  we obtain
 \bbe\label{linearint}
 \cV_{n;j}=\sum_{0\le k\le M} c^{\barDe}_{j-k}\cV_{n+1;k}+\ldots,
 \ee
 where  \bbe\label{c_ell_bDe}
 c^{\barDe}_\ell=\frac{\De}{2\pi}\int_{\bR}e^{i\De\ell}e^{-\barDe(r+\psi(\xi))}A^{\De}(\xi)d\xi
 \ee
(see \cite{single}) and
 \bbe\label{AbarDe}
 A^{\De}(\xi)=\frac{e^{i\De\xi}+e^{-i\De\xi}-2}{(i\De\xi)^2}.
 \ee
 We omit the boundary terms which are linear functions of $\cV_{n+1;k}, k=0,1,$ and $k=M-1,M$;
 the CPU cost of calculation of the omitted terms, for all $j$, is $O(M)$.  
 \begin{rem}\label{disc_barr_rem}{\rm 
 \begin{enumerate}[(1)]
 \item
 The
 apparent singularity under the integral sign is removable; formulas for the omitted terms can be found
 in \cite{single}.
 \item
 In the case of piece-wise interpolation procedures of higher order, the omitted boundary terms involve
 the function values at more than two points of the grid at and close to the boundary point, and an additional factor
$A^{\barDe}(\xi)$ in the formula
 for the coefficients $c^{\barDe}_\ell$ is more involved. See \cite{LeXi12,DeLe14}.
 \item
 The price of the barrier option with discrete monitoring is discontinuous at the barrier\footnote{
 The same is true at one of the boundaries for barrier options with continuous monitoring
 if the underlying process is of finite variation with non-trivial drift - see \cite{early-exercise,BIL}},
 therefore, in order that the interpolation error not be large, in the omitted terms, $\cV_{n+1,0}$ and $\cV_{n+1,M}$ must be replaced with the limits $V_{n+1}(x_0+0), V_{n+1}(x_M-0)$, and the formula \eq{linearint}
 applied for all $j=0,1,\ldots, M$.
 \item
 Extending $\cV_n$ by 0 to a function on $\bZ$, we can write the main block in \eq{linearint}
 as the discrete convolution operator; after an appropriate truncation, we have a finite sum.
  Identifying the grid with the group of roots of unity of order $M$,
 and using the spectral analysis in spaces of functions on this group, one obtains a useful representation of
 the discrete convolution operator in terms of the discrete Fourier transform and its inverse\footnote{In the better known
 case of functions on the real line, the convolution operator with the kernel $k$ can be represented as the composition of the Fourier transform, multiplication-by-$\hat k$-opersator, and the inverse Fourier transform.}. See, e.g.,
\cite{BriggsHenson}.   This algorithm is expressible via
 iFFT and FFT; as it was remarked in \cite{single}, while doing calculations in MATLAB,
 it may be better to use built-in procedures for iFFT and FFT rather than the one for the fast convolution.
 
 \item
 In the majority of the literature when the Fourier transform or Hilbert transform are used,
 the main block is a discrete convolution algorithm as well. 
 The entries of the discrete convolution kernel are expressed using iFT - the expressions are similar to
 \eq{c_ell_bDe}. The standard approach is to use the same pair of uniform grids in the state space and dual space
 for all purposes, with the same number of points and the steps related via the Nyquist relation.
 However, as it was explained in \cite{single}, this approach may require unnecessary large arrays even
 if the fractional FFT is used.  Accurate calculations of the integral on the RHS of \eq{linearint}
 require a very large long grid, especially in view of the fact that the discretized kernel must be calculated with the precision higher than the required precision for calculation of option values. 
 As the standard numerical analysis wisdom suggest, one has to calculate such a kernel with an accuracy much
 higher than the option values at each step. As the remedy, in \cite{single}, the refined iFFT-FFT procedure is suggested.
 The procedure allows one to (almost) independently choose the grid in the state space needed to control  approximations of value functions at each time step (this grid is, typically, short and contains a fairly small number of points, especially in
 the case of the double-barrier options), the grid in the state space needed to accurately approximate the action of
 the convolution operator in the state space, and the grid in the dual space needed for accurate evaluation of the RHS
 of \eq{linearint} for all $k$ used in the discrete convolution algorithm. 
 \item
 In the paper, as in \cite{LeXi12,DeLe14}, we will calculate the entries $\be_{a;j}$ of the discrete convolution 
 kernel using the conformal acceleration method. This allows us to independently control different sources of errors, and easily achieve
 the high accuracy of calculation of the entries.

\end{enumerate}
}\end{rem}

 \subsection{B-spline Basis Density projection method}\label{ss: Bsplines}
  In a sequence of works, the B-spline density projection method was shown to be an effective method of option pricing for vanilla and exotic options \cite{Ki14,KiDe14, Ki14B, Ki16A,CuiKirkNguyenDVSwap17}. This method is based on the theory of frames and Riesz bases, see \cite{OC03, CH11,YO80}, and in particular the use of orthogonal projection. There are at least a couple of distinct ways in which projection may be applied in a numerical pricing context. 
  In \cite{KiDe14}, the theory of Riesz bases is used to approximate the \emph{payoff} (value) function based on its orthogonal projection onto a B-spline basis. In \cite{Ki14}, the \emph{transition density} of a random variable is projected onto the basis, and used to price financial derivatives. This latter approach, called the PROJ method, is the focus of the present work, where we provide a highly accurate  method for the density projection, using the machinery of SINH-acceleration. 
  
  There are several advantages to using the B-spline basis to approximate the transition density:
 \begin{enumerate}
 \item Accuracy: the B-spline density approximations converge at a high polynomial order. While the COS method can achieve exponential convergence, for problems commonly encountered in practice (namely highly peaked transition densities) the convergence of COS is typically slower than for B-splines, and COS requires more computational effort to achieve a similar accuracy.
 \item Robustness: unlike global bases (such as COS), the local nature of B-splines makes them more robust to features of the density that can cause Gibbs oscillations. For example, the Variance Gamma process is notoriously difficult, as it exhibits in a cusp-like (non-smooth) transition density for small time horizons. Special techniques such as spectral filtering are required for COS to succeed for such densities, see \cite{ruijter2015application}.  By contrast, the B-spline coefficient formula already involves a natural spectral filter (the dual scaling function, discussed below), which increases  the decay of the characteristic function at infinity, and counteracts the occurrence of Gibbs oscillations to a large degree. 
 \item Tractability: the B-spline bases are mathematically very tractable, as they are formed from simple local polynomials with compact support. This makes it easy to extend the PROJ method to new problems in finance. Given the large range of applicability of the PROJ method, a significant improvement in its computation is beneficial to many applications.
 \end{enumerate}
 
 Let timestep $\barDe >0$ be such that the transition density  $p_{\barDe}\in L^2(\mathbb R)$. Since $\norm{p_{\barDe}}^2_2\leq \norm{p_{\barDe}}_\infty$, all bounded densities are in $L^2(\mathbb R)$.\footnote{Note however, that in the VG model, for sufficiently small $\barDe$, $\norm{p_{\barDe}}_\infty=\infty$.} 
The idea in \cite{Ki14} is to project  $p_{\barDe}$ onto a non-orthogonal basis, generated by a \emph{scaling function} $\varphi$, which is compactly supported and symmetric. Let $\De>0$, and $x_k=x_0+k\Delta$, $k\in \mathbb Z$, where $x_0\in \mathbb R$ is a shift parameter. By shifting and scaling $\varphi$ at a \emph{resolution} $a:=1/\De$, we form the  basis $\{\varphi_{a,k}(x)\}_{k\in \mathbb Z}:=\{a^{1/2}\varphi(a(x-x_k)) \}_{k\in \mathbb Z}$.
As long as
\begin{equation}\label{eq: frameb}
A\norm{f}_2^2\leq\sum_{k\in \mathbb Z} |\langle f,\varphi_{a,k}\rangle|^2\leq B \norm{f}_2^2, \quad \forall f\in L^2(\mathbb R),
\end{equation}
for some $0<A\leq B$ called the \emph{frame bounds}, $\varphi$ is said to generate a \emph{Riesz basis} for its closed span $\mathcal M_a:=\cspan{\varphi_{a,k}}_{k\in \mathbb Z}$. A Riesz basis has the property that every $f\in \mathcal M_a$ is uniquely representable, see Theorem 3.6.3 of \cite{OC03}. Let $P_{\mathcal M_a}$ be the orthogonal projection operator on $\mathcal M_a$. The coefficients in the unique 
expansion 
\begin{equation}\label{eq:ProjGen}
P_{\mathcal M_a}p_{\barDe}(x) = \sum_{k\in \mathbb Z} \beta_{a,k}\cdot\varphi_{a,k}(x),
\end{equation}
 are called projection coefficients. From Theorem 3.6.3 and Lemma 7.3.7 of \cite{OC03}, there exists a \emph{dual generator} $\widetilde \varphi$ such that
\[
\beta_{a,k}=\langle p_{\barDe},\widetilde{ \varphi}_{a,k}\rangle= \int_{-\infty}^\infty p_{\barDe}(x) \overline{\widetilde{ \varphi}_{a,k}(x)}dx.
\]
By Proposition 7.3.8 of \cite{OC03},
\begin{equation}\label{eq: dualft}
\widehat{\widetilde \varphi}(\xi)=\frac{\widehat \varphi(\xi)}{ \mathbf \Phi(\xi)},   \quad \xi\in \mathbb R,
\end{equation}
where $\mathbf \Phi(\xi):=\sum_{k\in \mathbb Z}\left|\widehat\varphi(\xi +2\pi k) \right|^2$. Moreover, by Theorem 7.2.3 of \cite{OC03}, $0 < A \leq \mathbf \Phi(\xi) \leq B$, for $\xi\in \mathbb R$. Using these facts, 
\cite{Ki14} derives 
\bbe\label{beakgen}
\beta_{a,k}=\frac{a^{-1/2}}{2\pi}\int_\bR e^{-ix_k\xi-\barDe\psi(\xi)}\widehat{\widetilde \varphi}\Big(\frac{\xi}{a}\Big)d\xi,
\ee
as well as
 explicit formulas for B-spline bases of various orders. 
 For the linear B-splines,
 \bbe\label{htphilin}
\htphi(\xi)=\frac{12\sin^2(\xi/2)}{\xi^2(2+\cos\xi)}=\frac{6(1-\cos\xi)}{\xi^2(2+\cos\xi)}.
\ee
 The
 formulas for the splines of higher order are in Sect. \ref{coeffquadratic}, \ref{coeffcubic}, \ref{coeffhigher}. Note that $\widehat{\widetilde \varphi}\Big(\frac{\xi}{a}\Big)$ in \eq{beakgen} provides a natural \emph{spectral filter}, which increases the rate of the decay of the integrand at infinity.
 
 The rates of convergence of the  orthogonal projection to the true density, in several norms, are given in the following propositions. 
 \begin{prop}[\cite{Un97}, Theorem 4.3]\label{props2} If for $m=0,...,L-1$, $\widehat \varphi^{(m)}(2\pi k) = 0$, $k\in \mathbb Z$, $k\neq 0$, then there exists 
a positive constant $K_{\varphi,L}$ s.t. $\forall f\in W_2^L$
\begin{equation}
\norm{f - P_{ \mathcal M_a} f}_2 \leq K_{\varphi,L}\cdot \Delta^L \cdot \norm{f^{(L)}}_2,
\end{equation}
where $\mathcal M_a = \mathcal M_a(\varphi):=\cspan{\varphi_{a,k}}_{k\in \mathbb Z}$.  For functions which satisfy the stricter smoothness requirement $f\in W_2^{2L}$, we have
\begin{equation}
\norm{f - P_{ \mathcal M_a} f}_2 \leq K_{\varphi,2L} \cdot \Delta^{2L} \cdot \norm{f^{(2L)}}_2 + K_{\varphi,2L}^{1/2} \cdot \Delta^L \cdot \norm{f^{(L)}}_2.
\end{equation}
for some constant $ K_{\varphi,2L}$.
\end{prop}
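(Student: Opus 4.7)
The plan is to pass to the Fourier domain, where the orthogonal projection onto the shift--invariant space $\mathcal M_1$ can be written explicitly and the Strang--Fix--type hypothesis $\widehat\varphi^{(m)}(2\pi k)=0$, $m=0,\ldots,L-1$, $k\ne 0$, translates into a vanishing--to--order--$L$ estimate on a concrete error kernel. A dilation $x\mapsto y/a$ reduces everything to $a=1$: $\mathcal M_a$ is a unitary image of $\mathcal M_1$ under $f\mapsto a^{-1/2}f(\cdot/a)$, and Fourier scaling produces the $\Delta^L=a^{-L}$ factor when the kernel is paired against $|\xi|^{2L}|\widehat f|^2$.

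Using the dual--generator identity \eqref{eq: dualft} and Poisson summation of the coefficients $\beta_{1,k}=\inner{f,\widetilde\varphi_{1,k}}$, I would first derive the Fourier representation
\bbe
\widehat{P_{\mathcal M_1}f}(\xi)=\frac{\widehat\varphi(\xi)}{\mathbf\Phi(\xi)}\sum_{m\in\bZ}\widehat f(\xi+2\pi m)\,\overline{\widehat\varphi(\xi+2\pi m)},
\ee
and then fold $\bR$ into the fundamental cell $[-\pi,\pi]$ via Parseval to get
\bbe
\norm{f-P_{\mathcal M_1}f}_2^2=\frac{1}{2\pi}\int_{-\pi}^{\pi}\Bigg\{\sum_{n\in\bZ}|\widehat f(\eta+2\pi n)|^2-\frac{\bigl|\sum_{n\in\bZ}\widehat f(\eta+2\pi n)\overline{\widehat\varphi(\eta+2\pi n)}\bigr|^2}{\mathbf\Phi(\eta)}\Bigg\}d\eta.
\ee
The integrand is the pointwise squared $\ell^2(\bZ)$--distance of $(\widehat f(\eta+2\pi n))_n$ from its orthogonal projection onto the line spanned by $(\widehat\varphi(\eta+2\pi n))_n$. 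Upper--bounding this distance by the trial approximant with coefficient $c=\widehat f(\eta)/\widehat\varphi(\eta)$ kills the $n=0$ term and leaves
\bbe
\text{integrand}\leq 2\sum_{n\neq 0}|\widehat f(\eta+2\pi n)|^2+2\,\bigl|\widehat f(\eta)/\widehat\varphi(\eta)\bigr|^2\sum_{n\neq 0}|\widehat\varphi(\eta+2\pi n)|^2,
\ee
and three ingredients finish the first bound: (i) the Riesz lower bound $\mathbf\Phi\ge A>0$, (ii) the Strang--Fix vanishing $|\widehat\varphi(\eta+2\pi m)|=O(|\eta|^L)$ as $\eta\to 0$ for $m\neq 0$, and (iii) $|\eta+2\pi m|\ge\pi$ for $m\neq 0$, $\eta\in[-\pi,\pi]$. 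These combine to bound the integrand by $K_{\varphi,L}^2\sum_{n\in\bZ}|\eta+2\pi n|^{2L}|\widehat f(\eta+2\pi n)|^2$; integrating over $[-\pi,\pi]$ and applying Plancherel in the form $\int_\bR|\xi|^{2L}|\widehat f(\xi)|^2\,d\xi=2\pi\norm{f^{(L)}}_2^2$ yields the first inequality after undoing the rescaling.

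For the refined estimate on $W_2^{2L}$, the idea is to split the Fourier integral at the scale $|\xi|\sim 1/\Delta$. On the low--frequency part $|\xi|\lesssim 1/\Delta$ the rescaled error kernel still enjoys the $O(\Delta^{2L}|\xi|^{2L})$ vanishing coming from the first bound, contributing a $\Delta^{2L}\norm{f^{(L)}}_2^2$ piece to the squared error; on the high--frequency tail $|\xi|\gtrsim 1/\Delta$ one uses the trivial bound $\le C$ on the kernel together with $|\xi|^{4L}\Delta^{4L}\ge 1$ to majorize the integrand by $C\Delta^{4L}|\xi|^{4L}|\widehat f(\xi)|^2$, producing a $\Delta^{4L}\norm{f^{(2L)}}_2^2$ piece. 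Adding the two pieces and applying $\sqrt{a+b}\le\sqrt{a}+\sqrt{b}$ gives the claimed additive form $\Delta^{2L}\norm{f^{(2L)}}_2+\Delta^L\norm{f^{(L)}}_2$. The main technical obstacle is the uniform derivation of the error kernel bound over all of $[-\pi,\pi]$: the trial--approximant argument works cleanly only where $\widehat\varphi(\eta)$ is bounded away from zero, so one must split the fundamental cell into ``near $0$'' and ``bounded away from $0$'' regions (using a trivial $c=0$ trial together with $|\eta+2\pi n|\ge\epsilon$ on the second region) and glue the local constants into a global $K_{\varphi,L}$ by a compactness argument.
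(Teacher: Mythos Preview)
The paper does not prove this proposition at all: it is quoted verbatim as Theorem~4.3 of \cite{Un97} (Unser and Daubechies), with no argument supplied. So there is no ``paper's own proof'' to compare against---your proposal is a self-contained attempt to reproduce a result that the authors simply import from the literature.

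That said, your approach for the first inequality is the standard and correct one for shift-invariant approximation: pass to Fourier, use the explicit form of the $L^2$-orthogonal projector via the dual generator, fold into $[-\pi,\pi]$, and exploit the Strang--Fix vanishing $\widehat\varphi^{(m)}(2\pi k)=0$ to control the aliasing sums. The trial-approximant trick (projecting onto the line in $\ell^2(\bZ)$ spanned by $(\widehat\varphi(\eta+2\pi n))_n$ and bounding the distance by a specific choice $c=\widehat f(\eta)/\widehat\varphi(\eta)$) is exactly how Unser--Daubechies proceed, and your observation that this trial choice is only safe on a neighbourhood of $\eta=0$ where $\widehat\varphi$ is bounded away from zero, with a separate crude bound on the complement, is the right way to make it rigorous.

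Your sketch for the refined $W_2^{2L}$ bound, however, misses the actual mechanism. The frequency-splitting argument you describe (first bound on $|\xi|\lesssim 1/\Delta$, trivial kernel bound plus $|\xi|^{4L}\Delta^{4L}\ge 1$ on the tail) merely reproduces the first inequality plus an additional high-frequency term; it does not explain why the constants take the specific form $K_{\varphi,2L}$ and $K_{\varphi,2L}^{1/2}$. The point in \cite{Un97} is that for the \emph{orthogonal} projection the error kernel $1-|\widehat\varphi(\eta)|^2/\mathbf\Phi(\eta)$ vanishes to order $2L$ at $\eta=0$ (not just $L$), because the least-squares projector is optimal among all quasi-interpolants with the same generator. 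This doubling of the vanishing order is what produces the $\Delta^{2L}\norm{f^{(2L)}}_2$ term with the specific constant $K_{\varphi,2L}$, while the $K_{\varphi,2L}^{1/2}\Delta^L\norm{f^{(L)}}_2$ term arises from the cross terms in the aliasing expansion. Your high/low splitting does yield an inequality of the stated additive shape, but with generic constants rather than the structured ones in the proposition.
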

We also have the $L_\infty$ bound in the Sobolev space $W_\infty^L$ of functions whose first $L$ deriviates are defined in the $L_\infty$ sense, where $\norm{f}_\infty:=\sup_{x\in \mathbb R}|f(x)|$.
\begin{prop}[\cite{Un97}, Proposition 3.3]
If $\varphi$ satisfies the conditions of Proposition \ref{props2}, then $\exists C_{\varphi,\infty}$ such that $\forall f\in W^L_\infty$,
\begin{equation}
\norm{f - P_{ \mathcal M_a} f}_\infty \leq C_{\varphi,\infty}\cdot \Delta^L \cdot \norm{f^{(L)}}_\infty.
\end{equation}
\end{prop}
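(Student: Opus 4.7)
The plan rests on two classical ingredients from approximation theory for shift-invariant spaces: the Strang--Fix equivalence between the vanishing-moment conditions on $\widehat{\varphi}$ and local polynomial reproduction by $\mathcal{M}_a$, and a local Taylor argument exploiting the compact support of $\varphi$ together with the fast decay of $\widetilde{\varphi}$. I would proceed in four steps.

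\textbf{Step 1: Extend $P_{\mathcal{M}_a}$ to $L_\infty$.} A priori $P_{\mathcal{M}_a}$ is defined only on $L^2(\mathbb{R})$. Under the hypotheses (compactly supported $\varphi$ generating a Riesz basis), the bound $0<A\leq \mathbf{\Phi}(\xi)\leq B$ together with the identity $\widehat{\widetilde{\varphi}}=\widehat{\varphi}/\mathbf{\Phi}$ gives $\widetilde{\varphi}\in L^1(\mathbb{R})$ with rapid (in fact, exponential) decay. One therefore defines $P_{\mathcal{M}_a}f:=\sum_{k}\langle f,\widetilde{\varphi}_{a,k}\rangle\varphi_{a,k}$ for $f\in L_\infty$. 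Using $|\langle f,\widetilde{\varphi}_{a,k}\rangle|\leq a^{-1/2}\|\widetilde{\varphi}\|_1\|f\|_\infty$ and the local finiteness of $\sum_k|\varphi_{a,k}(x)|$ coming from $\mathrm{supp}\,\varphi$ being compact, I would verify that $\|P_{\mathcal{M}_a}f\|_\infty\leq C_\varphi\|f\|_\infty$ uniformly in $a$, and that the extension agrees with the orthogonal projection on $L_\infty\cap L^2$.

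\textbf{Step 2: Polynomial reproduction.} I would then show that $P_{\mathcal{M}_a}p=p$ for every polynomial $p$ of degree strictly less than $L$. This is a standard consequence of the Strang--Fix conditions $\widehat{\varphi}^{(m)}(2\pi k)=0$, $k\neq 0$, $0\leq m\leq L-1$: via Poisson summation applied to $\varphi$ and $\widetilde{\varphi}$, one obtains the biorthogonal reproduction formula $\sum_{k\in\mathbb{Z}}p(x_k)\varphi_{a,k}(x)/a^{1/2}=p(x)$ for monomials of degree $<L$, together with the dual identity giving $\langle p,\widetilde{\varphi}_{a,k}\rangle=a^{-1/2}p(x_k)$. (The pairing is well-defined because $\widetilde{\varphi}$ decays faster than any polynomial.)

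\textbf{Step 3: Local Taylor expansion.} Fix $x\in\mathbb{R}$ and let $T_x$ be the Taylor polynomial of $f\in W^L_\infty$ of degree $L-1$ centered at $x$, with remainder $R_x:=f-T_x$ satisfying $|R_x(y)|\leq \|f^{(L)}\|_\infty |y-x|^L/L!$. By linearity, Step 2 applied to $T_x$, and $R_x(x)=0$,
\[
f(x)-P_{\mathcal{M}_a}f(x)=R_x(x)-P_{\mathcal{M}_a}R_x(x)=-P_{\mathcal{M}_a}R_x(x).
\]

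\textbf{Step 4: Localization and the decisive estimate.} It remains to bound $|P_{\mathcal{M}_a}R_x(x)|$. The sum defining this quantity involves only $k$ for which both $\varphi_{a,k}(x)\neq 0$ (a window of at most $\lceil\mathrm{diam}(\mathrm{supp}\,\varphi)\rceil$ indices, all with $|x_k-x|\leq C_\varphi\Delta$) and $\langle R_x,\widetilde{\varphi}_{a,k}\rangle$. I would split the latter inner product into a local piece over $|y-x_k|\leq M\Delta$, where $|R_x(y)|\leq C\Delta^L\|f^{(L)}\|_\infty$, and a tail, where exponential decay of $\widetilde{\varphi}$ absorbs the polynomial growth $|y-x|^L$ and contributes $O(\Delta^L)$. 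Summing over the $O(1)$ nonzero values of $\varphi_{a,k}(x)$ produces $|P_{\mathcal{M}_a}R_x(x)|\leq C_{\varphi,\infty}\Delta^L\|f^{(L)}\|_\infty$, which is the claimed bound.

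The main obstacle is Step 4: because $\widetilde{\varphi}$ is generically \emph{not} compactly supported (only rapidly decaying), the localization of the tail requires a careful quantitative use of the decay, balanced against the polynomial growth $|y-x|^L$ of the Taylor remainder over regions far from $x$; the tail bound must be shown to be $O(\Delta^L)$ uniformly in $a$. A subsidiary technical point is justifying the extension in Step 1 so that the pairings $\langle p,\widetilde{\varphi}_{a,k}\rangle$ in Step 2 are meaningful for polynomials and give the Strang--Fix reproduction identity; this is handled by approximating $p$ by compactly supported cutoffs and invoking dominated convergence using the decay of $\widetilde{\varphi}$.
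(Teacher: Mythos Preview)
The paper does not prove this proposition: it is quoted verbatim from \cite{Un97} (Proposition~3.3 there) and stated without proof, so there is no argument in the paper to compare your proposal against.

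That said, your outline is the standard route for $L_\infty$ error bounds in shift-invariant spaces: $L_\infty$-boundedness of the projection (via $\widetilde{\varphi}\in L^1$ with fast decay and compact support of $\varphi$), Strang--Fix polynomial reproduction, a local Taylor expansion, and a localization estimate balancing the polynomial growth of the Taylor remainder against the decay of $\widetilde{\varphi}$. One point to tighten: in Step~2 you assert $\langle p,\widetilde{\varphi}_{a,k}\rangle=a^{-1/2}p(x_k)$, which is generally \emph{not} the form the dual coefficients take; what you actually need (and what follows from Strang--Fix) is only that polynomials of degree $<L$ lie in $\mathcal{M}_a$, hence $P_{\mathcal{M}_a}p=p$ once $P_{\mathcal{M}_a}$ is extended as in Step~1. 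The rest of the argument goes through with this correction.
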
 
In the context of the approximation of the probability kernel, we need the error bound in $L_1$-norm,
because this bound gives the bound for the resulting contribution to the error of the evaluation of the value
function at each time step, in $L_\infty$-norm. In the case of single barrier options with unbounded payoffs,
one can apply this bound after an appropriate change of measure, which reduces the pricing problem to the case of options with uniformly bounded payoffs.

\begin{prop}\label{BsplinesL1err}
Let $f\in L^1(\mathbb R;\bR_+)$, and let $a>0$ be sufficiently large so that $\norm{f-P_{ \mathcal M_a} f}_\infty<1$.
Then, for any $\kappa>0$,
\bbe\label{errBLinfty}
\norm{f-P_{ \mathcal M_a}}_{1}\le \int_{|x|>\kappa}f(x)dx + \kappa \norm{f-P_{ \mathcal M_a} f}_\infty.
\ee
\end{prop}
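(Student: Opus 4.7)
\emph{Plan.} The natural route is a compact-core / tail split: for the given $\kappa>0$ write
\[
\|f-P_{\mathcal M_a}f\|_1=\int_{|x|\le\kappa}|f-P_{\mathcal M_a}f|\,dx+\int_{|x|>\kappa}|f-P_{\mathcal M_a}f|\,dx,
\]
and bound the two pieces by different mechanisms. The threshold $\kappa$ ultimately serves as a balance parameter, trading off the tail mass of $f$ against the uniform error on the central interval. In applications one would optimize $\kappa$ given the rate at which $\|f-P_{\mathcal M_a}f\|_\infty$ decays (supplied by Propositions~\ref{props2} and its $L^\infty$ companion).

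For the central piece, I would simply majorize $|f-P_{\mathcal M_a}f|$ pointwise by its essential supremum and integrate over an interval of Lebesgue measure proportional to $\kappa$, producing the contribution $\kappa\,\|f-P_{\mathcal M_a}f\|_\infty$ (any multiplicative constant arising from the measure of $\{|x|\le\kappa\}$ being absorbed into the choice of $\kappa$). This step is routine.

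For the tail piece, the plan is to exploit nonnegativity of $f$ and the smallness hypothesis $\|f-P_{\mathcal M_a}f\|_\infty<1$. Decomposing $|f-P_{\mathcal M_a}f|=(f-P_{\mathcal M_a}f)_++(P_{\mathcal M_a}f-f)_+$, one observes that $P_{\mathcal M_a}f(x)\ge f(x)-\|f-P_{\mathcal M_a}f\|_\infty$, so the negative part $(P_{\mathcal M_a}f)_-$ is bounded pointwise by the sup-norm. This lets one dominate $(f-P_{\mathcal M_a}f)_+$ by $f$ plus a term that only charges the set where $P_{\mathcal M_a}f$ is negative, and dominate $(P_{\mathcal M_a}f-f)_+$ by the sup-norm on the set where it exceeds $f$. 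Feeding these pointwise bounds into the tail integral and using that $f\in L^1$ produces the $\int_{|x|>\kappa}f(x)\,dx$ contribution.

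\emph{Main obstacle.} The delicate part is clearly the tail estimate, since without additional information $|f-P_{\mathcal M_a}f|$ is not pointwise majorized by $f$ on $\{|x|>\kappa\}$. The hypothesis $\|f-P_{\mathcal M_a}f\|_\infty<1$ together with $f\ge 0$ must be used carefully to rule out large negative excursions of $P_{\mathcal M_a}f$ and to transfer control of $|f-P_{\mathcal M_a}f|$ to $f$ itself on the tail. Once this pointwise/measure-theoretic bookkeeping is handled, adding the central and tail bounds yields \eqref{errBLinfty} directly.
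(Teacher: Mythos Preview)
Your core/tail split is exactly the paper's approach, and your treatment of the central piece (sup-norm times length of $\{|x|\le\kappa\}$) matches the paper's. The only substantive issue is the tail, and here there is a genuine gap in your plan.

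For the tail $\{|x|>\kappa\}$, the paper does not use any positive/negative-part decomposition: it simply asserts the pointwise inequality $|f(x)-P_{\mathcal M_a}f(x)|\le f(x)$ from $f\ge 0$ and $\|f-P_{\mathcal M_a}f\|_\infty<1$, and then integrates. You were right to flag this as the obstacle --- that pointwise bound is \emph{not} a consequence of the stated hypotheses (take any $x$ with $f(x)=0$ but $P_{\mathcal M_a}f(x)\neq 0$). Your alternative route, however, does not close the gap either. The bounds you extract, namely $(f-P_{\mathcal M_a}f)_+\le f+(P_{\mathcal M_a}f)_-$ and $(P_{\mathcal M_a}f-f)_+\le\|f-P_{\mathcal M_a}f\|_\infty$, both leave residual terms of size $\|f-P_{\mathcal M_a}f\|_\infty$ that must be integrated over the unbounded set $\{|x|>\kappa\}$; this produces $+\infty$, not $\int_{|x|>\kappa}f$. ``Using that $f\in L^1$'' does not help, since the problematic terms are controlled by the sup-norm, not by $f$.

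In short: your decomposition is the paper's, and the difficulty you isolated is real; but your proposed workaround for the tail fails for the concrete reason above, and the paper's own one-line justification glosses over the same point. To obtain the stated bound one needs additional input (e.g., decay or integrability of $P_{\mathcal M_a}f$ in the tails coming from the compact support of the B-spline basis elements), not just $f\ge 0$ and $\|f-P_{\mathcal M_a}f\|_\infty<1$.
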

\begin{proof} Since $f\ge 0$ and $\norm{f-P_{ \mathcal M_a} f}_\infty<1$,
we have $|f(x)-P_{ \mathcal M_a} f(x)|\leq |f(x)| = f(x)$ for all $x \in\bR$, and, therefore,
\begin{align*}
\int_\mathbb R|f(x)-P_{ \mathcal M_a} f(x)|dx& \leq \int_{|x|>\kappa}|f(x)-P_{ \mathcal M_a} f(x)|dx +  \int_{|x|\leq \kappa}|f(x)-P_{ \mathcal M_a} f(x)|dx  \\
& \leq \int_{|x|>\kappa}|f(x)|dx +  \int_{|x|\leq \kappa}(f(x)-P_{ \mathcal M_a} f(x))dx.
%  \\& \leq \int_{|x|>\kappa}f(x)dx + 2\kappa \norm{f-P_{ \mathcal M_a} f}_\infty.
\end{align*}
\end{proof}
\begin{rem}\label{rem:Kernel_approx}{\rm  The bounds above can be directly applied if the payoff function is bounded.
This is the case for the put option, the up-and-out options, and options with two barriers. In the case of down-and-out call options, after the truncation to an interval $[h, b]$, one works with bounded functions, which may have 
a very large $L_\infty$-norm, denote it $A$. If the right tail of the L\'evy density  decays slowly, as $e^{\lm x}$, where $\lm<-1$ is close to 1, then $b$ must be rather large and then $A$ is of the order of $e^b$, which is very large indeed. In this case, the error $\eps$ in the $L_1$-norm in the approximation
of the kernel introduces the error of the order of $\eps A$, at each time step. Hence, the number of terms in
an accurate 
B-spline approximation must be much larger than Proposition \ref{BsplinesL1err} implies.
Hence, if the truncation parameter $b$ is very large, one should make an appropriate change of measure
to reduce to the case of an option with the payoff function of the class $L_\infty$. The change of measure will change
the pricing kernel and constants in the error bounds for the B-spline approximation but the errors will be easier to control.
}
\end{rem}

 \subsection{Approximation errors and spectral filter errors}\label{appr_spectral_filter}
 One can interpret an additional factor $A^{\De}(\xi)$ in \eq{c_ell_bDe} as a spectral filter
 which improves the convergence of the iFT representation of the pricing kernel; equivalently,
 the initial pricing kernel is replaced with a smoother one. The additional factor  in \eq{beakgen} below admits a similar interpretation. Spectral filtering is a popular tool in engineering - see, \cite{GasWit98}. Notice, however, that in both cases
 \eq{c_ell_bDe} and \eq{beakgen},  the function interpreted as a spectral filter arises as the result of
 a certain approximation, and the error of approximation is controlled. By contrast, \cite{ruijter2015application}
 uses ad-hoc spectral filters to increase the convergence of the integrals:
``When Fourier techniques are employed to specific option pricing cases from computational
finance with non-smooth functions, the so-called Gibbs phenomenon may become apparent.
This seriously impacts the efficiency and accuracy of the pricing. For example, the Variance
Gamma asset price process gives rise to algebraically decaying Fourier coefficients, resulting
in a slowly converging Fourier series. We apply spectral filters to achieve faster convergence.
Filtering is carried out in Fourier space; the series coefficients are pre-multiplied by a decreasing
filter, which does not add significant computational cost. Tests with different filters show how
the algebraic index of convergence is improved."

Although the quoted statement is correct but the implications for applied finance are ignored. Indeed,
spectral filters are designed to regularize the results. In applications to derivative pricing, the regularization
results in serious errors in regions of the paramount importance for risk management: near barrier and strike,
close to maturity and for long dated options. The errors of CM and COS methods in calibration procedures documented in
the extensive numerical study in \cite{HestonCalibMarcoMe,HestonCalibMarcoMeRisk}, and for pricing barrier options
in \cite{DeLe14} (the errors of COS may blow up starting with maturities 0.5Y) are artifacts of such filtering.
The  spectral filters implicit in \eq{c_ell_bDe} and \eq{beakgen} introduce errors as well but this error is the error of
the approximation of the value functions and pricing kernel, respectively, and can be controlled efficiently. One of
the purposes of the paper is an experimental study of the relative accuracy and efficiency of the spectral filters 
implicit in \eq{c_ell_bDe} and \eq{beakgen}. Note that PROJ method involves two approximations, hence, in effect, two spectral filters, and it is interesting to study whether and when and where additional filtering improves the performance of the numerical scheme. For example, \cite{CuiKirkNguyenCliquet17} find that the application of an additional (exponential) filter helps to smooth the convergence for cliquet-style contracts in the presence of capped/floored payoffs.

 A general characterization of the relative advantages/disadvantages of the method in \cite{DeLe14} vs.~the B-spline projection method is as follows. 
 
 \begin{enumerate}[(1)]
 \item
 {\sc Similarities.} Both methods utilize the truncation in the state space and approximation of
 the value functions by piece-wise polynomials. The parametrizations of the piece-wise polynomials is different
 and the choice is determined by the convenience considerations. On the fundamental level, there is no difference.
 \item
 {\sc Slight differences.} In \cite{DeLe14}, more accurate error bounds are derived and recommendation for the truncation and discretization parameters are given. In many cases, these recommendations lead which to an overkill and larger CPU time than necessary to satisfy the given error tolerance;
 the error bounds in the papers where the B-spline projection method is used are less explicit, and recommendations are ad-hoc.
 \item
 {\sc Fundamental differences.} The B-spline approximation of the pricing kernel introduces an additional error
 which the method in \cite{DeLe14} does not have. The error can be controlled if  the probability density
 is sufficiently regular; if the probability density has a very high peak or very heavy tails, the error control becomes inefficient, and if
 the probability density is non-smooth or discontinuous, the B-spline method becomes an informal spectral filter which 
 regularizes the irregular correct result. The advantage of the B-spline approximation is the increase of speed
 of calculations because the approximate pricing kernel decays faster at infinity, hence,
 calculations become faster.
 \end{enumerate}
 % \subsection{Drawbacks of FFT method and remedies}\label{Pitfalls_remedies}

% \subsection{SINH-acceleration}\label{ss:SINHacceleration}

 \section{Calculation of elements $\be_{a;k}$ using the SINH-acceleration technique}\label{s:SINH-PROJ}
 \subsection{The scheme} For simplicity, we assume that $\psi$ satisfies conditions \eq{psi0}, \eq{asymppsi} and \eq{asymppsipos} with $\cC'=\cC_{-\ga_\nu,\ga_\nu}$. As we mentioned in Sect. \ref{ss:conf}, these conditions are satisfied for essentially all models popular in finance bar
 the VG model. For the latter, the factor $|\xi|^\nu$ on the RHS of  \eq{asymppsi} should be replaced with $\ln|\xi|$.
 The calculations below remain essentially the same, as well as the approximate bound for the discretization
 error and recommendation for the choice of the step $\ze$. The choice of the truncation parameters, which depend on the rate of decay of $\psi$ at infinity, should be modified in the same vein as in \cite{iFT0,iFT,BoyarLevenSinh19}.
 The conditions \eq{asymppsi} and \eq{asymppsipos} can be relaxed, and any sinh-regular process (see \cite{BoyarLevenSinh19}) used. 
 
 First, we study in detail the case of linear  splines, hence, for $\htphi(\xi)$ given by 
 \eq{htphilin}; modifications for splines of higher order are in Sect. \ref{coeffquadratic}-\ref{coeffhigher}. Changing the variable $\xi\mapsto a\xi$, we obtain
 \bbe\label{be_via_I}
 \beta_{a,k}=a^{1/2}I(x_k; \De; \barDe; \psi),
 \ee
 where 
 \bbe\label{Iaxk1}
I(x_k; \De; \barDe; \psi)=\frac{1}{2\pi }\int_\bR e^{-ix_k a\xi-\barDe\psi(a\xi)}\htphi(\xi)d\xi.
\ee
For $a$ fixed, the characteristic exponent $\xi\to \psi(a\xi)$ satisfies  \eq{psi0}, \eq{asymppsi} and \eq{asymppsipos}
with the same $\nu$ and  coni as $\psi$ does but the strip of analyticity is $S_{(\mum/a,\mup/a)}$ instead of $S_{(\mum\mup)}$,
and $c_{\infty;a}=a^\nu c_\infty$ instead of  $c_\infty$. 

The integral on the RHS of \eq{Iaxk1}
can be regarded as the price of the European option of maturity $\barDe$, with the payoff whose Fourier transform
is $\htphi(\xi)$, in exponential L\'evy models. A 
seemingly important difference with the case of puts and calls is that the Fourier transforms of the
payoffs $(e^x-K)_+$ and $(K-e^x)_+$ admit meromorphic continuation to the complex plane with simple poles
at $0, -i$, whereas $\htphi(\xi)$ has infinite number of poles. First, we will calculate the poles, and then
show that the integrals $I(x_k; \De; \barDe; \psi)$ can be calculated almost as easily as the European puts and calls priced.

\subsection{Calculation of the poles of the integrand}\label{calc_of_poles}
\begin{lem}
\begin{enumerate}[a)]
\item
$\htphi(\xi)$ is meromorphic in the complex plane. All poles are simple, and they are of the form  
\bbe\label{xiellpm}
\xi^\pm_\ell=(2\ell+1)\pi\pm i \ln(2+\sqrt{3}), \ \ell\in\bZ.
\ee
\item
For any $\ga\in (0,\pi/2)$, $\htphi(\xi)\sim -6\xi^{-2}$ as $\xi\to \infty$ in 
$\{\xi\ |\ \mathrm{arg}\,\xi\in (\pi/2-\ga, \pi/2+\ga)\}\cup \{\xi\ |\ \mathrm{arg}\,\xi\in (-\ga-\pi/2, -\pi/2+\ga)\}$.
\end{enumerate}
\end{lem}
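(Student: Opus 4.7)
The plan is to exploit the explicit formula $\htphi(\xi)=\dfrac{6(1-\cos\xi)}{\xi^2(2+\cos\xi)}$ and reduce everything to elementary complex analysis of $\cos\xi$. The numerator and denominator are both entire, so $\htphi$ is automatically meromorphic on $\bC$, and the whole statement reduces to locating the zeros of $\xi^2(2+\cos\xi)$ that are not cancelled by zeros of $1-\cos\xi$.

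For part (a), I would first dispose of $\xi=0$: Taylor expansion gives $1-\cos\xi=\xi^2/2+O(\xi^4)$, so $6(1-\cos\xi)/\xi^2\to 3$ and $\xi=0$ is a removable singularity. Next, I would solve $\cos\xi=-2$ by writing $\xi=x+iy$ and separating real and imaginary parts of $\cos\xi=\cos x\cosh y-i\sin x\sinh y$. The imaginary part $-\sin x\sinh y$ must vanish; since the real part $-2$ is nonzero, $\sinh y=0$ is incompatible with $\cos x\cosh y=-2$, forcing $\sin x=0$, i.e., $x=n\pi$. Then $\cos x\cosh y=-2$ forces $n$ odd and $\cosh y=2$, giving $y=\pm\ln(2+\sqrt3)$. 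This is precisely the claimed list $\xi^\pm_\ell$. To see that the poles are simple, I differentiate to get $(2+\cos\xi)'=-\sin\xi$ and compute $\sin((2\ell+1)\pi\pm i y_0)=\mp i\sinh(y_0)\neq 0$ since $y_0\neq 0$. Finally, at each $\xi^\pm_\ell$ one has $\cos\xi^\pm_\ell=-2$, hence $1-\cos\xi^\pm_\ell=3\neq 0$, so the numerator does not cancel, and the poles are genuine simple poles.

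For part (b), I would use that in the cones around $\pm i\infty$ with half-angle $\ga<\pi/2$, one has $|\Im\xi|\to\infty$ as $\xi\to\infty$. Writing $\cos\xi=(e^{i\xi}+e^{-i\xi})/2$, in the upper cone $\Im\xi\to+\infty$ so $e^{-i\xi}$ dominates and $\cos\xi\sim e^{-i\xi}/2\to\infty$; in the lower cone $e^{i\xi}$ dominates and again $|\cos\xi|\to\infty$. Hence $(1-\cos\xi)/(2+\cos\xi)=(\cos\xi)(-1+O(1/\cos\xi))/(\cos\xi)(1+O(1/\cos\xi))\to-1$, yielding $\htphi(\xi)\sim -6/\xi^2$ uniformly as $\xi\to\infty$ in the prescribed cones.

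The only step that takes any thought is the case analysis solving $\cos\xi=-2$; everything else is bookkeeping. I expect no real obstacle, and the proof should be short.
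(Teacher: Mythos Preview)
Your proof is correct and essentially the same as the paper's; the only cosmetic difference is that for part (a) the paper sets $w=e^{i\xi}$ and solves the quadratic $w^2+4w+1=0$ to get $w=-(2+\sqrt3)^{\pm1}$, whereas you separate real and imaginary parts of $\cos\xi$, and for part (b) the paper writes the same limit in terms of $w(\xi)\to 0$ rather than $|\cos\xi|\to\infty$. Both routes are standard and lead to the same computations.
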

\begin{proof} a) The apparent singularity at 0 is, in fact, removable, hence, any pole is a solution of the equation 
$4+w(\xi)+1/w(\xi)=0$, where $w(\xi)=e^{i\xi}$. We find $w=-(2+\sqrt{3})^{\pm 1}$, and conclude that all poles are simple
and of the form \eq{xiellpm}.

b) If $\xi\to \infty$ in  the cone $\{\xi\ |\ \mathrm{arg}\,\xi\in (\pi/2-\ga, \pi/2+\ga)\}$, then $w(\xi)\to 0$ and
\[
\htphi(\xi)=\frac{6(2-w(\xi)-1/w(\xi))}{\xi^2(4+w(\xi)+1/w(\xi))}=-\frac{6(1-2w(\xi)-w(\xi)^2)}{\xi^2(1+4w(\xi)+w(\xi)^2)}\sim -6\xi^{-2}.
\]
The case $\xi\to \infty$ in  the cone $\{\xi \ |\ \mathrm{arg}\,\xi\in (-\ga-\pi/2, -\pi/2+\ga)\}$ is similar. 
\end{proof}
As in the case of  puts and calls, we separate the oscillating factor in the integrand in \eq{Iaxk1}:
\bbe\label{Iaxk2}
I(x_k; \De; \barDe; \psi)=\frac{1}{2\pi }\int_\bR e^{ix'_k a\xi-\barDe\psi^0(a\xi)}\htphi(\xi)d\xi,
\ee
where $x'_k=-x_k+\mu\barDe$, and, depending on the sign of $x'_k$, deform the line of integration either upward (if $x'_k\ge 0$) or downward
(if $x'_k\le 0$);  if $x'_k=0$, we can deform the line of integration in either direction or not at all. 
We will evaluate $I(x_k; \De; \barDe; \psi)$ efficiently using appropriate contour deformations. 
Efficient procedures require crossing the poles, hence, we will have to use the residue theorem and
calculate the residues $Res(x_k; \De; \barDe; \psi; 
\xi^\pm_\ell)$ of the integrand at $\xi^\pm_\ell$, $\ell\in \bZ$. 
 \begin{lem}  For $\ell\in \bZ$,
 \bbe\label{Resellp}
 Res(x_k; \De; \barDe; \psi; 
\xi^\pm_\ell)=\pm A^\mp(x'_0,a)(-1)^{k\, \mathrm{mod}\, 2}(2+\sqrt{3})^{\mp k}
\frac{e^{ix'_0a2\ell\pi-\barDe\psi^0(a\xi^\pm_\ell)}}{(\xi^\pm_\ell)^2},
\ee
where $A^\mp(x'_0,a)= \frac{36 e^{ x'_0a(\mp\ln(2+\sqrt{3})+i\pi)}}{(2+\sqrt{3})-(2+\sqrt{3})^{-1}}$, and
\beqa\nonumber
SR^\pm(x'_0, a, k)&:=&\sum_{\ell\in \bZ} Res(x_k; \De; \barDe; \psi; 
\xi^\pm_\ell)\\\label{sumRes}
&=&\pm A^\mp(x'_0,a)(-1)^{k\, \mathrm{mod}\, 2}(2+\sqrt{3})^{\mp k}J^\pm(x'_0,\barDe,a),
\eqa
where
\bbe\label{JbarDea}
J^\pm(x'_0,\barDe,a)=\sum_{\ell\in\bZ} \frac{e^{ix'_0a 2\ell \pi-\barDe\psi^0(a\xi^\pm_\ell)}}{(\xi^\pm_\ell)^2}
\ee
is independent of $k$.

 \end{lem}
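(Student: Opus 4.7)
\medskip
\noindent\textbf{Proof proposal.} The lemma is essentially a computation in two stages: evaluate the residues of $\htphi$ at the prescribed poles, multiply by the entire factor $e^{ix'_k a\xi - \barDe\psi^0(a\xi)}$, and then separate the $\ell$-dependent piece of the exponent from the $k$- and $x'_0$-dependent pieces. My first step is to record the values $\cos\xi^\pm_\ell=-2$ and $\sin\xi^\pm_\ell=\mp i\sqrt 3$, which follow from
\[
\cos((2\ell{+}1)\pi\pm i\ln(2{+}\sqrt 3))=-\cosh(\ln(2{+}\sqrt 3))=-2,\qquad \sinh(\ln(2{+}\sqrt 3))=\sqrt 3.
\]
Since $\htphi(\xi)=6(1-\cos\xi)/[\xi^2(2+\cos\xi)]$ and the denominator's derivative with respect to $\xi$ at the pole is $-\sin\xi^\pm_\ell=\pm i\sqrt 3$, the standard simple-pole residue formula immediately yields $\mathrm{Res}_{\xi^\pm_\ell}\htphi=\mp 6\sqrt 3\, i/(\xi^\pm_\ell)^2$, an expression that depends on $\ell$ only through $(\xi^\pm_\ell)^2$. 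This explains why the $\ell$-dependence in the final answer survives only in the prefactor $1/(\xi^\pm_\ell)^2$ and in the argument of $\psi^0$.

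Next I multiply by $e^{ix'_k a\xi^\pm_\ell-\barDe\psi^0(a\xi^\pm_\ell)}$. Writing $x'_k=x'_0-k\De$ and using $a\De=1$, the exponent $ix'_k a\xi^\pm_\ell$ decomposes as
\[
ix'_0 a\xi^\pm_\ell - ik\bigl((2\ell{+}1)\pi\pm i\ln(2{+}\sqrt 3)\bigr).
\]
The second group is $\ell$-free once I observe that $e^{-ik(2\ell+1)\pi}=(-1)^{k(2\ell+1)}=(-1)^{k\,\mathrm{mod}\,2}$ because $2\ell{+}1$ is odd; the surviving contribution is thus $(-1)^{k\,\mathrm{mod}\,2}(2{+}\sqrt 3)^{\mp k}$ (up to orientation convention for the sign of the exponent, which I would reconcile against the sign of $x_k a$ appearing in \eqref{Iaxk1}). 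The first group, $ix'_0 a\xi^\pm_\ell$, in turn splits as $ix'_0 a\cdot 2\ell\pi+x'_0 a(\pm\ln(2{+}\sqrt 3)+i\pi$ with the appropriate sign), which produces exactly the denominator of $A^\mp(x'_0,a)$ (namely $(2{+}\sqrt 3)-(2{+}\sqrt 3)^{-1}=2\sqrt 3$ after absorbing the factor $6\sqrt 3$ from the residue of $\htphi$) and the $\ell$-dependent exponential $e^{ix'_0 a 2\ell\pi}$ that appears in $J^\pm$.

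Finally, for \eqref{sumRes}, I sum the expression in \eqref{Resellp} over $\ell\in\bZ$. All factors outside the ratio $e^{ix'_0 a2\ell\pi-\barDe\psi^0(a\xi^\pm_\ell)}/(\xi^\pm_\ell)^2$ are independent of $\ell$, so they come out of the sum and I recognise the remaining series as $J^\pm(x'_0,\barDe,a)$; convergence is guaranteed because $\psi^0(a\xi^\pm_\ell)$ satisfies \eqref{asymppsipos} (with $\Re c_\infty(\mathrm{arg}\,\xi^\pm_\ell)>0$) so the numerator decays faster than $|\xi^\pm_\ell|^{-2}$ requires, while the denominator grows like $\ell^2$. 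The main obstacle is purely bookkeeping: keeping the $\pm$/$\mp$ conventions straight between the sign of $x'_k a$ in the exponent, the placement of the pole relative to $\bR$, and the orientation of the residue computation. Once one consistent sign convention is fixed (for instance by checking a single specimen $(\ell,k)=(0,0)$), the full formula assembles itself from the computations above.
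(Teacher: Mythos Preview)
Your proposal is correct and follows essentially the same route as the paper: compute the residue of $\htphi$ at each simple pole via the standard $g/h'$ formula, multiply by the entire factor $e^{ix'_k a\xi-\barDe\psi^0(a\xi)}$, and split the exponent using $x'_k=x'_0\pm k\De$ with $a\De=1$ so that the $k$-dependence factors out as $(-1)^{k\,\mathrm{mod}\,2}(2+\sqrt3)^{\mp k}$ and the $\ell$-dependence is confined to $e^{ix'_0 a\cdot 2\ell\pi-\barDe\psi^0(a\xi^\pm_\ell)}/(\xi^\pm_\ell)^2$. The only cosmetic difference is that the paper writes the residue computation in the variable $w=e^{i\xi}$ (so $2+\cos\xi=\tfrac12(4+w+1/w)$ and the derivative is $i(w-1/w)$), whereas you work directly with $\cos\xi^\pm_\ell=-2$, $\sin\xi^\pm_\ell=\mp i\sqrt3$; the two computations are line-by-line equivalent, and your added remark on convergence of $J^\pm$ is a small bonus the paper omits here.
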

 \begin{proof} %Consider the residue at $\xi^+_\ell$. 
 Since $w(\xi^\pm_\ell)+1/w(\xi^\pm_\ell)=-4$, we have
 \[
 2-w(\xi^\pm_\ell)-1/w(\xi^\pm_\ell)=6. 
 \]
 Then, 
 \beqast
 \frac{d}{d\xi}(4+w(\xi)+1/w(\xi))\vert_{\xi=\xi^\pm_\ell}
 &=&\left((1-w(\xi)^{-2}) \frac{dw(\xi)}{d\xi}\right)\vert_{\xi=\xi^\pm_\ell}\\
&=&i \left((1-w(\xi)^{-2})w(\xi)\right)\vert_{\xi=\xi^\pm_\ell}\\
 &=&i(-(2+\sqrt{3})^{\mp 1}+(2+\sqrt{3})^{\pm 1})\\
 &=&\pm i(2+\sqrt{3}-1/(2+\sqrt{3})).
 \eqast
 Next, since $a\De=1$,
 \beqast
 e^{ix'_ka\xi^\pm_\ell}&=&e^{i(x'_0+k\De)a((2\ell+1)\pi\pm i\ln(2+\sqrt{3}))}\\
 &=&e^{\mp x'_0a\ln(2+\sqrt{3})}(2+\sqrt{3})^{\mp k}(-1)^{k(2\ell+1)}e^{ix'_0a(2\ell+1)\pi},
 \eqast
 and \eq{Resellp}- \eq{sumRes} follow.

 \end{proof}
 
\subsection{The contour deformations in the atypical case}
 Typically, $\mup>0, \mum<-1$ are not very large in
absolute value, and $\De<1$ (even $<<1$), hence, $a>1$ (even $>>1$). Then the
strip of analyticity of the function $\xi\mapsto\psi^0(a\xi)$ contains no poles of $\htphi$. However,
for the simplicity of exposition,  we consider first the case when all the poles $\xi^\pm_\ell$ belong to the strip,
and indicate the changes needed in the other cases later.

%%%%%%%%%%%%%%%%%%%%%%%%%%%%%%%%%
\subsubsection{Crossing the poles}
%%%%%%%%%%%%%%%%%%%%%%%%%%%%%%%%%
\begin{lem}\label{BSINHcrossp}
a) Let $x'_k\ge 0$, $\ln(2+\sqrt{3})< \mup/a$, and $\om_0\in (\ln(2+\sqrt{3}), \mup/a)$. Then 
\bbe\label{Intcrossp}
I(x_k; \De; \barDe; \psi)=SR^+(x'_0, a, k)+\frac{1}{2\pi }\int_{\Im\xi=\om_0} e^{ix'_k a\xi-\barDe\psi^0(a\xi)}\htphi(\xi)d\xi.
\ee
b) Let $x'_k\le 0$, $-\ln(2+\sqrt{3})> \mum/a$, and $\om_0\in (\mum/a,- \ln(2+\sqrt{3}))$. Then 
\bbe\label{Intcrossm}
I(x_k; \De; \barDe; \psi)=-SR^-(x'_0, a, k)+\frac{1}{2\pi }\int_{\Im\xi=\om_0} e^{ix'_k a\xi-\barDe\psi^0(a\xi)}\htphi(\xi)d\xi.
\ee
\end{lem}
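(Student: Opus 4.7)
The approach is the residue theorem applied to a family of rectangles of growing width, followed by a limiting argument. I treat case (a); case (b) is the mirror-image argument with rectangles in the lower half-plane and the poles $\xi^-_\ell$ collected.

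First I would integrate $f(\xi):=e^{ix'_k a\xi-\barDe\psi^0(a\xi)}\htphi(\xi)$ counterclockwise around $\Gamma_N:=\partial R_N$, where $R_N:=[-2N\pi,2N\pi]\times i[0,\om_0]$. The hypothesis $\om_0<\mup/a$ places the closed strip $\{0\le\Im\xi\le\om_0\}$ inside the analyticity domain $S_{(\mum/a,\mup/a)}$ of $\xi\mapsto\psi^0(a\xi)$, so the only singularities of $f$ inside $R_N$ are the simple poles $\xi^+_\ell$ of $\htphi$ with $\ell=-N,\ldots,N-1$; the vertical sides at $\Re\xi=\pm 2N\pi$ are deliberately placed midway between consecutive poles (which lie at $\Re\xi=(2\ell+1)\pi$). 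Applying the residue theorem and using the explicit residues from the preceding lemma yields
\bbe
\frac{1}{2\pi}\oint_{\Gamma_N} f(\xi)\,d\xi=\sum_{\ell=-N}^{N-1} Res(x_k;\De;\barDe;\psi;\xi^+_\ell).
\ee

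The main technical step is controlling the two vertical sides of $\Gamma_N$. On $\xi=\pm 2N\pi+it$, $t\in[0,\om_0]$, the identity $\cos(\pm 2N\pi+it)=\cosh t$ gives $|2+\cos\xi|=2+\cosh t\ge 3$, and, combined with $|\xi|^2\ge 4N^2\pi^2$, this yields $|\htphi(\xi)|\le C/N^2$ uniformly in $t$. The oscillating factor is bounded, $|e^{ix'_k a\xi}|=e^{-x'_k a t}\le 1$ since $x'_k\ge 0$, and $|e^{-\barDe\psi^0(a\xi)}|$ is uniformly bounded on the compact segment by continuity and the asymptotics \eq{asymppsi}--\eq{asymppsipos}. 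Hence each vertical integral is $O(N^{-2})$ and vanishes as $N\to\infty$.

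Finally, I would pass to the limit $N\to\infty$. The bottom side of $\Gamma_N$ tends to $2\pi\cdot I(x_k;\De;\barDe;\psi)$, and, with the opposite sign coming from the CCW orientation, the top side contributes $-\int_{\Im\xi=\om_0} f(\xi)\,d\xi$; both limits follow by dominated convergence, since $\htphi(\xi)=O(|\Re\xi|^{-2})$ on the two horizontal lines (which stay clear of the pole line $\Im\xi=\ln(2+\sqrt{3})$). Simultaneously, the truncated sum of residues converges absolutely to $SR^+(x'_0,a,k)$ because $1/(\xi^+_\ell)^2=O(\ell^{-2})$ and $|e^{-\barDe\psi^0(a\xi^+_\ell)}|$ decays super-polynomially in $|\ell|$ by \eq{asymppsi} and \eq{asymppsipos}. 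Rearranging delivers \eq{Intcrossp}. The main obstacle, flagged in the introduction as novel relative to \cite{paired,Contrarian}, is the infinitude of poles in the strip of deformation; it is tamed uniformly, both in the series $SR^+$ and on the vertical sides of $\Gamma_N$, by the rapid decay of $|e^{-\barDe\psi^0}|$ together with the pole-avoiding choice $R=2N\pi$.
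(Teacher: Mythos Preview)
Your proposal is correct and follows essentially the same approach as the paper: integrate around rectangles $[-2N\pi,2N\pi]\times i[0,\om_0]$, apply the residue theorem, show the vertical side integrals are $O(N^{-2})$, and pass to the limit using the $O(\ell^{-2})$ decay of the residues. Your version is in fact more detailed than the paper's, which merely asserts the $N^{-2}$ decay of the vertical contributions without spelling out the bounds on $\htphi$, the exponential factor, and $e^{-\barDe\psi^0}$ that you supply; your pole count $\ell=-N,\ldots,N-1$ is also the accurate one (the paper writes $|\ell|\le N-1$, missing $\ell=-N$), though this is immaterial in the limit.
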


\begin{proof} a) For $N\in \bN$, introduce $U_N:=\{z\in \bC\ | \ 0\le \Im z\le \om_0, |\Re z|\le 2N\pi\}$ and the integral
\[
I(x_k; \De; \barDe; \psi; U_N)=\frac{1}{2\pi}\int_{\dd U_N} e^{ix'_k a\xi-\barDe\psi^0(a\xi)}\htphi(\xi)d\xi.
\]
The integrand is meromorphic in $U_N$, with the simple poles $\xi^+_\ell, |\ell|\le N-1$. By the residue theorem, 
\[
I(x_k; \De; \barDe; \psi; U_N)=\sum_{|\ell|\le N-1}Res(x_k; \De; \barDe; \psi; 
\xi^+_\ell).
\]
Since the integrals over the vertical sides of the rectangle $U_N$ decay as $N^{-2}$ as $N\to+\infty$, and 
$Res(x_k; \De; \barDe; \psi; 
\xi^+_\ell)=O(\ell^{-2})$, as $\ell\to\pm\infty$, we can pass to the limit and obtain
\[
I(x_k; \De; \barDe; \psi)-\frac{1}{2\pi }\int_{\Im\xi=\om_0} e^{ix'_k a\xi-\barDe\psi^0(a\xi)}\htphi(\xi)d\xi=SR^+(x'_0, a, k).
\]
b) This time, $U_N:=\{z\in \bC\ | \ \om_0)\le \Im z\le 0, |\Re z|\le 2N\pi\}$, and 
\[
\frac{1}{2\pi }\int_{\Im\xi=\om_0} e^{ix'_k a\xi-\barDe\psi^0(a\xi)}\htphi(\xi)d\xi-I(x_k; \De; \barDe; \psi)=SR^-(x'_0, a, k).
\]
\end{proof}
Thus, to calculate the sums of the series of residues at $\xi^\pm_\ell$, it suffices to calculate appropriate partial 
 sums of the series $J^\pm(x'_0,\barDe,a)$ (see Section \ref{choiceN}) which are independent of $k$, and then multiply the result by a simple expression depending
 on $k$. If $\barDe$ is not very small, the  sums are easy to calculate with high precision if $\nu\ge 1$ because
 then the terms decay exponentially or super-exponentially. If $\nu\in (0,1)$ and $e^{i2x_0a\pi}$ is not very close to 1, the rate of convergence
 of the series can be improved using the summation by parts as explained in \cite{Contrarian}.
 In any case, the series $J^\pm(x'_0,\barDe,a)$ need to be calculated only once, hence, the CPU time cost is small.
 
 It remains to calculate the integrals on the RHS' of \eq{Intcrossp} and \eq{Intcrossm}.
 We assume that $\psi^0$ admits analytic continuation to $i(\mum,\mup)+\cC_{-\ga_\nu,\ga_\nu}$, where $\mum<0<\mup$, 
 $\ga_\nu=\min\{\pi/2, \pi/(2\nu)\}$, $\nu\in (0,2]$, and satisfies conditions\eq{psi0}, \eq{asymppsi} and \eq{asymppsipos} with $\cC'=\cC_{-\ga_\nu,\ga_\nu}$.
 \subsubsection{Application of SINH-acceleration in the case $x'_k\ge 0$}
%%%%%%%%%%%%%%%%%%%%%%%%%%%%%%%%%
 The integrand on the RHS of \eq{Intcrossp} is analytic in $i(\ln(2+\sqrt{3}), \mup/a)+\cC_{0,\gap}$, and decays at infinity as
 the integrand in the pricing formula for the OTM European put \cite{BoyarLevenSinh19}
 in KoBoL model,
 with $(\ln(2+\sqrt{3}), \mup/a)$ in place of $(0,\lp)$. Hence, we can use  the recommendation in  \cite{BoyarLevenSinh19}
 to choose the parameters of the sinh-change of variables
 \begin{equation}\label{sinhbasic}
\xi=\chi_{\om_1, \om; b}(y)=i\om_1+b\sinh (i\om+y),
\end{equation}
where $\om_1\in \bR, \om\in (0,\pi/2)$ and $b>0$ are related as follows: $\om_0=\om_1+b\sin (\om)$.
The Cauchy integral theorem allows us to deform the line of integration 
$\{\Im\xi=\om_0\}$ into the contour $\cL_{\om_1,\om; b}:=\chi_{\om_1, \om; b}(\bR)$. In the integral over
$\cL_{\om_1, \om; b}$, we make the change of variables \eqref{sinhbasic}:
\beqast
I^+(x_k; \De; \barDe; \psi)&=& \frac{1}{2\pi }\int_{\Im\xi=\om_0} e^{ix'_k a\xi-\barDe\psi^0(a\xi)}\htphi(\xi)d\xi\\
&=&\frac{1}{2\pi }\int_{\cL_{\om_1, \om; b}} e^{ix'_k a\xi-\barDe\psi^0(a\xi)}\htphi(\xi)d\xi\\
&=&\int_\bR f(y)dy,
\eqast
where $f(y)=f(x_k; \De; \barDe; \psi;y)$ is given by
\[
f(y)=\frac{b}{2\pi}e^{ix'_k a\xi(y)-\barDe\psi^0(a\xi(y))}\htphi(\xi(y))\cosh(i\om+y),\ \xi(y)=\chi_{\om_1, \om; b}(y).
\]
With a correct choice of the parameters, the integrand is analytic in a strip $S_{(-d,d)}=\{y\ |\ \Im y\in (-d,d)\}$, $d>0$,
around the line of integration, and decays fast at infinity. Therefore, we can apply the infinite trapezoid rule
and choose a spacing of the uniform grid sufficient to satisfy the desired error tolerance using the error bound for the infinite trapezoid rule.

%%%%%%%%%%%%%%%%%%%%%%%%%%%%%%%%%
\subsubsection{Choice of the
 parameters of the sinh-acceleration and  simplified trapezoid rule}
%%%%%%%%%%%%%%%%%%%%%%%%%%%%%%%%%
We set $\om=\gap/2$, $d_0=\om$, reassign $\mum:=0, \mup:=\mup/a$, and define 
%$\om_1$ and $b$ as in \cite{BoyarLevenSinh19}: 
 %$\om_1=(\mup+\mum)/2, 
 $ b_0=(\mup-\mum)/(\sin(\gap)-\sin(\gam))$, $\om_1=\mup-b_0\sin\gap$, choose $k_d, k_b\in (0.8, 0.95)$ and set $b=k_bb_0$, $d=k_d d_0$.
  The approximate bound for the Hardy norm used in \cite{BoyarLevenSinh19}.
\[
H(f;d)=10(|f(-id)|+|f(+id)|)
\]
should be modified to take into account that the contour of integration $\cL_{\om_1, \om; b}$ passes close to the poles
$\xi^+_{\pm 1}$ if $k_d<1$ is close to 1. We use an approximate bound via
\bbe\label{apprHBsinh}
H(f;d)=10(|f(-id)|+|f(+id)|+H^+_1(f,d)+H^-_1(f,d)),
\ee
where 
\[
H^\pm_1(f,d)=\frac{|Res(x_k; \De; \barDe; \psi; \xi^+_1)|}{(\gap-\gam)(1-k_d)\pi/2}.
\]
To satisfy a small error tolerance $\eps>0$,
we choose 
\[
\ze =2\pi d/(\ln(H(f,d)/\eps).
\] 

\begin{rem}{\rm If the vectorization as in MATLAB
is used, then it is advantageous to use the minimal $\ze$ among $\ze=\ze(x'_k)$ calculated above,
and choose the truncation parameter for the smallest $x'_k$. Then the same grid can be used when the bulk of calculations
is being made (recall that the integrand, hence, the terms in the simplified trapezoid rule depend on $x'_k$
via the factor $e^{ix'_ka\xi}$ only.
However, if no vectorization is used, then it is advantageous to do the calculations point by point because
for all $x'_k\ge 0$ bar the one closest to 0, the product $e^{ix_ka\xi}$ decays much faster the smallest one, and the number of terms needed to satisfy 
the chosen error tolerance is much smaller.}
\end{rem}

We  apply the infinite trapezoid rule
\[
\int_\bR f(y)dy=\ze \sum_{j\in\bZ} f(j\ze)=2\ze\Re(0.5f(0)+\sum_{j\ge 1}f(j\ze))
\]
(the last equality is valid because $\Im{f(-y)}=-\Im f(y)$, $\Re f(y)=f(y), y\in\bR$). 
We
truncate the sum using the prescription in \cite{BoyarLevenSinh19}. 
Similarly to Eq. (2.24), in \cite{BoyarLevenSinh19}, we choose $\La_1$ to satisfy
\bbe\label{truncerror3}
Err_{tr}(\La_1)\le \frac{e^{\barDe C_0}}{\pi}\int_{\La_1}^{+\infty} e^{-(x'_k\sin\om)a\rho-\barDe c_\infty(0)\cos(\om\nu)(a\rho)^\nu}\rho^{-2}d\rho,
\ee
where 
\begin{enumerate}[(i)]
\item
 in the case of KoBoL without the BM component,\\ $C_0=c\Ga(-\nu)[\lp^\nu+(-\lm)^\nu],\ c_\infty(0)=-2c\Ga(-\nu)\cos(\pi\nu/2)$;
 \item
in the case of NTS processes of order $\nu$,   and 
$C_0=\de(\al^2-\be^2)^{\nu/2}$, $c_\infty(0)=\de$;
\item
if the BM component is not very small, $\nu=2$, $C_0=0$, $c_\infty=\sg^2/2$.
\end{enumerate}
We simplify \eq{truncerror3}: given $\eps>0$, we set $\eps_1=\pi\eps e^{-\barDe C_0}/a$, and find $\La_1>0$ satisfying
 \bbe\label{truncerror4}
e^{-x'_k \sin(\om)(a\La_1)-\barDe c_\infty(0)\cos(\om\nu)(a\La_1)^\nu}(a\La_1)^{-1}\le \eps_1.
\ee
We set $a_1=x'_k\sin\om, a_2=\barDe c_\infty(0)\cos(\om\nu)$, $C_1=\ln(1/\eps_1)$ and solve
the equation
\[
g(\La_2):=a_1\La_2 +a_2\La_2^\nu+\ln\La_2-C_1=0.
\]
\begin{enumerate}[(a)]
\item
If $\nu< 1$, it is easy to check that $g'>0$ and $g^{\prime\prime}<0$ on $\bR$, $g(0^+)=-\infty, g(+\infty)=+\infty$.
Hence, $\La_2$ can be easily found using Newton's method with the initial approximation
$\La^0_2:=\min\{C_1/(3a_1), (C_1/(3a_2)^{1/\nu}, e^{C_1/3}\}$.

\item
If $\nu>1$, we make the change of variables $\La_2=\La_3^{1/\nu}$, and solve the equation
\[
g_1(\La_3):=a_1\La_3^{1/\nu} +a_2\La_3+\frac{1}{\nu}\ln\La_3-C_1=0.
\]
 using Newton's method, and set $\La_2=\La_3^{1/\nu}$. 

\item
If $\nu=1$, we solve $g(\La_2)=(a_1+a_2)\La_2+\ln\La_2-C_1=0$ using Newton's method.
\end{enumerate}
The simplified version, which can be used for any $\nu\in (0,2]$, is $\La_2=(C_1/(a_1+a_2))^{\max\{1,1/\nu\}}$.

Finally, the universal simplest version, which can be used for all $x'_k\ge 0$, is to apply the simplified
prescription for the smallest $x'_k$ but use it for all $x'_k$.

When $\La_2$ is found, we calculate  $\La=\ln(2\La_2/(ab))$. 

%%%%%%%%%%%%%%%%%%%%%%%%%%%%%%%%%
\subsubsection{Summation of the series \eq{JbarDea}}\label{choiceN}
%%%%%%%%%%%%%%%%%%%%%%%%%%%%%%%%%
 We need to calculate a partial sum
 \bbe\label{JbarDeaN}
J^\pm(x'_0,\barDe,a)\approx \sum_{|\ell|\le N} \frac{e^{ix'_0a 2\ell\pi-\barDe\psi^0(a\xi^\pm_\ell)}}{(\xi^\pm_\ell)^2},
%=2\Re\sum_{\ell\ge 0}(1-\de_{\ell 0}/2) \frac{e^{ix'_0a 2\ell\pi-\barDe\psi^0(a\xi^\pm_\ell)}}{(\xi^\pm_\ell)^2},
\ee
where %$\de_{jk}$ is Kronecker delta, 
$N$ is chosen to satisfy the desired error tolerance $\eps$.  We set $k_+=\min\{k\ |\ x'_k\ge 0\}$,
 \[
 \eps_1=\frac{\eps (2+\sqrt{3})^{k_+}}{a|A^\mp(x'_0,a)|}, 
 \]
and choose $N$ to satisfy
\bbe\label{truncSR}
\left|e^{-\barDe\psi^0(a(2N+1+i(2+\sqrt{3}))}\right| |a(2N+1+i(2+\sqrt{3}))|^{-1}\le \eps_1.
\ee
Assuming that $a$ is large, we can use a simplified prescription 
\bbe\label{truncSR2}
\left|e^{-\barDe c_\infty(0)(a(2N+1+i(2+\sqrt{3}))^\nu}\right| (a(2N+1))^{-1}\le \eps_1.
\ee
As in the case of the choice of the truncation parameter above, we find an approximate $N$ 
solving approximately the equation
$
\barDe c_\infty(0) \La^\nu +\ln\La - \ln(1/\eps_1)=0.
$
We change the variable $\La_1=\La^\nu$, and solve the equation
\[
g(\La_1)=\La_1+a_1\ln (\La_1)-C_1=0,
\]
where $a_1=1/(\nu \barDe c_\infty(0)), C_1=\ln(1/\eps_1)/(\barDe c_\infty(0))$ using Newton's method. Then we set
$\La=\La_1^{1/\nu}, N=\mathrm{ceil}\, (\La/(2a))\}$.
If $\nu$ is small, $N_1$ can be uncomfortably large. However, if $\nu\in (0,1)$ and $e^{ix'_0a2\pi}$ is not close to
1, we can use the summation by parts
Boyarchenko \& Levendorski\u{i} (2019b), and significantly decrease $N_1$ necessary to satisfy the desired error
tolerance.

%\begin{rem}{\rm This is the place where the choice of $x'_0$ might  matter but it is not because $e^{i\De a 2\pi}=1$.}
%\end{rem}
%%%%%%%%%%%%%%%%%%%%%%%%%%%%%%%%%
 \subsection{The contour deformations in the case $x'_k \ge 0$, $\mup/a\le \ln(2+\sqrt{3})$ or 
 $\mup/a$ is very close to $\ln(2+\sqrt{3})$} 
%%%%%%%%%%%%%%%%%%%%%%%%%%%%%%%%%
We take $\om_0>\ln(2+\sqrt{3})$ and  deform the line of integration $\bR$ into
 the line $\{\Im\xi=\om_0\}$, whose part above $[-\pi,\pi]$ is deformed down so that the new contour, denote it $\cL'$, is below $i\mup/a$.
 Then the residue terms are the same as in the typical case for $x'_k\ge 0$ above but the choice of the parameters of the sinh-acceleration
 becomes more involved. We choose $\om_1\in \bR, b>0, \om>0, d>0$ and the new $\gam\in (0, \gap)$ so that, for
 all $d'\in (0,d)$, the deformed strip $\chi_{\om_1,b,\om}(S_{(-d,d)})$ is above the poles $\xi^+_{\pm 1}$ but below $i\mup/a$.
 
 We fix a moderately large $\mu_a$, e.g., $\mu_a=3$, and set $\mup'=\min\{\mup,\mu_a\}/a$, $\mum'=\min\{\ln(2+\sqrt{3}),\mup'/2\}$,
 $\gam'=\arctan((\ln(2+\sqrt{3})-\mum')/\pi)$, $\gap'=\gap$, and then choose the remaining parameters following the general scheme
 in \cite{BoyarLevenSinh19} with $\ga'_\pm, \mu_\pm'$ in place of $\ga_\pm, \mu_\pm$.

  \subsection{The case $x'_k \le 0$}
  If $\mum/a<-\ln(2+\sqrt{3})$,
 the integrand on the RHS of \eq{Intcrossm} is analytic in $i(\mum/a, -\ln(2+\sqrt{3}))+\cC_{\gam,0}$, and decays at infinity as
 the integrand in the pricing formula for the OTM European call \cite{BoyarLevenSinh19}
 in KoBoL model,
 with $(\mum/a,-\ln(2+\sqrt{3}))$ in place of $(\lm,-1)$; the cone of analyticity where $e^{-\barDe\psi^0(a\xi)}$ decays is
 $\cC_{\gam, 0}$, where $\gam=-\min\{\pi/2, \pi/(2\nu)\}$. Hence, we deform the contour of integration down using
 $\om=\gam/2$, $d_0=-\om$, and choose the remaining parameters using the general scheme as well.

 If $\mum/a\ge -\ln(2+\sqrt{3})$ or 
 $\mum/a$ very close to $-\ln(2+\sqrt{3})$, we take $\om_0<-\ln(2+\sqrt{3})$, and  deform the line of integration $\bR$ into
 the line $\{\Im\xi=\om_0\}$, whose part below $[-\pi,\pi]$ is deformed up so that the new contour, denote it $\cL'$, is above $i\mum/a$.
 Then the residue terms are the same as in the typical case for $x'_k\le 0$ above but the choice of the parameters of the sinh-acceleration
 becomes more involved. We choose $\om_1\in \bR, b>0, \om<0, d>0$ and the new $\gap\in (\gam,0)$ so that, for
 all $d'\in (0,d)$, the deformed strip $\chi_{\om_1,b,\om}(S_{(-d,d)})$ is below the poles $\xi^-_{\pm 1}$ but above $i\mum/a$.
 
 We fix a moderately large $\mu_a$, e.g., $\mu_a=3$, and set $\mum'=\max\{\mum,-\mu_a\}/a$, 
 $\mup'=\max\{-(2+\sqrt{3}), \mum'/2\}$,
 $\gap'=-\arctan((\ln(2+\sqrt(3)+\mup')/\pi)$, $\gam'=\gam$, and then choose the remaining parameters following the general scheme
 in \cite{BoyarLevenSinh19} with $\ga'_\pm, \mu_\pm'$ in place of $\ga_\pm, \mu_\pm$.

 %%%%%%%%%%%%%%%%%%%%%%%%%%%%%%%%%
 \subsection{Formulas for the coefficients in B-spline projection method, quadratic splines}\label{coeffquadratic}
%%%%%%%%%%%%%%%%%%%%%%%%%%%%%%%%%
 In this case, $\htphi(\xi)$ is given by
 \bbe\label{htphiquadr}
 \htphi(\xi)=\frac{480(\sin(\xi/2)/\xi)^3}{33+26\cos\xi+\cos(2\xi)}.
 \ee
The differences with the case of linear splines are:
 \begin{enumerate}[(a)]
 \item
 there are four series of simple poles, not two;
 \item
 the series are of the form $\xi^\pm_{j;\ell}=(2\ell+1)\pi\pm i v_j$,   where $v_j>0$, $j=1,2$;
 \item
 the calculation of the residues is messier than in the linear case but straightforward, nevertheless;
 \item
 instead of the sums of two infinite series, we need to calculate the sums of four series;
 \item
 $v_2$ is rather large, hence, in wide regions of the parameter space, the deformations of the contour
 of integration at the first step is not flat, and, after all poles are crossed, the sinh-acceleration can be applied only with
 smaller $\gap-\gam$ than in the typical case for linear splines. If $\gap-\gam$ is too small,
 the alternative is not to cross the poles which are the closest ones
 to the imaginary axis. 
 \end{enumerate}
 The series enjoy the same properties as in the linear case. 
 The poles are found in 2 steps. First, we represent the denominator in \eq{htphiquadr} as the quadratic polynomial 
 $
 16+13\cos \xi+\cos^2\xi
 $
 in $\cos \xi$ and find its roots
 $
 \cos \xi=-0.5(13\pm \sqrt{105}).
 $
 Then we set $a_j=13+(-1)^j \sqrt{105}=0$, solve the equations $w+w^{-1}+a_j=0$, $j=1,2$,  and find
 \[
 w^\pm_j=-0.5(a_j\pm\sqrt{a_j^2-4}).
 \]
 We have $w^+_jw^-_j=1$, $-w^+_j>1>-w^-_j>0$ Hence, we have the representations
 (b) with $v_j=\ln(-w^+_j)>0$, $j=1,2$.
 
 \subsection{Formulas for the coefficients in B-spline projection method, cubic splines}\label{coeffcubic}
 As in the case of linear and cubic splines, the numerator in the formula for $\htphi(\xi)$ is an entire function;
 the denominator  is 
 \[
 D(\cos(\xi))=1208+1191\cos\xi+120\cos(2\xi)+\cos(3\xi)=4(297+297\cos\xi+60\cos^2\xi+\cos^3\xi).
 \]
 We have $D(-1)>0, D((-2)<0, D(-20)>0$, and $D(- \infty)=-\infty$. Hence, the polynomial $D(z)$ has three 
 roots on $(-\infty, -1)$, which we denote  $-0.5a_j, j=1,2,3,$. Solving 
  the equations $w+w^{-1}+a_j=0$, we find  three series of poles $\xi^\pm_{j;\ell}=(2\ell+1)\pi\pm i v_j\ln(-w_j)$, 
  $v_j=\ln(-w_j)>0$,
  $j=1,2,3$.

%%%%%%%%%%%%%%%%%%%%%%%%%%%%%%%%%
 \subsection{The General Case, Splines of higher order}\label{coeffhigher}
 %%%%%%%%%%%%%%%%%%%%%%%%%%%%%%%%%
With the Haar scaling function defined by $\varphi^{[0]}(y):= \mathbbm{1}_{[-\frac{1}{2},\frac{1}{2}]}(y)$,  the $p$-th order B-spline scaling functions are defined inductively 
\begin{equation}\label{eq: convol}
\varphi^{[p]}(x) = \varphi^{[0]}\star \varphi^{[p-1]}(x) = \int_{-\infty}^\infty \varphi^{[p-1]}(y-x)\mathbbm{1}_{[-\frac{1}{2},\frac{1}{2}]}(y)dy.
\end{equation}
 From equation \eqref{eq: convol}, the $p$-th order B-spline generator has Fourier transform
\begin{equation}\label{eq: phihat}
\widehat\varphi^{[p]}(\xi) = \left(\frac{\sin(\xi/2)}{(\xi/2)}\right)^{p+1}, \quad \widehat\varphi^{[p]}(0)=\int_{-\infty}^\infty \varphi^{[p]}(x)dx=1,
\end{equation}
which decays at a polynomial rate $L+1$ for the B-spline of order $L$.  From \eqref{eq: dualft} and an explicit representation of  $\bm \Phi(\xi)$ (see \cite{Ki14E}), we have
\begin{align*}
\widehat{\widetilde\varphi}^{[p]}(\xi)&= \left(\frac{\sin(\xi/2)}{(\xi/2)}\right)^{p+1} \mathbf \Phi^{[p]}(\xi)\\
&=\left(\frac{\sin(\xi/2)}{(\xi/2)}\right)^{p+1}\left( \int_{-\frac{p+1}{2}}^{\frac{p+1}{2}} \varphi^{[p]}(x)^2dx  + 2\sum_{k=1}^{p+1}\cos(k\xi) \int_{-\frac{p+1}{2}}^{\frac{p+1}{2}} \varphi^{[p]}(x)\varphi^{[p]}(x-k)dx \right)^{-1},
\end{align*}
where $\Phi^{[p]}(\xi)$  is a degree $p+1$ polynomial in $\cos(\xi)$.  In particular,
 we have
%\begin{align*}
%&\mathbf \Phi^{[0]}(\xi) =1 \\
%&\mathbf \Phi^{[1]}(\xi) = \frac{1}{3}(2 + \cos(\xi))\\
%&\mathbf \Phi^{[2]}(\xi) = \frac{1}{60}\left(33 + 26 \cos(\xi) + \cos(2\xi)\right)\\
%&\mathbf \Phi^{[3]}(\xi) = \frac{1}{2520}\left(1208 + 1191 \cos(\xi) + 120 \cos(2\xi) +\cos(3\xi)\right),
%\end{align*}
\[
\mathbf \Phi^{[3]}(\xi) = \frac{1}{2520}\left(1208 + 1191 \cos\xi + 120 \cos(2\xi) +\cos(3\xi)\right),
\]
and similarly for higher orders. Calculation of the series of poles is a straightforward albeit rather tedious exercise.

 %%%%%%%%%%%%%%%%%%%%%%%%%%%%%%%%%
 \section{Pricing of barrier options: explicit algorithms}\label{s:barrier pricing}
 %%%%%%%%%%%%%%%%%%%%%%%%%%%%%%%%%
 The are now several applications of the PROJ pricing method for path-dependent problems in quantitative finance, risk management, and insurance, see for example \cite{KirkNguyenCuiBermuBar17,WangZhang19,lars2019swing,kirkby2020analysis,ZhangShi2020,ZhangYonYu2020,cui2021data,kirkby2021equity}.  
Here we consider the problem of recursive valuation of an option, such as a European or barrier option.
 As it is typically the case, suppose that we maintain a value function approximation $\mathcal V(y_k,t_{m})$ over a uniformly spaced grid on $[l,u]$,
\begin{equation}
y_n = l + (n-1)\Delta, \quad n=1,\ldots, N_y,
\end{equation} 
where $\Delta := (u-l)/(N_y - 1)$.
Here $t_m$ indexes the time step in our recursive procedure, $m=0,1, \ldots, M$, where $t_m:=\bar \Delta m$.  Similarly, define a transition density grid 
\begin{equation}
x_k = x_1 + (k-1)\Delta, \quad k=1,\ldots, N_x,
\end{equation}
where $N_x:=2N_y$. The choice of $x_1$ and $N_x$ is discussed in  Appendix \ref{sect:AdaptiveErr}.
Using the SINH procedure detailed in Section \ref{s:SINH-PROJ}, we calculate the array of projection coefficients $\bar{\bm \beta}_a=(\bar \beta_{a,k})_{k=1}^{N_x}$ corresponding to $(x_k)_{k=1}^{N_x}$, which yields the projected density approximation
\begin{equation}
\bar p_{\bar \Delta}(x):=\sum_{k=1}^{N_x}\bar \beta_{a,k}\varphi_{a,k}(x).
\end{equation} 
In this section, we assume that $\varphi\equiv\varphi^{[1]}$ is the linear basis generator.
Moving backward in time from $t_m$ to $t_{m-1}$, we  compute the convolution of the transition density with the value function to obtain
\begin{align} \label{eq: val2}
\mathcal V(y_n,t_{m-1}) &=e^{-r\bar \Delta} \int_l^u \mathcal V(y,t_m)\bar p_{\bar \Delta}(y-y_n)dy \nonumber \\ 
& = \Upsilon^{\bar \Delta}_{a}\sum_{k=1}^{N_y}\bar \beta_{N_y+(k-n)} \cdot a^{1/2} \int_{l}^{u} \mathcal V(y,t_m)  \cdot a^{1/2}\varphi(a(y-y_k))dy
\end{align}
for $n=1,..., N_y,$ where $\Upsilon^{\bar \Delta}_{a}:=e^{-r\bar \Delta}a^{-1/2}$. To simplify notation, define the vector $\bm{\theta}_m=(\theta_{m,k})_{k=1}^{N_y}$ of \emph{value coefficients}
\begin{equation}\label{eq:thetamk}
\theta_{m,k} =  a^{1/2}\int_{l}^{u} \mathcal V(y,t_m) \cdot a^{1/2}\varphi(a(y-y_k))dy, \quad k = 1,\ldots, N_y.
\end{equation}
Thus, at each iteration we have the value update formula
\[
\mathcal V(y_n,t_{m-1}) :=\Upsilon^{\bar \Delta}_{a}\sum_{k=1}^{N_y}\bar \beta_{N_y+(k-n)} \cdot \theta_{m,k}, \quad n=1, \ldots, N_y,
\]
which can be written compactly as 
%\[
%\bm{\mathcal V}_{m-1}= \bm T \bm{\theta}_m, \quad m=M,\ldots, 1,
%\]
\[
\bm{\mathcal V}_{m-1}=\mathrm{conv}(\bar{\bm \beta}_a,\bm{\theta}_{m}), \quad m=M,\ldots, 1.
\]
Here we define $\bm{\mathcal V}_{m-1}=(\mathcal V_{m-1,n})_{n=1}^{ N_y}$, where $\mathcal V_{m-1,n} :=\mathcal V(y_n,t_{m-1})$, and $\mathrm{conv}$ is the discrete convolution operator. One can use
the built-in MATLAB function but the explicit efficient realization in terms of FFT and iFFT 
(see \ref{sss:convolution-MATLAB}) 
is usually more efficient.

Thus, it remains only to compute the numerical coefficients $\bm{\theta}_m$. In the first stage of the recursion, at $t_M$, $\mathcal V(y,t_M)=G(y)$ is given explicitly by the terminal payoff, 
\begin{equation}\label{eq:thetaMk}
\theta_{M,k} =  a^{1/2}\int_{l}^{u}  G(y) \cdot a^{1/2}\varphi(a(y-y_k))dy, \quad k = 1,\ldots, N_y,
\end{equation}
where $G(y)$ is the payoff function. For typical contracts, $\theta_{m,k}$ can be computed analytically with ease, as in \cite{Ki14C}.\footnote{See Section 3.8-3.9 of \cite{Ki14C} for down-and-out options, and Section 3.10 for up-and-out options.} For convenience, we provide explicit formulas for $\theta_{M,k}$ in Appendix \ref{sect:CoeffBarr} for common varieties of barrier contracts.

For $m<M$, we can approximate the value function accurately using a polynomial approximation, and compute the coefficients again in closed form. Hence, for each interior interval $I_k:=[y_{k-1},y_{k+1}]$, $k=2,..., N_y -1$, we define the local quadratic interpolation
%\footnote{This step is similar to an intermediate computation in \cite{DeLe14}. }
% They use a B-spline interpolation over the full grid to approximate the Fourier transformed value.  In contrast, our %approach uses a local Lagrange interpolation applied separately to overlapping subintervals.  This allows us to obtain the explicit form of $\theta_{m,k}$.} of $\mathcal V(y,t_m)$ on $I_k$ by
\begin{align}\label{eq: quadint}
\widetilde {\mathcal V}_{m,k}(y) &= \mathcal V_{m,k-1} \frac{(y-y_k)(y-y_{k+1})}{2\Delta^2} - \mathcal V_{m,k}\frac{(y-y_{k-1})(y-y_{k+1})}{\Delta^2} \nonumber\\
&\qquad +\mathcal V_{m,k+1}\frac{(y-y_{k-1})(y-y_{k})}{2\Delta^2}.
\end{align}
This allows us to obtain the explicit form of $\theta_{m,k}$. (In \cite{DeLe14}, 
a different parametrization of piece-wise polynomials
was used.)
Since $\varphi(a(y-y_k))\widetilde {\mathcal V}_{m,k}(y) $ is piecewise cubic on $I_k$, by splitting the interval in half, integration by Simpson's rule on each subinterval is exact:
\begin{equation}\label{thetttt}
a^{1/2}\int_{I_k}\widetilde {\mathcal V}_{m,k}(y)  \cdot a^{1/2}\varphi(a(y-y_k))dy = [ \widetilde {\mathcal V}_{m,k-1/2}+ \widetilde {\mathcal V}_{m,k} + \widetilde {\mathcal V}_{m,k+1/2}]/3,
\end{equation}
where $\widetilde {\mathcal V}_{m,k-1/2}:=\widetilde {\mathcal V}_{m,k}(y_k - \Delta/2)$, $\widetilde {\mathcal V}_{m,k+1/2}:=\widetilde {\mathcal V}_{m,k}(y_k + \Delta/2)$, and $\widetilde {\mathcal V}_{m,k} =  {\mathcal V}_{m,k}.$  From equation \eqref{eq: quadint}, 
\begin{align*}
\widetilde {\mathcal V}_{m,k-1/2} &= \frac{3}{8}\mathcal V_{m,k-1} + \frac{3}{4}\mathcal V_{m,k} - \frac{1}{8}\mathcal V_{m,k+1}\\
\widetilde {\mathcal V}_{m,k+1/2} & = -\frac{1}{8}\mathcal V_{m,k-1} + \frac{3}{4}\mathcal V_{m,k} + \frac{3}{8}\mathcal V_{m,k+1}.
\end{align*}
To handle the boundary intervals $I_1:=[y_1,y_2]$ and $I_{ K}:= [y_{ N_y-1},y_{ N_y}]$, we approximate using a cubic interpolating polynomial.  Plugging in these expressions into \eqref{thetttt} gives us the update rule for the coefficients:
\begin{equation} \label{eq: coeffVm}
\theta_{m,k} :=  \begin{cases}\left[13\mathcal V_{m,1} + 15 \mathcal V_{m,2} - 5\mathcal V_{m,3} + \mathcal V_{m,4}  \right]/48& \mbox{ }k = 1 \\
\left[\mathcal V_{m,k-1} + 10 \mathcal V_{m,k}+ \mathcal V_{m,k+1}\right]/12& \mbox{ } k= 2,..., N_y-1 \\
\left[13\mathcal V_{m, k} + 15 \mathcal V_{m,k-1} - 5\mathcal V_{m, k-2} + \mathcal V_{m, k-3}  \right]/48& \mbox{ } k = N_y\end{cases}
\end{equation}
Thus, the computational effort to compute ${\bm \theta_m}$ is $\mathcal O(N_y)$.

We summarize the steps in this procedure as follows:
\begin{enumerate}
\item Initialize the grid as in Appendix \ref{sect:AdaptiveErr}.
\item Calculate the array $\bar{\bm \beta}=(\bar \beta_{a,k})_{k=1}^{N_x}$ using the algorithm in Sect. \ref{s:SINH-PROJ}.
\item Initialize $\bm \theta_M$ analytically from terminal payoff, by integrating \eqref{eq:thetaMk} (see Appendix \ref{sect:CoeffBarr})
\item For $m=M,M-1 \ldots, 1$, 
\begin{enumerate}
\item Calculate $\bm{\mathcal V}_{m-1}=\mathrm{conv}(\bar{\bm \beta}_a,\bm{\theta}_{m})$
\item Calculate $\bm{\theta}_{m-1}$ using \eqref{eq: coeffVm}
\end{enumerate}
\item Calculate $\mathcal V(y,t_{0}) $  for the points of interest using the interpolation applied to $\bm{\mathcal V}_{0}$
\end{enumerate}
 %%%%%%%%%%%%%%%%%%%%%%%%%%%%%%%%%
 \section{Numerical examples}\label{s:numer}
 %%%%%%%%%%%%%%%%%%%%%%%%%%%%%%%%%
 This section provides a series of numerical experiments to demonstrate the computational advantages of the SINH acceleration method.  All experiments are conducted in MATLAB 2017 on a personal computer with Intel(R) Core(TM) i7-6700 CPU @3.40GHz.
 
 Throughout, we will consider the following  L\'evy example, which belongs to the CGMY subclass of the KoBoL family, with characteristic exponent $\psi(\xi)=-i\mu\xi + \psi_0(\xi)$, where
 \begin{equation}
\psi_0(\xi)=c\Gamma(-\nu)[\lp^\nu-(\lp+i\xi)^\nu+(-\lm)^\nu-(-\lm-i\xi)^\nu].
\end{equation}
 We specify the parameters
\begin{align}\label{eq:params}
&\text{Test I: } \quad \nu = 1.2, \quad \lambda_+ = 11, \quad \lambda_- = -4, \quad m_2=0.1,\\
&\text{Test II:} \quad \nu = 0.3, \quad \lambda_+ = 8, \quad \lambda_- = -9, \quad \text{ }m_2=0.1,\label{eq:params2}
\end{align}
where $m_2=\psi^{\prime\prime}(0)$ is the second instantaneous moment of the process, and define 
\begin{equation}
 c = \frac{m_2}{\Gamma(2-\nu)}\frac{1}{(-\lambda_-)^{\nu - 2} + \lambda_+^{\nu - 2}}.
 \end{equation}
We define the interest rate $r=0.02$ and dividend yield $q=0$.  The drift is determined by the martingale condition, $\mu=r - q - \psi_0(-i)$.

\begin{figure}[h!t!b!]
\centering     %%% not \center
\includegraphics[width=.7\textwidth]{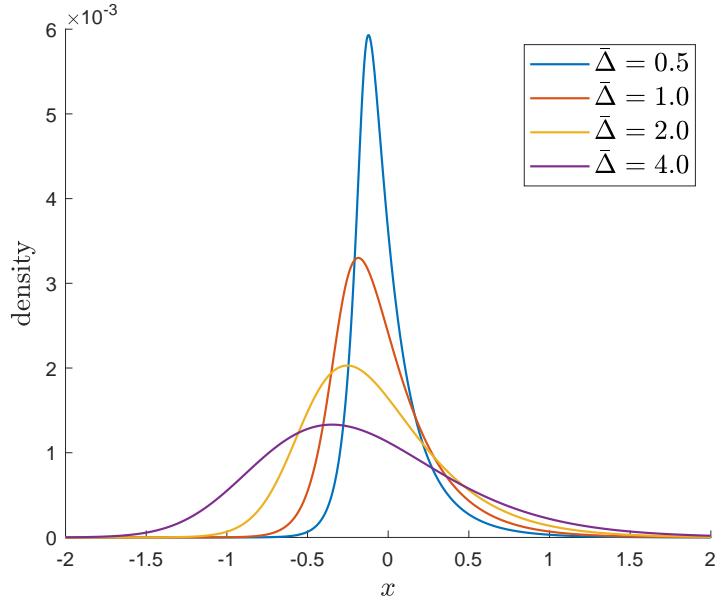}
\caption{KoBoL estimated densities for several values of $\bar \Delta$ and Test I parameters in \eq{eq:params}.}\label{fig:KobolDensities}
\end{figure}

  %%%%%%%%%%%%%%
 \subsection{Density Estimation and FFT Aliasing}
  %%%%%%%%%%%%%%
We start by demonstrating a primary deficiency of the FFT when estimating the transition density, which is resolved by using the SINH method to compute the density coefficients.  The issue is related to truncation of the density support. For the SINH method, this truncation results in just one source of error, namely the loss of mass 
of the density, outside of the truncation interval. The coefficients are not otherwise impacted. By contrast, the FFT is subject to an additional source of error, in that the truncation of the density can actually reduce the accuracy of the computed coefficients near the boundaries the truncated support, as we illustrate in more detail below.

Figure \ref{fig:KobolDensities} shows the KoBoL transition densities  (plotted in log-space) with parameters in \eq{eq:params}, computed using SINH for several values of the time step-size $\bar \Delta$.  The densities exhibit an especially heavy right tail, due to $\lambda_+ > |\lambda_-|$, with marked asymmetry.  When using the FFT to compute coefficients (or other related computations), the well known \emph{aliasing} effect emerges
because, in effect, FFT replaces a given function by a periodic one, which reduces the accuracy of the computed coefficients.
In particular, it is the slow tail decay, coupled with tail asymmetry, that causes difficulties when using the FFT.  The left panel of Figure \ref{fig:Aliasing} demonstrates this problem. Since the left tail is much heavier than the right, the implied periodicity causes a loss of accuracy for coefficients near the left boundary. 

 \begin{figure}[h!t!b!]
\centering     %%% not \center
\subfigure{\includegraphics[width=.52\textwidth]{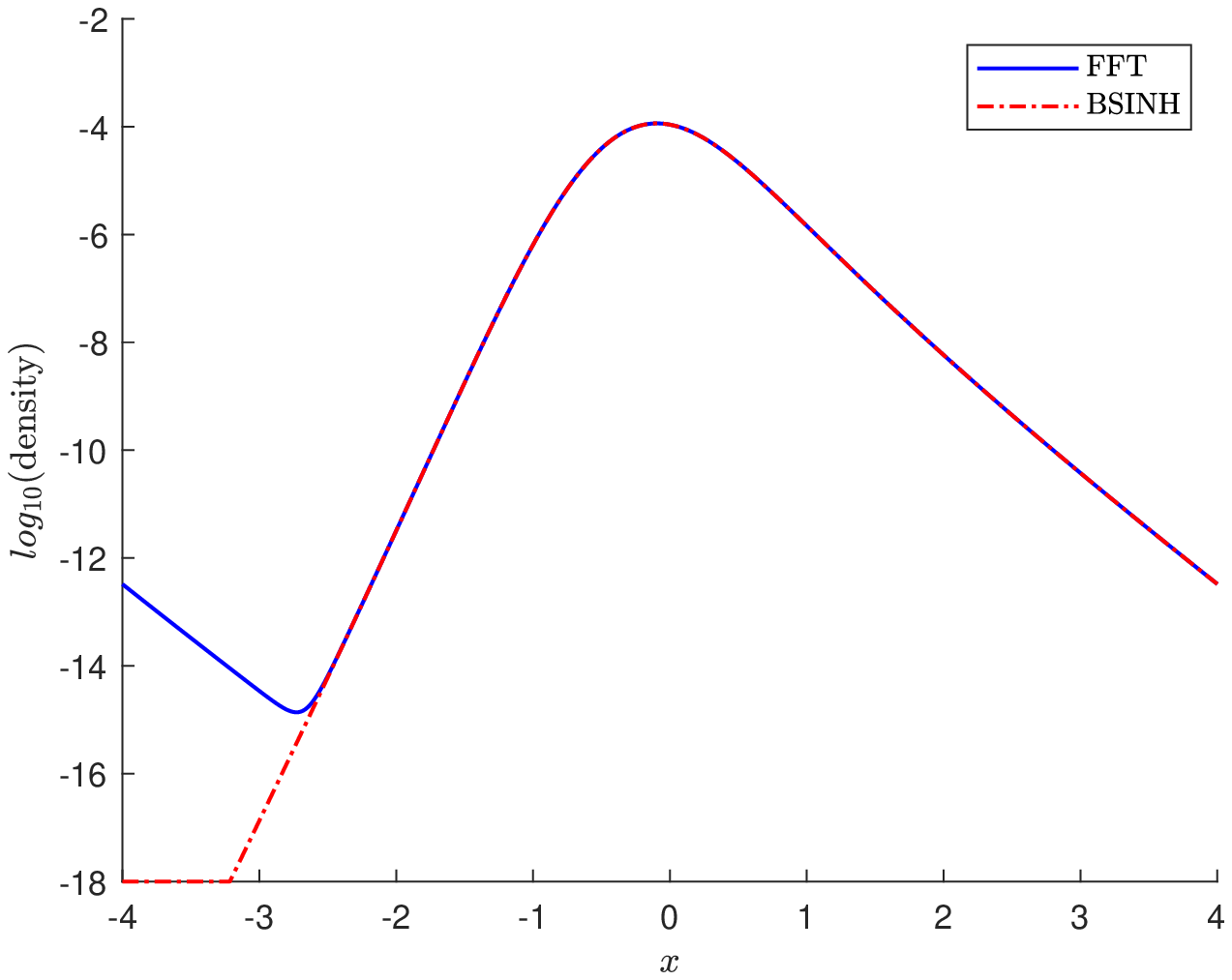}}\hspace{-2em}
\subfigure{\includegraphics[width=.52\textwidth]{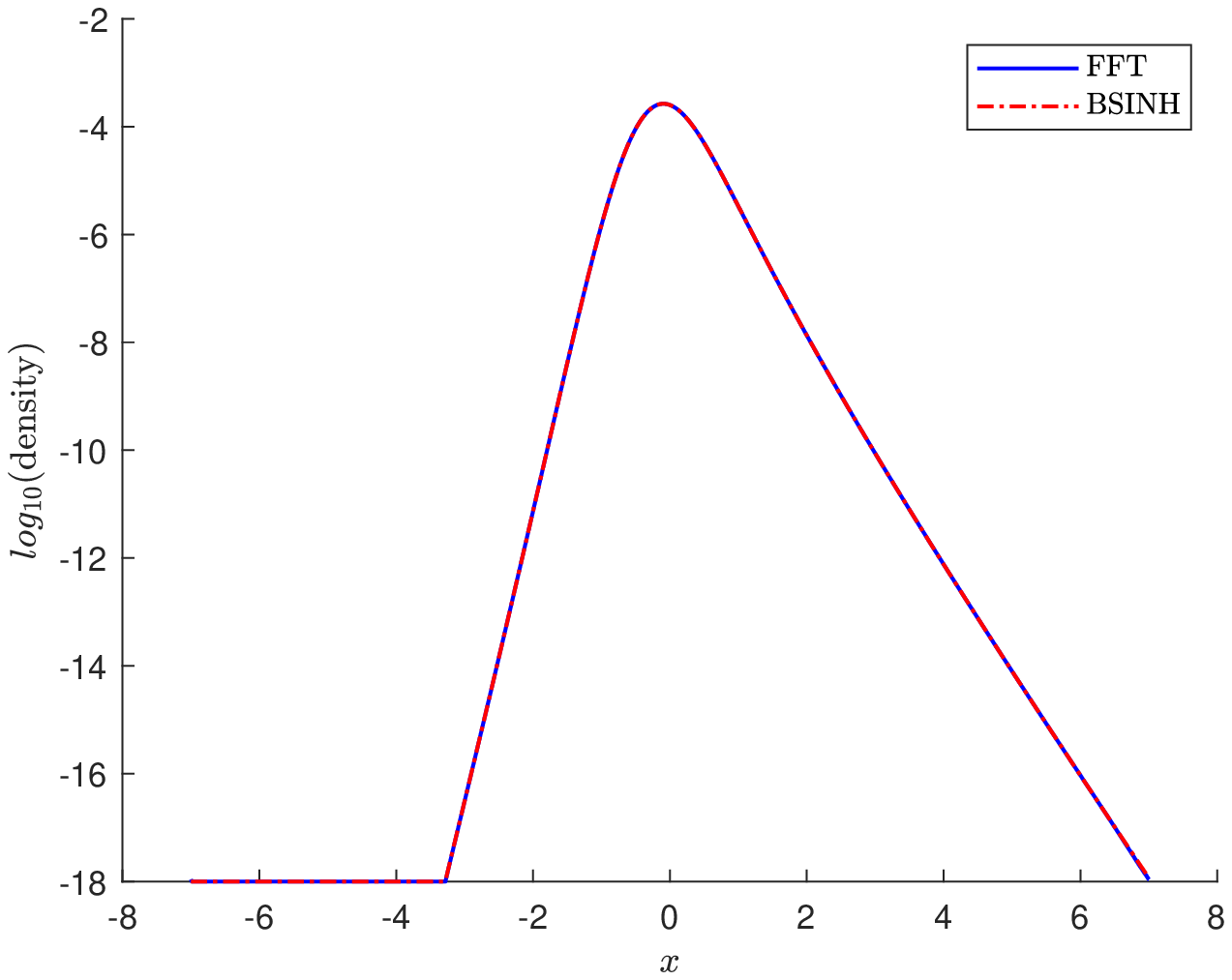}}
\caption{Visualization of aliasing effect in terms of log density using Test I parameters in \eq{eq:params}, and $\bar \Delta=1$. Left: Density is truncated to the support $[-4, 4]$, which fails to cover the mass in right tail. The heavy right tail causes aliasing of FFT on the left tail, due to assumed periodicity of FFT. Right: Removal of aliasing by increasing support to $[-7, 7]$ to cover heavy right tail.}\label{fig:Aliasing}
\end{figure}

Naturally, the aliasing issue can be resolved by expanding the truncated density support, in this case from $[-4,4]$ to $[-7,7]$ (after experimentation), as illustrated in the right panel of Figure \ref{fig:Aliasing}. There are two issues with this. The first is that we don't know \emph{in advance} that $[-7, 7]$ is sufficient, and while it can be determined automatically by detecting aliasing effects after coefficient computation, this adds additional cost and complexity to the procedure. The second issue is that even if we can automatically determine  a sufficiently wide truncation interval to avoid aliasing effects, it is wasteful of computational resources to extend the truncation grid just to avoid aliasing. Extending the grid width spreads out our $N_x$ basis elements, reducing the accuracy of the estimates (or equivalently the cost to achieve comparable accuracy).

  %%%%%%%%%%%%%%
 \subsection{European Options}
  %%%%%%%%%%%%%%
The first set of experiments demonstrated the problem with the FFT when estimating the transition coefficients. Here we show the consequences this can have on our pricing accuracy for European options.
We further demonstrate the robustness of SINH to mispecifying the truncated density support. For illustration, we consider the procedure introduced in \cite{FaOo08}, which recommends choosing the truncation half-width according to
\begin{equation}\label{eq: cossup}
\alpha: =  L_1\sqrt{|c_2|T+ \sqrt{|c_4|T}},
\end{equation}
where $c_n$ denotes the $n$-th cumulant of $\ln(S_1/S_0)$, $T$ is the contract maturity, and $L_1$ is a user-supplied parameter. While simple to implement, this procedure requires the subjective choice of the parameter $L_1$, which can have a great impact on accuracy. 

 \begin{figure}[h!t!b!]
\centering     %%% not \center
\subfigure{\includegraphics[width=.52\textwidth]{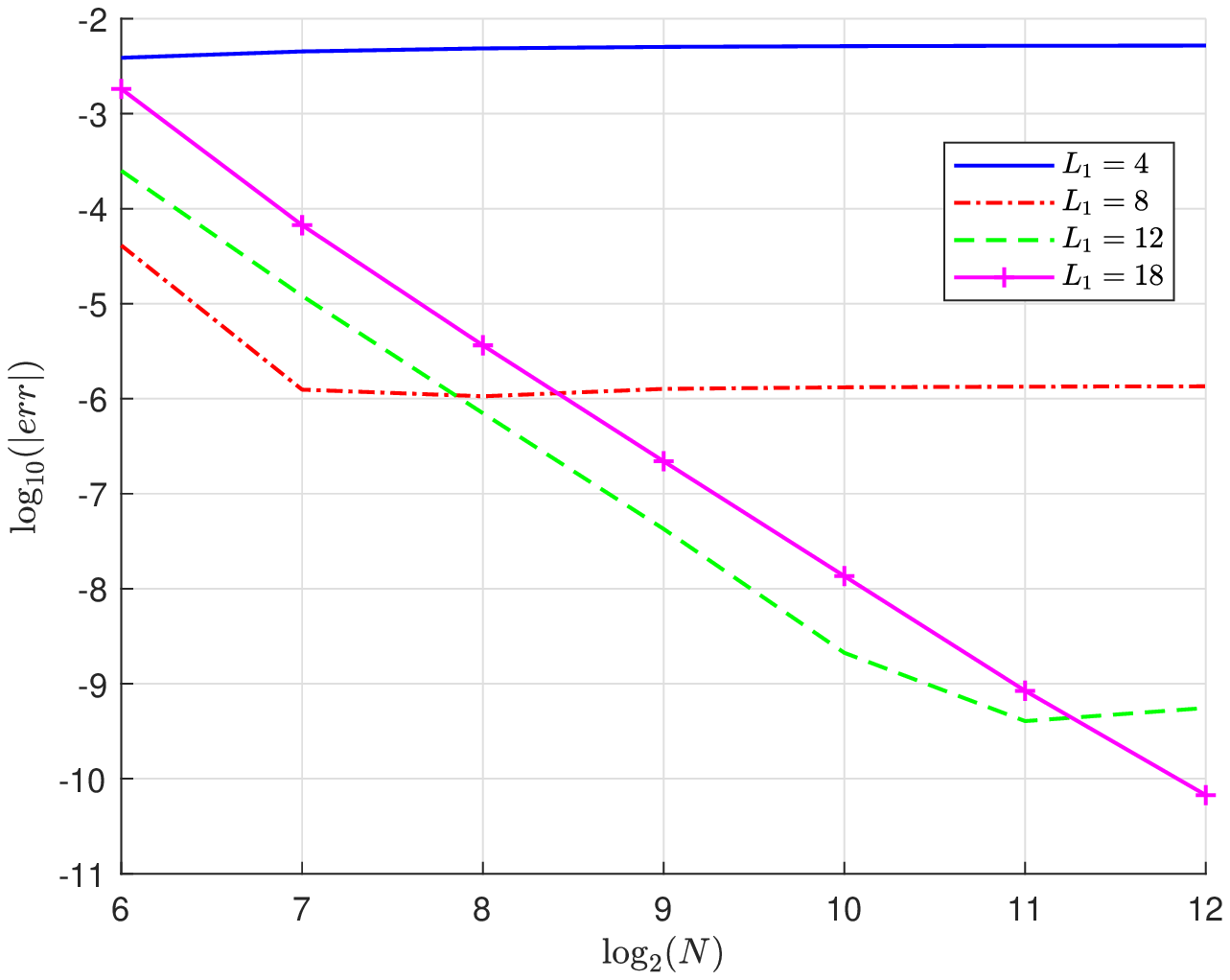}}\hspace{-2em}
\subfigure{\includegraphics[width=.52\textwidth]{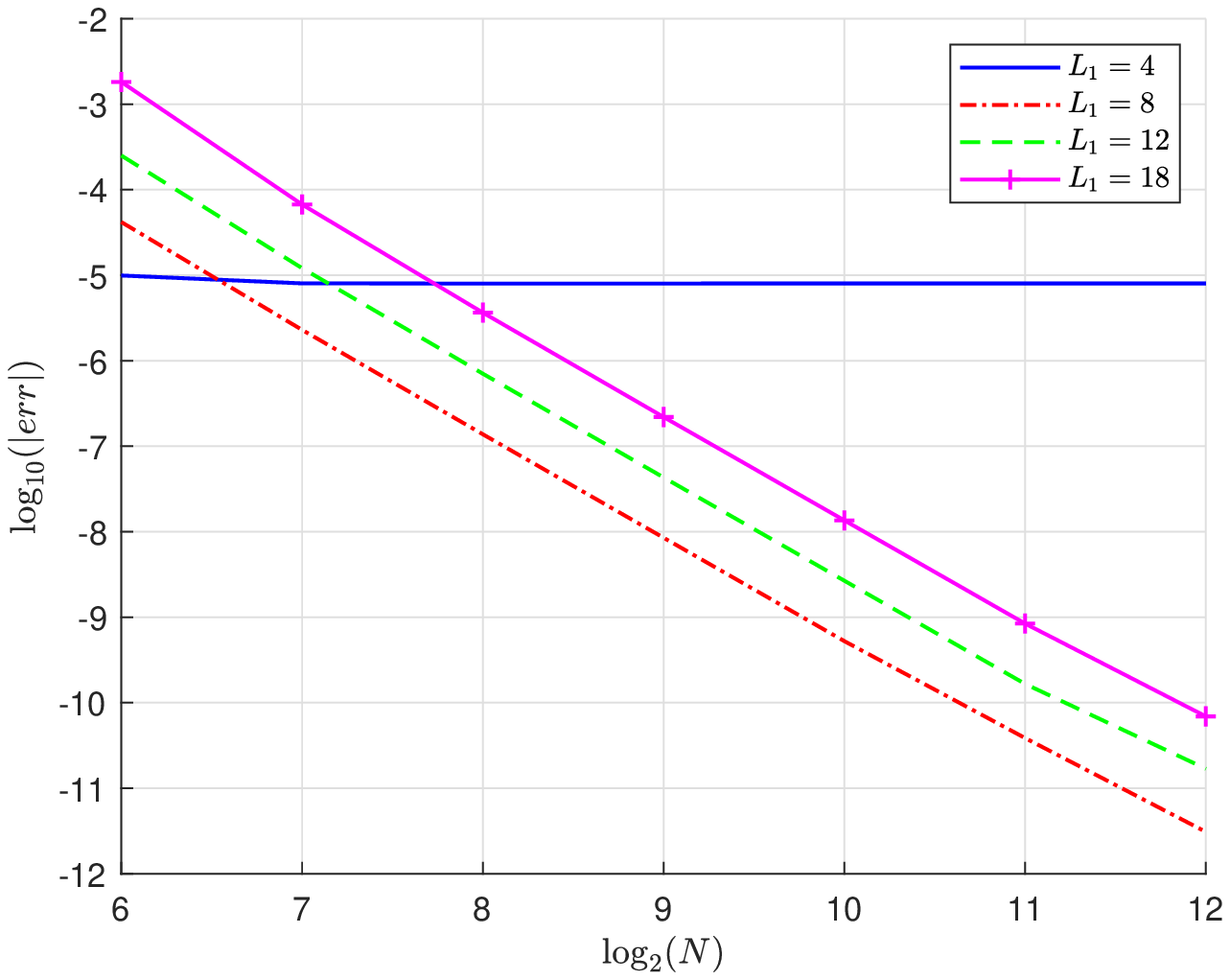}}
\caption{Robustness of SINH compared with FFT for mis-specification of truncated density support, with KoBoL Test I parameters in \ref{eq:params}. European option price convergence with linear projection coefficients determined by FFT (Left) and SINH (right).}\label{fig:EuropeanRobust}
\end{figure}
We next show that this impact can be amplified by the use of the FFT, whereas SINH is more robust. Figure \ref{fig:EuropeanRobust} demonstrates the convergence in $N$ for a European call option, with $S_0=W=100$, $T=1$. Comparing the left and right panels, which show the convergence with coefficients obtained via the FFT and SINH, respectively, we see that SINH is more accurate for small values of $L_1$. Even with $L_1=4$, SINH easily achieves beyond practical precision. Since the full density width is $2\alpha$, this suggests that we can use many fewer basis elements with SINH to achieve the same level of accuracy.
% In particular, the FFT requires a larger choice of $\alpha$, and $\Delta =2 \alpha / (N-1)$. Hence, doubling $\alpha$ reduces the grid resolution by a factor of 4.

 \begin{table}[h!t!b!]
\centering
\scalebox{0.88}{
\begin{tabular}{c|cc|cc|cc|cc}
%\hline
&\multicolumn{4}{c|}{\underline{ $L_1=4$} } &\multicolumn{4}{c}{ \underline{$L_1=8$} } \\
&\multicolumn{2}{c}{ FFT } & \multicolumn{2}{c|}{SINH}&\multicolumn{2}{c}{FFT}& \multicolumn{2}{c}{SINH}\\ %[-1ex]
%&&&&&&\\ [-1ex]
$log_2(N_y)$ & Price & $|\text{Error}|$ & Price & $|\text{Error}|$ & Price & $|\text{Error}|$ & Price & $|\text{Error}|$ \\\hline
&&&&&&&&\\ [-1.5ex]
5 & 0.83166761 & 5.82e-04 & 0.83163089 & 5.45e-04 & 0.83386326 & 2.78e-03 & 0.83390363 & 2.82e-03 \\ 
6 & 0.83116978 & 8.40e-05 & 0.83115683 & 7.10e-05 & 0.83163074 & 5.45e-04 & 0.83163072 & 5.45e-04 \\ 
7 & 0.83110239 & 1.66e-05 & 0.83108898 & 3.18e-06 & 0.83115681 & 7.10e-05 & 0.83115681 & 7.10e-05 \\ 
8 & 0.83109959 & 1.38e-05 & 0.83108594 & 1.45e-07 & 0.83108896 & 3.16e-06 & 0.83108895 & 3.16e-06 \\ 
9 & 0.83109758 & 1.18e-05 & 0.83108582 & 2.42e-08 & 0.83108592 & 1.25e-07 & 0.83108591 & 1.21e-07 \\ 
10 & 0.83109762 & 1.18e-05 & 0.83108581 & 1.90e-08 & 0.83108580 & 8.90e-09 & 0.83108580 & 5.75e-09 \\
[-1.5ex]
&&&&&&&&\\ \hline
\end{tabular}}
\caption{\small{UOC barrier option converge, with $K=S_0=100$, $M=12$, $T=1.0$, and upper barrier $U=120$.  KoBoL Test I  parameters in \eq{eq:params}.
%Comparison for two truncation levels defined by $L_1=4$ and $L_1=8$, for FFT vs SINH.
}}\label{UOC}
\end{table}

   %%%%%%%%%%%%%%
 \subsection{Single Barrier Options}
  %%%%%%%%%%%%%%
 We next consider the case of barrier options, to demonstrate that the SINH approach can also be used to improve the computational performance of path-dependent pricing under the PROJ method.  The first example we consider is that of an up-and-out call (UOC) option, with $T=1$ year to maturity, $M=12$ monitoring dates (monthly), struck at-the-money ($K=S_0=100$), with an upper barrier level at $U=120$.  For convenience, we continue to use formula \eqref{eq: cossup} to determine the truncation width of the density, which is parameterized by $L_1$.  This allows us to directly control the truncation width, and demonstrate the loss in accuracy that occurs if the truncation width is chosen too small.

   \begin{table}[h!t!b!]
\centering
\scalebox{0.88}{
\begin{tabular}{c|cc|cc|cc|cc}
%\hline
&\multicolumn{4}{c|}{\underline{ $L_1=4$} } &\multicolumn{4}{c}{ \underline{$L_1=8$} } \\
&\multicolumn{2}{c}{ FFT } & \multicolumn{2}{c|}{SINH}&\multicolumn{2}{c}{FFT}& \multicolumn{2}{c}{SINH}\\ %[-1ex]
%&&&&&&\\ [-1ex]
$log_2(N_y)$ & Price & $|\text{Error}|$ & Price & $|\text{Error}|$ & Price & $|\text{Error}|$ & Price & $|\text{Error}|$ \\\hline
&&&&&&&&\\ [-1.5ex]
5 & 2.52550469 & 1.40e-02 & 2.52544628 & 1.39e-02 & 2.82699044 & 3.15e-01 & 2.79932530 & 2.88e-01 \\ 
6 & 2.51273033 & 1.19e-03 & 2.51271077 & 1.17e-03 & 2.52548647 & 1.39e-02 & 2.52544624 & 1.39e-02 \\ 
7 & 2.51158699 & 4.32e-05 & 2.51156667 & 2.29e-05 & 2.51271072 & 1.17e-03 & 2.51271073 & 1.17e-03 \\ 
8 & 2.51157097 & 2.72e-05 & 2.51154362 & 1.20e-07 & 2.51156663 & 2.29e-05 & 2.51156663 & 2.29e-05 \\ 
9 & 2.51157536 & 3.16e-05 & 2.51154382 & 8.44e-08 & 2.51154356 & 1.82e-07 & 2.51154355 & 1.88e-07 \\ 
10 & 2.51157550 & 3.18e-05 & 2.51154383 & 8.93e-08 & 2.51154375 & 6.65e-09 & 2.51154374 & 1.65e-09 \\ 
[-1.5ex]
&&&&&&&&\\ \hline
\end{tabular}}
\caption{\small{DOP  barrier option  convergence, with $K=105, S_0=100$, $M=24$, $T=1.0$, and lower barrier $L=80$.   KoBoL Test I  parameters in \eq{eq:params}.
%Comparison for two truncation levels defined by $L_1=4$ and $L_1=8$, for FFT vs SINH.
}}\label{DOP}
\end{table}
 
 Table \ref{UOC} illustrates the convergence of FFT and SINH in the case of $L_1=4$ and $L_1=8$, as a function of the number of value grid points $N_y$.  When $L_1=4$, we have clearly underspecfied the truncation width, and the FFT method ceases to improve in accuracy beyond $3e-05$, while SINH reaches $9e-08$ for the same width.  As for the European case, by increasing $L_1$ we can achieve comparable accuracy with the FFT method, but a greater computational cost. In particular, doubling $L_1$ doubles the overall algorithm cost.  We perform another experiment for a down-and-out put (DOP) option in Table \ref{DOP} with Test I parameters, and in Table \ref{DOP2} with Test II parameters, where we observe similar results to the UOC example.

  \begin{table}[h!t!b!]
\centering
\scalebox{0.88}{
\begin{tabular}{c|cc|cc|cc|cc}
%\hline
&\multicolumn{4}{c|}{\underline{ $L_1=4$} } &\multicolumn{4}{c}{ \underline{$L_1=8$} } \\
&\multicolumn{2}{c}{ FFT } & \multicolumn{2}{c|}{SINH}&\multicolumn{2}{c}{FFT}& \multicolumn{2}{c}{SINH}\\ %[-1ex]
%&&&&&&\\ [-1ex]
$log_2(N_y)$ & Price & $|\text{Error}|$ & Price & $|\text{Error}|$ & Price & $|\text{Error}|$ & Price & $|\text{Error}|$ \\\hline
&&&&&&&&\\ [-1.5ex]
5 & 2.79848853 & 1.46e-04 & 2.79809624 & 2.47e-04 & 2.69181661 & 1.07e-01 & 2.70127090 & 9.71e-02 \\ 
6 & 2.79825515 & 8.78e-05 & 2.79824741 & 9.55e-05 & 2.79848275 & 1.40e-04 & 2.79809618 & 2.47e-04 \\ 
7 & 2.79833286 & 1.01e-05 & 2.79832651 & 1.64e-05 & 2.79824967 & 9.33e-05 & 2.79824740 & 9.55e-05 \\ 
8 & 2.79835392 & 1.10e-05 & 2.79834147 & 1.47e-06 & 2.79832311 & 1.98e-05 & 2.79832651 & 1.64e-05 \\ 
9 & 2.79835750 & 1.46e-05 & 2.79834294 & 2.16e-09 & 2.79834115 & 1.79e-06 & 2.79834147 & 1.47e-06 \\ 
10 & 2.79835747 & 1.45e-05 & 2.79834296 & 1.47e-08 & 2.79834295 & 1.88e-09 & 2.79834294 & 6.26e-10 \\
[-1.5ex]
&&&&&&&&\\ \hline
\end{tabular}}
\caption{\small{DOP barrier option converge, with $K=100, S_0=100$, $M=6$, $T=0.5$, and lower barrier $L=80$.   KoBoL Test II  parameters in \eq{eq:params2}.
%Comparison for two truncation levels defined by $L_1=4$ and $L_1=8$, for FFT vs SINH.
}}\label{DOP2}
\end{table}

   %%%%%%%%%%%%%%
 \subsection{Double Barrier Options}
  %%%%%%%%%%%%%%
 Finally, we consider the case of a double barrier option.  Unlike a single barrier (which has a semi-infinite continuation domain), the double barrier option price depends only on the density coefficients that overlap the compact interval $[L,U]$. Hence, only those coefficients should be computed before performing the convolution steps.  For SINH, this is not a problem, while for FFT it leads to serious loss of accuracy due to aliasing, particularly when $[L,U]$ is narrow.  The remedy proposed in Algorithm 4 of  \cite{Ki14C} is to compute the coefficients by FFT over an enlarged interval, and restrict the convolution to those coefficients over $[L,U]$.  This increases the cost of initialization, while the remaining computational cost for the convolution steps is unchanged. We refer to this anti-aliasing FFT approach as FFT-AA.
 Note that FFT-AA is a simplified version of the refined FFT technique
 developed and used in \cite{single,BLdouble} in  Carr's
 randomization algorithm (essentially, method of lines) applied to price barrier options with continuous monitoring.
 
    \begin{table}[h!t!b!]
\centering
\scalebox{0.9}{
\begin{tabular}{c|ccc|cc|ccc}
%\hline
&\multicolumn{3}{c}{ SINH } & \multicolumn{2}{c}{FFT}&\multicolumn{3}{c}{FFT-AA}\\ %[-1ex]
%&&&&&&\\ [-1ex]
$log_2(N_y)$ & Price & $|\text{Error}|$ & Rate & Price & $|\text{Error}|$ & Price & $|\text{Error}|$ & NX\\\hline
&&&&&&&&\\ [-1.5ex]
3 & 0.68580761 & 1.27e-03 & -- & 0.68804791 & 3.51e-03 & 0.68580776 & 1.27e-03 &6 \\ 
4 & 0.68462085 & 8.05e-05 & 3.98 & 0.68811821 & 3.58e-03 & 0.68462085 & 8.05e-05 &6\\ 
5 & 0.68454314 & 2.83e-06 & 4.83 & 0.68886527 & 4.32e-03 & 0.68454314 & 2.83e-06 &6\\ 
6 & 0.68454041 & 1.02e-07 & 4.80 & 0.68933390 & 4.79e-03 & 0.68454041 & 1.02e-07 &6\\ 
7 & 0.68454031 & 4.25e-09 & 4.58 & 0.68958569 & 5.05e-03 & 0.68454031 & 4.25e-09 &6\\ 
8 & 0.68454031 & 3.14e-10 & 3.76 & 0.68971588 & 5.18e-03 & 0.68454031 & 3.12e-10 &6\\
[-1.5ex]
&&&&&&&&\\ \hline
\end{tabular}}
\caption{\small{Double barrier call option convergence, with $K=S_0=100$, $M=12$, $T=1.0$, barriers $[L,U]=[80,120]$.   KoBoL Test I parameters in \eq{eq:params}.
%Comparison for two truncation levels defined by $L_1=4$ and $L_1=8$, for FFT vs SINH.
}}\label{DKO}
\end{table}
 
  Table \ref{DKO} displays the convergence results for a double barrier call option with $K=S_0=100$ and $[L,U]=[80,120]$ using Test I parameters.  First we compare SINH with ``naive FFT'', which computes the density coefficients corresponding to $[L,U]$ only. As expected  the accuracy of FFT is capped (here at $e-03$), while PROJ with SINH converges nicely. Since SINH is not subject to truncation error, we are able to see the natural convergence rate of PROJ (with coefficients computed accurately) when only the projection error is present.
   The column ``Rate'' documents the rate of convergence corresponding to the SINH prices, computed as $-\log_2(\text{Error}_k/\text{Error}_{k-1})$.  While the theoretical convergence of the linear basis is only second order, in practice we observe slightly better than \emph{fourth order} convergence on average, as shown here for the double barrier option.

      \begin{table}[h!t!b!]
\centering
\scalebox{0.9}{
\begin{tabular}{c|ccc|cc|ccc}
%\hline
&\multicolumn{3}{c}{ SINH } & \multicolumn{2}{c}{FFT}&\multicolumn{3}{c}{FFT-AA}\\ %[-1ex]
%&&&&&&\\ [-1ex]
$log_2(N_y)$ & Price & $|\text{Error}|$ & Rate & Price & $|\text{Error}|$ & Price & $|\text{Error}|$ & NX \\\hline
&&&&&&&&\\ [-1.5ex]
3 & 0.09238632 & 2.44e-04 & -- & 0.10436047 & 1.22e-02 & 0.09231996 & 1.78e-04 & 6 \\ 
4 & 0.09214933 & 6.93e-06 & 5.14 & 0.10954382 & 1.74e-02 & 0.09214752 & 5.12e-06 & 6 \\ 
5 & 0.09214327 & 8.66e-07 & 3.00 & 0.11281405 & 2.07e-02 & 0.09214339 & 9.81e-07 & 6 \\ 
6 & 0.09214264 & 2.30e-07 & 1.91 & 0.11461655 & 2.25e-02 & 0.09214264 & 2.35e-07 & 6 \\ 
7 & 0.09214243 & 2.36e-08 & 3.29 & 0.11556216 & 2.34e-02 & 0.09214243 & 2.01e-08 & 6 \\ 
8 & 0.09214241 & 9.53e-11 & 7.95 & 0.11604647 & 2.39e-02 & 0.09214241 & 2.15e-10 & 6 \\ 
[-1.5ex]
&&&&&&&&\\ \hline
\end{tabular}}
\caption{\small{Double barrier put option convergence, with $K=S_0=100$, $M=12$, $T=1.0$, barriers $[L,U]=[90,110]$.   KoBoL Test II parameters in \eq{eq:params2}.
%Comparison for two truncation levels defined by $L_1=4$ and $L_1=8$, for FFT vs SINH.
}}\label{DKO2}
\end{table}

    We also compare PROJ with the FFT-AA approach, which after extending the coefficients during initialization, is able to remove the aliasing effects, and has a similar convergence rate to SINH.  The column ``NX'' records the multiple of additional coefficients that are computed by FFT-AA compared with SINH during initialization, which is a \emph{6 fold increase} required to remove the aliasing effect. Hence, while the aliasing can be removed in this manner, it is computationally wasteful compared with SINH.   Table \ref{DKO2} performs a similar experiment, for a double barrier put options, with narrower boundaries $[L,U]=[90,110]$, and Test II parameters. Again we see the computational improvement offered by SINH in eliminating the truncation error, and reducing the computational cost.
 
  %%%%%%%%%%%%%%%%%%%%%%%%%%%%%%%%%
% \section{A short review of applications of the Fourier transform technique in finance,
% with an outline of possible applications of the main results of the paper}\label{s:other}
  %%%%%%%%%%%%%%%%%%%%%%%%%%%%%%%%%

  %%%%%%%%%%%%%%%%%%%%%%%%%%%%%%%%%
 \section{Conclusion}\label{s:concl}
  %%%%%%%%%%%%%%%%%%%%%%%%%%%%%%%%%
In the paper, we clarified relative advantages/disadvantages of different Fourier-based approaches to option pricing, 
and improved the B-spline density projection method using the sinh-acceleration technique to evaluate
the projection coefficients. This allows one  to completely separate the control of errors of the approximation of the transition operator at each time step as a convolution operator whose action is
realized using the fast discrete convolution algorithm, and errors of the calculation of the elements of the discretized pricing kernel. The latter elements are calculated one-by-one (the procedure admits an evident parallelization), without resorting to FFT technique. Instead, appropriate conformal deformations of the contour of integration in the formula for the elements, with the subsequent change of variables, are used to greatly increase the rate of convergence. The changes of variables are such that the new integrand is analytic in a strip around the line of integration, hence, the integral can be calculated with accuracy E-14 to E-15 using the simplified trapezoid rule with a moderate or even small number of points.
We explained the difficulties for accurate calculations of the coefficients using the FFT technique. Crucially, the use of the same pair of dual grids for all purposes, which is presented in many publications as a great advantage of the FFT technique, inevitably produces large errors unless extremely long grids are used. Furthermore, in some cases, even
extremely long grids may be insufficient. We illustrated the advantages of the sinh-acceleration in the application to calculation of the B-spline coefficients through a series of numerical applications.

We also explained that although the spectral filtering technique allows one to increase the speed of calculations,
spectral filters are designed to regularize the results. In applications to derivative pricing, the regularization
results in serious errors in regions of the paramount importance for risk management: near barrier and strike,
close to maturity and for long dated options, hence, ad-hoc choices of spectral filters can produce unacceptable errors. The approximations in the method of this paper also can be interpreted as spectral filters but the rigorous error bounds allow one to ensure the accuracy of the result, and the resulting error converges quickly (fourth order on average) as the approximation grid is refined.

As the first applications of the B-spline projection method with the improved procedure for the calculation of the coefficients, we considered the pricing of European as well as single and double barrier options with discrete monitoring, using backward induction in the state space. We list several sets of  error bounds used in the literature, and discussed their relative advantages/disadvantages. The numerical experiments show that the use of SINH acceleration eliminates a key source of error in the B-spline projection methodology, which originates from the FFT when computing basis coefficients. By instead using the SINH method, the results are considerably more robust to under-specifying the density truncation width, for which aliasing of the FFT causes a deterioration of accuracy due to the assume periodicity of the FFT.  While the aliasing effect can be removed with the FFT by increasing the grid width (sometimes substantially), doing so is computationally wasteful. The use of SINH acceleration allows the user to price options with a reduced truncation width at a commensurate reduction in computational effort, with added robustness to the choice of truncation width.  For illustrative purposes, we focused on the case of European and Barrier options, but the proposed methodology extends naturally to other cases for which the PROJ method has been applied, such as for problems in risk management and insurance \cite{WangZhang19,ZhangShi2020,ZhangYonYu2020,kirkby2021equity}.  The development of an improved procedure for basis coefficient calculation is beneficial across a variety of existing and future applications.

\newpage
%%%%%%%%%%%%%%%%%%%%%%%%%%%%%%%%%
\appendix
%%%%%%%%%%%%%%%%%%%%%%%%%%%%%%%%%

%%%%%%%%%%%%%
\section{Coefficient Formulas for Barrier Options}\label{sect:CoeffBarr}
%%%%%%%%%%%%%

%%%%%%%%%%%%%
\subsection{Down-and-Out Put Coefficients}
%%%%%%%%%%%%%
The value coefficients for the DOP contract, $\theta^{DOP}_{M,k}=\theta_{M,k}$, are determined in closed form for the terminal (initializing) period, using the formulas derived in  \cite{Ki14C}, to which we refer the reader for more details. Here we summarize the basic formulas, which can be combined with Algorithm \ref{GridBarrier} to price barrier options.
First recall that the terminal DOP payoff is given by
$
(K - S_0 e^y)\mathbbm{1}_{[ \ln(L/S_0), \ln(K/S_0)]}(y).
$
The nearest grid point left of $\ln(K/S_0)$ is given by 
\begin{equation}\label{eq: nbarput}
\bar n: = \lfloor(a\cdot(\ln(K/S_0) - x_1) +1 \rfloor, 
\end{equation}
and the difference and normalized difference are defined by
\begin{equation}\label{eq: rhozeta}
\rho: =\ln(K/S_0) - x_{\bar n}, \qquad \zeta: = a\cdot \rho.
\end{equation}
Let
\begin{equation}
b_3:= \sqrt{15}, \quad b_4:= \sqrt{15}/10,
\end{equation}
and
\begin{equation}\label{eq: varthetNeg10Bar}
\vartheta_{[-1,0]} \approx \frac{e^{-\Delta/2}}{18} \left( 5 \cosh(b_4 \Delta) +b_3\sinh(b_4\Delta) +4 \right),
\end{equation}
\begin{equation}\label{eq: varthet01Bar}
\vartheta_{[0,1]} \approx \frac{e^{\Delta/2}}{18} \left( 5 \cosh(b_4 \Delta) - b_3\sinh(b_4\Delta) +4 \right).
\end{equation}
Set 
\begin{equation}\label{eq: varthetStar}
\vartheta^*:= \vartheta_{[-1,0]} + \vartheta_{[0,1]}.
\end{equation}
Using the constants defined in Table \ref{ceoffadjust}, we have the formula
\begin{equation} \label{eq: coeffgnput}
\theta^{DOP}_{M,k} = \begin{dcases} \frac{K}{2}  - L\cdot \vartheta_{[0,1]}, & \mbox{ }k =1\\
K - e^{y_k}\cdot S_0\cdot (\vartheta_{[-1,0]} + \vartheta_{[0,1]}),& \mbox{ } k =2,...,\bar n - 1\\
K\left(\frac{1}{2} + \bar \delta_0^{put} - e^{-\rho}\left( \vartheta_{[-1,0]} +\delta_0^{put} \right)\right), &\mbox{ } k = \bar n \\
K\left(\bar \delta_1^{put} - e^{\Delta-\rho}\cdot\delta_1^{put}\right),& \mbox{ } N_y=\bar n +1 \\
0, & \mbox{ }k = \bar n+2,...,N_y. \end{dcases}
\end{equation}
%%%%%%%%%%%%%
\subsection{Down-and-Out Call Coefficients}
%%%%%%%%%%%%%
We next describe the formulas for DOC payoff coefficients. Define $\bar n, \rho, \zeta$ as in equations \eqref{eq: rhozeta} and \eqref{eq: nbarput}, and further
\begin{equation}
\sigma:=1- \zeta, \qquad \sigma_{\pm}:=\sigma(q_{\pm} - 1/2)= \pm \frac{\sigma}{2}\sqrt{3/5},
\end{equation}
\begin{equation}\label{eq: qplusminus}
q_{-} : = \frac{1}{2}\left(1-\sqrt{3/5}\right), \qquad q_{+} : = \frac{1}{2}\left(1+\sqrt{3/5}\right).
\end{equation}
 Coefficients corresponding to the points $x_{\bar n}$ and $x_{\bar n +1}$ are affected by the grid misalignment (i.e., not all payoff singularities/kinks align with grid points), so we introduce the following integrals:
\begin{equation}
\bar \delta_0^{call} : = \int_{1-\sigma}^1\varphi(y)dy, \qquad \delta_0^{call} : = \int_{1-\sigma}^1 \varphi(y) e^{\frac{y}{a}}dy
\end{equation}
\begin{equation}
\bar \delta_1^{call} : = \int_{-\sigma}^{0} \varphi(y)dy, \qquad \delta_1^{call} : = \int_{-\sigma}^{0} \varphi(y) e^{\frac{y}{a}}dy.
\end{equation}
Taking $\delta^{call}_0$ as an example, we apply the Gaussian quadrature approximation
\begin{align*}
\delta^{call}_0 &\approx \frac{\sigma}{18}\left [5\cdot\left( (\sigma/2 - \sigma_-)e^{\Delta \sigma_-} + (\sigma/2 - \sigma_+)e^{\Delta\sigma_+}\right) + 4 \sigma  \right]\\
&=e^{\frac{\rho +\Delta}{2}}\frac{\sigma^2}{18}\left [5\cdot\left( (1- q_-)e^{\Delta \sigma_-} + (1-q_+)e^{\Delta\sigma_+}\right) + 4\right],
\end{align*}
and similarly for $\delta^{call}_1$.  The coefficients for $\bar \delta_0^{call},\bar \delta_1^{call}$ are derived exactly.  Table \ref{ceoffadjust} collects the resulting approximation formulas.  The DOC coefficients are given by
\begin{equation} \label{eq: coeffgncall}
\theta^{DOC}_{M,k} = \begin{dcases} 0 & \mbox{ }k =1,\ldots, \bar n-1\\
K\left(\delta_0^{call} e^{-\rho} - \bar \delta_0^{call}\right)& \mbox{ } k = \bar n\\
K\left(e^{\Delta - \rho}\left(\vartheta_{[0,1]} + \delta_1^{call}\right) - \left( 1/2 + \bar \delta_1^{call}\right)\right) &\mbox{ } k = \bar n +1 \\
S_0\exp(y_k)\cdot\vartheta_* - K& \mbox{ } k=\bar n +2,..., N_y\end{dcases}
\end{equation}

\subsection{Up and Out Contracts}
For UOC options, with the exception of $\theta^{UOC}_{M,N_y}$, the coefficients are identical to equation \eqref{eq: coeffgncall}:
\begin{equation}\label{eq: coeffbarrUOC}
\theta^{UOC}_{M,k} : = \begin{dcases}
\theta^{DOC}_{M,k} & \mbox{ } k\leq N_y-1 \\
U\cdot  \vartheta_{[-1,0]}- W/2& \mbox{ } k= N_y.
\end{dcases}
\end{equation}
Similarly, for UOP options
\begin{equation} \label{eq: coeffbarrUOP}
\theta^{UOP}_{M,k} = \begin{dcases} K - \exp(y_k) S_0 \cdot \vartheta_* & \mbox{ }k =1\\
 \theta^{DOP}_{M,k}& \mbox{ }k = 2,...,N_y \end{dcases}
\end{equation}
where $\theta^{DOP}_{M,k}$ are defined in equation \eqref{eq: coeffgnput}.

\begin{table}
\centering
\scalebox{.85}{
\begin{tabular}{c|c|c}
\hline
\multicolumn{3}{c}{}  \\[-1.5ex]
 \multicolumn{3}{c}{Gaussian Quadrature Adjustment for Payoff Coefficients}\\
\hline
& &  \\[-1.5ex]
Puts & $\bar \delta_j^{put}$ & $\delta_j^{put}$\\
& &  \\[-2ex]\hline
& & \\
$j = 0$& $ \zeta \left( 1  - \frac{1}{2}\zeta\right)$&$ \frac{\zeta}{18}\left[4(2-\zeta) e^{\rho/2} + 5 \cdot \left((1 - \zeta_{-})e^{\rho_{-}} + (1 - \zeta_{+})e^{\rho_{+}} \right)  \right]$ \\
& &  \\
$j = 1$ & $ \frac{\zeta^2}{2} $  &  $ \frac{\zeta}{18} \cdot e^{-\Delta}\cdot \left[ 4 \zeta \cdot e^{\rho/2} + 5\left(\zeta_{-}e^{\rho_{-}} +\zeta_{+}e^{\rho_{+}} \right)\right] $\\
& &  \\ \hline
& &  \\[-1.5ex]
Calls & $\bar \delta_j^{call}$ & $\delta_j^{call}$\\
& &  \\[-2ex]\hline
& & \\
$j = 0$& $\frac{1}{2} + \zeta\left(\frac{\zeta}{2} - 1\right) $&$ e^{(\rho+\Delta)/2}\frac{\sigma^2}{18}\left[5\cdot\left( (1- q_-)e^{\Delta \sigma_-} + (1-q_+)e^{\Delta\sigma_+}\right) + 4\right]$ \\
& &  \\
$j = 1$ & $\sigma - \frac{1}{2}\sigma^2 $  &  $e^{(\rho-\Delta)/2} \frac{\sigma}{18}\cdot \left[ 4(\zeta +1) + 5 \left(\left(\frac{\zeta+1}{2} + \sigma_- \right)e^{\Delta\sigma_-} + \left( \frac{\zeta+1}{2} + \sigma_+ \right)e^{\Delta\sigma_+} \right)\right] $\\
& &  \\ 
\hline 
\end{tabular}}
\caption{Coefficients derived from a three point Gaussian quadrature.  }\label{ceoffadjust}
\end{table}
%%%%%%%%%%%%%%%%%%%%%
\begin{algorithm}[h!t!b!]
\small
\caption{Initialization by Automated Parameter Selection}\label{GridBarrier}
\begin{algorithmic}[1]
\State Set $L_1 \gets 10;\quad$  Initialize $N$ as in Remark \ref{rem: InitialN}%$h \gets u$ for up-and-out, $h \gets l$ for down-and-out
\State $\epsilon_1 \gets$5e-08; $\quad \epsilon_2 \gets$1e-05; $\quad \tau \gets 1.1$
\If{Down-and-out:}
\State $\alpha \gets \max\Big\{\tau \cdot (\max\{0,\ln(W/S_0)\} - l), \text{ } L_1\sqrt{c_2 T + \sqrt{c_4 T}}\Big\}$
\State $x^* \gets l + \alpha;\quad \Delta_\xi \gets \pi(N-1)/(\alpha N)$
%%%%%
\While{$|1 -F_{\Delta_\xi,N}(x^*)|\cdot \max\{S_0,W\}> \epsilon_1$}
\State $\alpha\gets \tau \alpha; \quad N\gets 2 N;\quad x^* \gets l + \alpha;\quad \Delta_\xi \gets \pi(N-1)/(\alpha N)$
\EndWhile
\State $\mathcal E \gets 2\mathcal \epsilon_2; \quad N\gets N/2$
\While{$\mathcal E > \epsilon_2$}
%\State $N\gets 2 N;\quad \Delta_\xi \gets \pi(N-1)/(\alpha N)$
%\State $\Delta \gets 2\alpha/(N-1);\quad  x_1 \gets l; \quad a \gets 1/\Delta$
\State $N\gets 2 N;\quad \Delta \gets 2\alpha/(N-1); \quad  n_0 \gets \lfloor 1 - l/ \Delta\rfloor$
\State \textbf{if} $\Delta \leq l,\text{ }$ \textbf{then} $\Delta \gets l/(1-n_0)$ \textbf{end if}
\State Use SINH to determine $\bm{\bar\beta}$.
\State $\epsilon_2 \gets |E_N - \exp((r-q)\bar \Delta)|\cdot M\quad$ ($E_N$ defined in equation \eqref{eq: EsubNBar})
\EndWhile
%%%%
\Else { Up-and-out:}
\State $\alpha \gets \max\Big\{\tau \cdot (u - \min\{0,\ln(W/S_0)\}), \text{ } L_1\sqrt{c_2 T + \sqrt{c_4 T}}\Big\}$
\State $x^* \gets u - \alpha;\quad \Delta_\xi \gets \pi(N-1)/(\alpha N)$
\While{$|F_{\Delta_\xi,N}(x^*)|\cdot \max\{S_0,W\}> \epsilon_1$}
\State $\alpha\gets \tau \alpha; \quad N\gets 2 N;\quad x^* \gets u - \alpha;\quad \Delta_\xi \gets \pi(N-1)/(\alpha N)$
\EndWhile
\State $\mathcal E \gets 2\mathcal \epsilon_2; \quad N\gets N/2$
\While{$\mathcal E > \epsilon_2$}
\State $N\gets 2N;\quad \Delta \gets 2\alpha/(N-1);\quad n_0\gets\lfloor N/2-u/ \Delta \rfloor$
\State \textbf{if} $\Delta\leq u,\text{ }$ \textbf{then} $\Delta \gets u/(N/2-n_0)$ \textbf{end if}
\State Use SINH to determine $\bm{\bar\beta}$.
\State $\epsilon_2 \gets |E_N - \exp((r-q)\bar \Delta)|\cdot M\quad$ ($E_N$ defined in equation \eqref{eq: EsubNBar})
\EndWhile
\EndIf
\State\Return: $\bm{\bar\beta},\text{ } N, \text{ } \Delta, \text{ } \alpha, \text{ } n_0$
\end{algorithmic}
\normalsize
\end{algorithm}
%%%%%%%%%%%%%%%%%%%%

\section{Adaptive algorithm for Numerical Parameter Selection}\label{sect:AdaptiveErr}
There are two main parameters to choose when applying the PROJ algorithm.  The first is $\alpha$, which specifies the truncated density support over $[-\alpha, \alpha]$.  
The location of the first grid point $x_1=-\alpha$.  In general, $\alpha$ should be chosen to capture the mass of the transition density within some tolerance. Note that for double barrier options, $\alpha$ is determined by the barriers.
The second parameter is $N_x$, which is the number of basis elements. Assuming that $\alpha$ is sufficiently large to control the truncation error, the convergence of the approximation to the price is determined by $N_x$.  

Here we recollect the adaptive parameter selection algorithm proposed in \cite{Ki14C} for pricing single barrier options.\footnote{Double barrier options are similar, but do not require the determination of $\alpha$, so only $N_x$ is updated in the procedure.}
The starting point is the selection of $\alpha$.  With $L_1 = 10$, and fixing an initial $\widetilde N$ (see Remark \ref{rem: InitialN}), we initialize $\alpha$ and $\Delta$ as $\widetilde \alpha$ and $\widetilde \Delta$, based on the contract maturity $T$:
\begin{equation}\label{eq: alphaequationbar}
\widetilde\alpha = L_1 \sqrt{T\cdot c_2 + \sqrt{T\cdot c_4}}, \qquad \widetilde \Delta = 2 \widetilde\alpha/ (\widetilde N-1)
\end{equation}
where $c_2,c_4$ are the second and fourth cumulants of $\ln(S_{t+1}/S_t)$ (see \cite{FaOo08}).  Once we have an initial choice of $\widetilde\alpha$, we update $\widetilde\alpha$ until it captures the mass of the density within a prescribed tolerance $\epsilon_1>0$, as follows.
 From \cite{FeLi08} we can represent the cumulative distribution of log return $Y_T = \ln(S_T/S_0)$ by
\begin{align}
F(x) &= \int_{-\infty}^x p_{T}(y)dy = \int_\mathbb R p_{T}(y)\mathbbm{1}_{(-\infty,x)}dy \nonumber\\
& = \mathcal F(\mathbbm{1}_{(-\infty,x)}\cdot p_{T})(0) = \frac{1}{2} - \frac{i}{2}\mathcal H(e^{-i\xi x}\phi_{T}(\xi))(0), \label{eq:HILbert}
\end{align}
where $\mathcal H$ denotes the Hilbert transform
\[
\mathcal Hf(z) = \frac{1}{\pi}\text{PV}\int_\mathbb R \frac{f(y)}{z-y}dy.
\]
Applied to equation \eqref{eq:HILbert}, we have the discrete approximation
\begin{equation}\label{eq:ProbEstSingleBar}
F(x) \approx F_{\Delta_{\xi, N}}(x):=\frac{1}{2} + \frac{i}{2} \sum_{n = -(N-1)}^{N-1} \phi_T((n- \tfrac{1}{2})\Delta_\xi) \frac{\exp(-ix(n-\tfrac{1}{2})\Delta_\xi)}{(n - \tfrac{1}{2})\pi}.
\end{equation}
We then estimate the probability mass of $\ln(S_T/S_0)$ which is neglected by our grid choice, and increase the grid-width by a fixed multiplier $\tau>1$, until a probability tolerance $\epsilon_1 > 0$ is met. Moreover, since a grid for $\ln(S_T/S_0)$  is much larger than required for any individual log return over an increment $\bar \De$, the truncation error for intermediate valuations is negligible. 

For a down-and-out option, if the right tail estimate $|1 -F_{\Delta_\xi,N}(x_{N/2})|\cdot \max\{S_0,W\}> \epsilon_1$, we double the grid size $N$, set $\alpha \gets \tau \alpha$,  and reestimate. For an up-and-out option, we similarly expand the grid as long as $|F_{\Delta_\xi,N}(x_{1})|\cdot \max\{S_0,W\}> \epsilon_1$, which estimates the error in the left tail. Note that in either case, we are only concerned with the probability within the continuation region.

In addition to controlling truncation error we utilize the martingale property of $e^{-(r-q)t}S_{t}$, namely $E[S_{t + \bar \Delta}|S_{t_m}] = S_{t_m}\exp((r-q)\bar \Delta)$, to obtain a proxy for integration error incurred at each step:
\begin{align}\label{eq: EsubNBar}
E_N &:= \Upsilon_{a,N}\cdot \vartheta_*\cdot \sum_{n=1}^{N} \bar \beta_n \exp(x_1 +(n-1)\Delta) 
\end{align}
where $\vartheta_*$ is defined in \eqref{eq: varthetStar}. Here \eqref{eq: EsubNBar}
 approximates the expectation of log-return $\mathbb E[\exp(Y_{\bar \Delta})] = \exp((r-q)\bar \Delta)$ using the projected density (calculated from $\phi_{\bar \Delta}$).  In particular, the error in estimating $\bar \beta_n$, and our ability to accurately calculate intermediate value integrals, is reflected in equation \eqref{eq: EsubNBar}. Hence, once the probability tolerance $\epsilon_1$ is satisfied, we further double the grid size as long as  $|E_N - \exp((r-q)\bar \Delta)|\cdot M > \epsilon_2,$ where the multiplier $M$ accounts for the number of integral approximations made during the algorithm. 

\begin{rem}\label{rem: InitialN}
An initial starting value  $N_x$ is required, which is initialized depending on $\bar \Delta = T/M$. For $\bar \Delta =1/100$ (which includes daily monitoring), we initialize $N_x \gets 2^{10}$. Else if $\bar \Delta\leq 1/40$, we initialize $N_x\gets 2^9$. Else, we initialize $N_x\gets 2^8$.
\end{rem}
 In Algorithm \ref{GridBarrier}, we set $\tau \gets 1.1$, so at each stage the probability threshold is not satisfied, we increase the grid-width by ten percent. We set the tolerance thresholds $\epsilon_1 \gets$5e-08 and $\epsilon_2\gets$1e-05 to conservatively achieve overall valuation error tolerance goal of TOL = 5e-04, which is sufficient for practical purposes. In general, a maximum value of $N$, for example $N=2^{17}$, should be enforced as a stopping criteria for automated parameter selection.

\section{Bounds for the interpolation and truncation errors}\label{err_Marco_Me}
In this section, we list the error bounds and recommendations for the choice of the truncated interval and the discretization step derived in \cite{DeLe14}. Let $\eps>0$ be the error tolerance,
$X$ the L\'evy process with the characteristic exponent $\psi$ admitting analytic continuation to a strip $\{\Im\xi\in (\lm,\lp)\}$. 
\subsection{Truncation error} In the case of the double barrier options, no truncation is needed.
Consider the down-and-out put option with the log-barrier $h=x_1$, strike $K$ and maturity $T$. In \cite{DeLe14}
(Lemma 3.1 and Sect. 3.3.1), it is proved that, at the log-spot $x>h$, we minimize 
\bbe\label{choicexM}
x_M=\frac{-(r+\psi(i\omp))T + \omp' \ln K - \omm x + T(\psi(i\omp)-\psi(i\omm))_+
- \ln(\eps/K-e^h)}{\omp-\omm}
\ee
over $\omm\in (\lm,0)$, $\omp \in (0,\lp)$, e.g., by using an optimization method such as
conjugate gradient or Nelder-Mead. In the case of the down-and-out call option, one
can make the Esscher transform thereby reducing to pricing an option with a payoff uniformly bounded by $K$.
Trivially modifying the proof of Lemma 3.1 in \cite{DeLe14}, one derives the following analog of 
\eq{choicexM}, for down-and-out options  with the payoffs bounded by $K$:
\bbe\label{choicexM2}
x_M=\frac{-(r+\psi(i\omp))T  - \omm x + T(\psi(i\omp)-\psi(i\omm))_+
- \ln(\eps/K)}{\omp-\omm}.
\ee
The case of the up-and-down call options is reducible to the case of the down-and-out put options by the change of measure and symmetry, and the case of up-and-out options with bounded payoffs to the case of down-and-out options by symmetry.

\subsection{Interpolation error (discretization error in the state space)}\label{s:interperror}
Section 3.2 in \cite{DeLe14} contains the bounds and recommendations for options with uniformly bounded payoffs.
To apply the results for call options, one has to  change the measure and reduce to the case of the option with
the bounded payoff.  Lemma 3.2 in \cite{DeLe14} gives an error bound of the quadratic interpolation; 
for the cubic interpolation, see \cite{LeXi12}. 
\begin{lem}\label{l:interpErrMP}
 Let $X$ be a L\'evy process whose transition density $p_{\barDe}$ satisfies $p_{\barDe}\in C^{3}(\bR)$ and
 $p^{(s)}_{\barDe}\in L_1(\bR), s=1,2,3$, then the total error in the time-0 option price due to (piece-wise quadratic) interpolation
 admits the approximate bound via
\[
 \frac{e^{-rT}}{6} \De^{3} (N-1)(K-H) \|p_{\barDe}^{'''}\|_{L_1}.
 \]
 \end{lem}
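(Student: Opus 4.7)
The plan is to exploit the observation that, after one convolution with the smooth density $p_{\barDe}$, the value function already inherits three integrable derivatives from $p_{\barDe}$; this allows the classical piece-wise quadratic interpolation error bound to be applied with the factor $\|V_n'''\|_\infty$ replaced by $(K-H)\|p_{\barDe}'''\|_{L_1}$, after which the $N-1$ per-step contributions may be aggregated.

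First I would note that the terminal coefficients $\theta_{M,k}$ are computed analytically from the payoff (see Appendix \ref{sect:CoeffBarr}), so no interpolation error enters the first backward step $V_N\mapsto V_{N-1}$; piece-wise quadratic interpolation is applied only to $V_{N-1},V_{N-2},\ldots,V_1$, which gives $N-1$ interpolation steps. For every such $n\le N-1$, differentiating
\[
V_n(y)=e^{-r\barDe}\int_\bR V_{n+1}(z)\,p_{\barDe}(z-y)\,dz
\]
three times under the integral sign yields
\[
\|V_n'''\|_\infty \;\le\; e^{-r\barDe}\,\|V_{n+1}\|_\infty\,\|p_{\barDe}'''\|_{L_1} \;\le\; (K-H)\,\|p_{\barDe}'''\|_{L_1},
\]
where the second inequality uses the maximum-principle bound $\|V_n\|_\infty\le K-H$ (valid after the Esscher reduction in the down-and-out call case, since $(K-e^y)_+\le K-H$ on $\{y>h\}$).

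Next, the classical piece-wise quadratic interpolation error formula on each three-point stencil $[y_{k-1},y_{k+1}]$ gives
\[
V_n(y)-I_\De V_n(y) \;=\; \frac{V_n'''(\eta)}{6}\,(y-y_{k-1})(y-y_k)(y-y_{k+1}),
\]
which together with the crude bound $|(y-y_{k-1})(y-y_k)(y-y_{k+1})|\le \De^3$ yields $\|V_n-I_\De V_n\|_\infty\le\tfrac{\De^3}{6}\|V_n'''\|_\infty$. Combining with the previous step, the per-step $L_\infty$ error is bounded by $\tfrac{\De^3}{6}(K-H)\|p_{\barDe}'''\|_{L_1}$. Since the backward step $e^{-r\barDe}P_{\barDe}$ is an $L_\infty$-contraction (convolution against a probability density followed by a discount $\le 1$), per-step errors propagate without amplification, and summing over the $N-1$ interpolation steps — with the accumulated discount factors heuristically absorbed into the single factor $e^{-rT}$ — produces the claimed approximate bound.

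The main obstacle, which is the reason the bound is stated only as \emph{approximate}, is the tracking of constants and discount factors: the sharp bound on $|(y-y_{k-1})(y-y_k)(y-y_{k+1})|$ is $\tfrac{2}{3\sqrt{3}}\De^3$ rather than $\De^3$; piece-wise quadratic interpolation carries a finite Lebesgue constant $>1$, so the propagation of the injected errors is not strictly contractive; and the individual discount factors accumulated along the backward recursion range from $e^{-2r\barDe}$ up to $e^{-rT}$ rather than all equaling $e^{-rT}$. The hypotheses $p_{\barDe}\in C^3$ and $p_{\barDe}^{(s)}\in L_1$ for $s=1,2,3$ are used precisely to differentiate under the integral sign in the first step of the argument; without them, $V_n$ would fail to be $C^3$ and the pointwise interpolation estimate would be unavailable, forcing the more delicate duality argument (transferring all three derivatives from $V_n$ onto $p_{\barDe}$) discussed in \cite{DeLe14}.
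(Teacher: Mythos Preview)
The paper does not prove this lemma; it merely restates Lemma~3.2 of \cite{DeLe14} in an appendix that collects error bounds from that reference, so there is no in-paper proof to compare against. Your sketch is essentially the argument behind the cited result: bound $\|V_n'''\|_\infty$ by differentiating the convolution $V_n=e^{-r\barDe}p_{\barDe}*V_{n+1}$ three times and using $\|V_{n+1}\|_\infty\le K-H$, apply the cubic remainder estimate for piece-wise quadratic (Lagrange) interpolation to get a per-step $L_\infty$ error of order $\tfrac{\De^3}{6}\|V_n'''\|_\infty$, and then propagate through the $L_\infty$-contractive backward map over the $N-1$ steps at which interpolation is actually performed.

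You have also correctly flagged the reasons the bound is only ``approximate'': the crude $\De^3$ in place of the sharp $\tfrac{2}{3\sqrt{3}}\De^3$ for the node polynomial, the Lebesgue constant of the interpolation operator (so the composite map is not literally an $L_\infty$-contraction), and the replacement of the staggered discount factors $e^{-r(N-n)\barDe}$ by a single $e^{-rT}$. One minor point worth tightening if you ever write this out in full: after the first interpolation, subsequent $V_n$ are computed from the \emph{interpolated} $\tilde V_{n+1}$, not the exact $V_{n+1}$, so the maximum-principle bound $\|\tilde V_{n+1}\|_\infty\le K-H$ needs a word of justification (it holds up to the accumulated interpolation error, which is of the same order you are controlling and is absorbed into the ``approximate'' nature of the estimate).
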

 To apply Lemma \ref{l:interpErrMP}, one needs the following result (Proposition 3.3 in \cite{DeLe14}).
 \begin{lem}\label{l:discerrDS}
Let $X$ be a model process of order $\nu>0$. Assume that $\nu$ and $\barDe$ are not both small, and the
mesh $\De$ is chosen so that the interpolation error of piece-wise quadratic interpolation is small.
Then the following approximate bound holds
 \begin{equation}\label{e:pl1norm}
\|p^{(n)}_{\barDe}\|_{L_1} \le \rho_n,
\end{equation}
where
\begin{equation}\label{defrhon}
\rho_n=\frac {2 \Ga(n/\nu)}{(d_+^0)^{n/\nu} \pi \nu D(n)} \barDe^{-n/\nu},
\end{equation}
and
\begin{equation}\label{defDn}
D(n)=\sup_{\phi\in\left(0,\min\{\frac{\pi}{2},\frac{\pi}{2\nu}\}\right)} (\cos(\phi\nu))^{n/\nu} \cos(\phi-\pi/2).
\end{equation}
\end{lem}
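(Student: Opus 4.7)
The strategy is to represent $p^{(n)}_{\barDe}$ by Fourier inversion along a contour deformed into the cone of analyticity of $\psi^0$, bound the integrand by its modulus, and then integrate over $x$; the remaining one-dimensional integral is a Gamma-type integral which I would evaluate in closed form, leaving a single scalar function of the deformation angle $\phi$ to be optimized.

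First I would differentiate \eq{pt} under the integral sign to obtain $p^{(n)}_{\barDe}(x)=(2\pi)^{-1}\int_\bR (i\xi)^n e^{-ix\xi-\barDe\psi(\xi)}d\xi$, absorb the linear-in-$\xi$ drift term of $\psi$ by a shift of $x$ (which leaves $L_1$ norms invariant), and, for $x<0$ (so that $e^{-ix\xi}$ decays in $\Im\xi>0$), deform $\bR$ to the contour $\Gamma_\phi$ consisting of the two rays $\xi=re^{i\phi}$ and $\xi=re^{i(\pi-\phi)}$, $r\ge 0$, with $\phi\in(0,\min\{\pi/2,\pi/(2\nu)\})$. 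Analyticity of $\psi^0$ in the cone $\cC$ and super-polynomial decay of $e^{-\barDe\psi^0(\xi)}$ on arcs at infinity, both guaranteed by \eq{psi0}--\eq{asymppsipos}, justify the deformation. On $\Gamma_\phi$ one has $|e^{-ix\xi}|=e^{-|x|r\sin\phi}$, while the leading asymptotic in \eq{asymppsi} together with \eq{asymppsipos} gives $\Re\psi^0(\xi)\gtrsim d_+^0\cos(\phi\nu)r^\nu$. Using this as an approximate uniform bound yields
\[
|p^{(n)}_{\barDe}(x)|\le \pi^{-1}\int_0^\infty r^n e^{-|x|r\sin\phi-\barDe d_+^0\cos(\phi\nu)r^\nu}\,dr,
\]
valid for all $x\in\bR$ after the mirror deformation into the lower half-plane for $x>0$.

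Next I would integrate in $x\in\bR$, exchange order by Fubini, and use $\int_{\bR} e^{-|x|r\sin\phi}dx=2(r\sin\phi)^{-1}$ to pick up an extra factor $r^{-1}$ and a factor $(\sin\phi)^{-1}=\cos(\phi-\pi/2)^{-1}$. The substitution $t=\barDe d_+^0\cos(\phi\nu)r^\nu$ then reduces the remaining integral to $\Ga(n/\nu)/\bigl(\nu(\barDe d_+^0\cos(\phi\nu))^{n/\nu}\bigr)$, so that
\[
\|p^{(n)}_{\barDe}\|_{L_1}\le \frac{2\Ga(n/\nu)}{\pi\nu(d_+^0)^{n/\nu}\barDe^{n/\nu}}\cdot\frac{1}{\cos(\phi-\pi/2)(\cos(\phi\nu))^{n/\nu}}.
\]
Taking the infimum of the right side over admissible $\phi$---equivalently, dividing by $D(n)$ of \eq{defDn}---produces $\rho_n$ of \eq{defrhon}.

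The main obstacle is precisely the step where the \emph{leading-order} asymptotic of $\Re\psi^0$ is used as a \emph{global} lower bound on $\Gamma_\phi$: near $\xi=0$ there are subleading contributions (including, in the asymmetric case or when the L\'evy density carries lower-order terms, sign-indefinite corrections of order $|\xi|^{\nu'}$ with $\nu'<\nu$) that prevent a clean pointwise inequality and force the bound to be only ``approximate''. This is exactly what the hypothesis that $\nu$ and $\barDe$ are not both small is designed to neutralize: when $\nu\barDe d_+^0$ is not too small, the weight $e^{-\barDe d_+^0\cos(\phi\nu)r^\nu}$ concentrates at moderate $r$ where the leading asymptotic dominates, and the $r\sim 0$ region contributes negligibly to the Gamma integral. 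A secondary nuisance is the asymmetric case $c_\infty(0^+)\ne c_\infty(\pi^-)$, which is absorbed by defining $d_+^0$ as the smaller of the two (positive) leading real parts; this leaves the structure of the argument intact.
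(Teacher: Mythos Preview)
The paper does not actually prove this lemma; it merely quotes it as Proposition~3.3 of \cite{DeLe14} and uses it to derive the mesh recommendation \eqref{qDe}. Your sketch---Fourier inversion for $p^{(n)}_{\barDe}$, deformation of the real line to the two rays at angle $\pm\phi$ in the cone of analyticity, the $L_1$-in-$x$ integration producing the factor $2/(r\sin\phi)$, the Gamma substitution, and the final optimization over $\phi$---is precisely the argument one expects in \cite{DeLe14}, and it correctly reproduces the structure of $\rho_n$ and $D(n)$. Your identification of why the bound is only ``approximate'' (the leading asymptotic of $\Re\psi^0$ is used as a global lower bound on the rays, which fails near the origin and is compensated by the assumption that $\nu$ and $\barDe$ are not both small) is also on point.
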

Applying Lemmas \ref{l:interpErrMP}-\ref{l:discerrDS}, Section 3.2.1 in \cite{DeLe14} gives
the following prescription: 
 \begin{equation}\label{qDe}
\De = \left( \frac{e^{-rT}}{6\epsilon}(N-1)(K-H) \rho_{3}\right)^{-\frac{1}{3}}.
\end{equation}

\section{Auxiliary results}
\subsection{Programming convolution in
MATLAB}\label{sss:convolution-MATLAB}
We borrow this summary from \cite{single}.
\begin{itemize}
\item Inputs: two arrays, $\vec f=(f_j)_{j=1}^M$ and $\vec
g=(g_\ell)_{\ell=1-M}^{M-1}$ (in a program, the indices $\ell$
have to be shifted up).
\item Output: the array $\vec h=(h_k)_{k=1}^M$, whose entries are
given by
%\begin{equation}\label{e:fast-conv}
$h_k = \sum_{j=1}^M f_j g_{k-j}.$
%\end{equation}
\item The calculation:
\begin{eqnarray*}
&& \widetilde{g} = [ \vec g(M:2*M-1) \quad 0 \quad \vec g(1:M-1) ]; \\
&& \widetilde{f} = [ \vec f \quad \mathtt{zeros}(1,M) ]; \\
&& \widetilde{h} = \mathrm{ifft}\left(\mathrm{fft}\widetilde{f}).*\mathrm{fft}(\widetilde{g})\right); \\
&& \vec h = \widetilde{h}(1:M).\end{eqnarray*}
\end{itemize}
For agreement with MATLAB's conventions, we assume that the
array $\vec g$ is also indexed starting with $1$. Thus the first
entry of $\vec g$ is $g_{1-M}$, and the $(2M-1)$-st entry is
$g_{M-1}$.

MATLAB has a built-in \texttt{conv} function, which computes
a certain version of the discrete convolution of two arrays.
However, the realization of this function in MATLAB is also based on
FFT, and typically, it is more convenient to use the algorithm
presented above instead of \texttt{conv}.
% 
%\footnotesize
%\bibliography{BSINHbib}
%\bibliographystyle{plain}
 %\bibliographystyle{apalike} 

\end{document}